\theoremstyle{definition}
\newcommand{\verbatimfont}[1]{\renewcommand{\verbatim@font}{\ttfamily#1}}
\newcommand{\Eref}[1]{(\ref{#1})}
\newcommand{\Fref}[1]{Fig.~\ref{#1}}
\newcommand{\Rank}{\mathrm{rank}}
\newcommand{\Center}{\mathcal{Z}}
\newcommand{\GL}{\mathrm{GL}}
\newcommand{\Sp}{\mathrm{Sp}}
\newcommand{\Mp}{\mathrm{Mp}}
\newcommand{\ASp}{\mathrm{ASp}}
\newcommand{\Unitary}{\mathrm{U}}
\newcommand{\Cliff}{\mathrm{Cliff}}
\newcommand{\SCliff}{\mathrm{SemiCliff}}
\newcommand{\Pauli}{\mathrm{Pauli}}
\newcommand{\Integer}{\mathbb{Z}}
\newcommand{\ZZ}{\mathbb{Z}}
\newcommand{\QQ}{\mathbb Q}
\newcommand{\Field}{\mathbb{F}}
\newcommand{\tensor}{\otimes}
\newcounter{numitem}
\newcommand{\ket}[1]{|{#1}\rangle}
\newcommand{\bra}[1]{\langle{#1}|}
\newcommand{\nsp}{\!\!}
\renewcommand{\arraystretch}{1.2}
\newtheorem{lemma}{Lemma}[section]
\newtheorem{theorem}[lemma]{Theorem}
\newtheorem{claim}[lemma]{Claim}
\newcommand{\todo}[1]{ }
\newcommand{\igr}[1]{\includegraphics{images/#1.pdf}}
\newcommand{\igs}[1]{\includegraphics[scale=0.7]{images/#1.pdf}}
\newcommand{\igc}[1]{\raisebox{-0.45\height}{\includegraphics{images/#1.pdf}}}
\newcommand{\igd}[2]{\raisebox{#1\height}{\includegraphics{images/#2.pdf}}}
\newcommand{\igcs}[1]{\raisebox{-0.45\height}{\includegraphics[scale=0.7]{images/#1.pdf}}}
\newcommand{\hsp}{\ \ \ \ }
\newcommand{\Hsp}{\ \ \ \ \ \ \ \ }
\newcommand{\Double}{\mathfrak{D}}
\newcommand{\Parity}{\mathcal{H}}
\newcommand{\Symp}{\Field_2^n\oplus\Field_2^n}
\newcommand{\bS}{\bar{S}}
\newcommand{\bH}{\bar{H}}
\newcommand{\bX}{\bar{X}}
\newcommand{\bZ}{\bar{Z}}
\begin{document}

\title{Genons, Double Covers and Fault-tolerant Clifford Gates}

\author{Simon Burton, Elijah Durso-Sabina, Natalie C. Brown
}
\affil{\small Quantinuum}

\maketitle

\begin{abstract}
A great deal of work has been done developing quantum codes with varying overhead and connectivity constraints. 
However, given the such an abundance of codes,
there is a surprising shortage of fault-tolerant logical gates supported therein.
We define a construction, 
such that given an input $[[n,k,d]]$ code,
yields a $[[2n,2k,\ge d]]$ \emph{symplectic double} code
with naturally occurring fault-tolerant logical Clifford gates.
As applied to 2-dimensional
$D(\Integer_2)$-topological
codes with \emph{genons} (twists)
and domain walls, we find the symplectic double is genon free,
and of possibly higher genus. Braiding of genons on the original
code becomes Dehn twists on the symplectic double.
Such topological operations are particularly suited for
architectures with all-to-all connectivity, and 
we demonstrate this experimentally on Quantinuum's H1-1
trapped-ion quantum computer.
\end{abstract}

\tableofcontents

\section{Introduction}\label{sec:intro} 

Given that you have protected your fragile quantum
information from the world, how do you then gain
access to and thereby manipulate this quantum information?
This is the fundamental trade-off in the theory
of quantum error correction.
While the primary goal is to shield quantum data from
errors, the challenge lies in devising methods to interact
with and utilize this protected information for computational
tasks. Achieving this balance between protection and
accessibility is essential for realizing the full potential
of quantum error correction in practical applications.

One of the best techniques for circumventing this problem 
is to apply quantum gates 
\emph{transversally} between copies of the same
quantum code, see \cite{Gottesman1997} \S 5.3.
We view this copying process as analogous to 
the concept of \emph{covering spaces}.
The \emph{base code} $C$ is \emph{covered by} the \emph{total code} $C\oplus C$. 
(Here we are using additive $\oplus$ notation.)
Every qubit $j$ in $C$ is \emph{covered by} two qubits in $C\oplus C$.
These two qubits are called the \emph{fiber over} $j$.
Transversal gates 
operate separately on each fiber in the cover. 
We call these gates \emph{fiber transversal}. See \Fref{fig:double-cover}.

Making copies like this we only get trivial
covering spaces; a cartesian
product of a particular fiber $\mathbf{2}$ with the code $C$
giving $\mathbf{2}\times C = C \oplus C$. 
In this work we construct a 
double cover called the \emph{symplectic double} of $C$ 
and denoted $\Double(C)$. 
This cover is trivial when $C$ is a CSS code but becomes
non-trivial otherwise. 

A key idea is \emph{functoriality}:
logical Clifford operations on the base code 
can be lifted to logical operations on the total code.
In the symplectic double we often get an interesting set of fault-tolerant
Clifford gates. 
When the base Clifford gate is a product of single qubit Cliffords,
the lifted gate is fiber transversal, and 
these are fault-tolerant for the same
reason as in the trivial case: locally the cover is trivial.

Of particular interest to us is a familty of topological codes defined
on 3- and 4-valent graphs inscribed on compact surfaces
of arbitrary genus. We call these \emph{genon codes}.
The graph determines the code up to
local Clifford equivalence, with twists or \emph{genons} associated
to the 3-valent vertices.
This family of topological codes
includes surface codes, toric codes, hyperbolic surface codes,
$XZZX$ codes, the $[[5,1,3]]$ code.
See \Fref{fig:genon-codes}.

Applied to genon codes, the symplectic double corresponds to
a topological double cover, which is a discrete (combinatorial)
version of branched double covers of Riemann surfaces. 
Counting the genus of the base space and the double cover space
fixes the number of genons in the base, and this is the origin of
the term \emph{genon}. These are also called
twists, dislocations or defects in the literature 
\cite{Bombin2010,Barkeshli2013,Hastings2015,Brown2017,Yoder2017,Bombin2023}.

\begin{figure*}[t]
\centering
\begin{subfigure}[b]{0.3\textwidth}
$$
\begin{tikzcd}
F \arrow[r, rightarrowtail] & E \arrow[d, twoheadrightarrow] \\
            & B
\end{tikzcd}
$$
\caption{The \emph{fiber} $F$ is included into the \emph{total space} $E$ 
which projects down onto the \emph{base space} $B$.}
\end{subfigure}
~
\begin{subfigure}[b]{0.3\textwidth}
\includegraphics{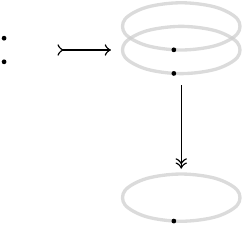}
\caption{A trivial double cover is just two copies of the base space, or
the cartesian product of the base with the fiber.}
\end{subfigure}
~
\begin{subfigure}[b]{0.3\textwidth}
\includegraphics{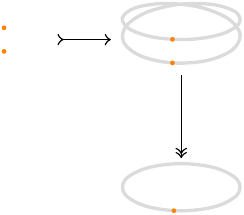}
\caption{A twisted double cover: locally this looks like a trivial
double cover, globally there is a twist.}
\end{subfigure}
\caption{
A schematic depiction of the symplectic double as
a covering space.
(ii) 
Given any CSS code $C$, 
the two copies $C\oplus C$
give a trivial double cover of $C$, and we have a logical 
transversal CX gate applied to the \emph{fibers} of the cover.
(iii)
When $C$ is not a CSS code we find there is a twisted
double cover $\Double(C)$
that also supports Clifford gates applied to the fibers.
}
\label{fig:double-cover}
\end{figure*}

\begin{figure*}[]
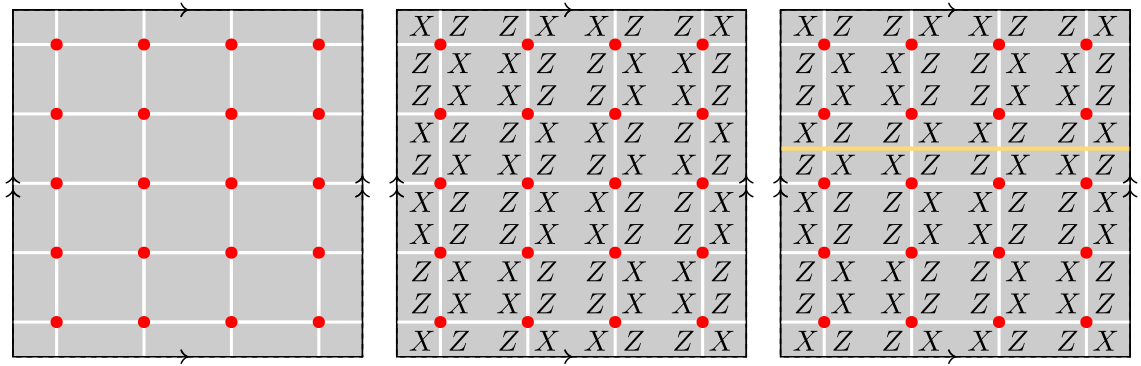
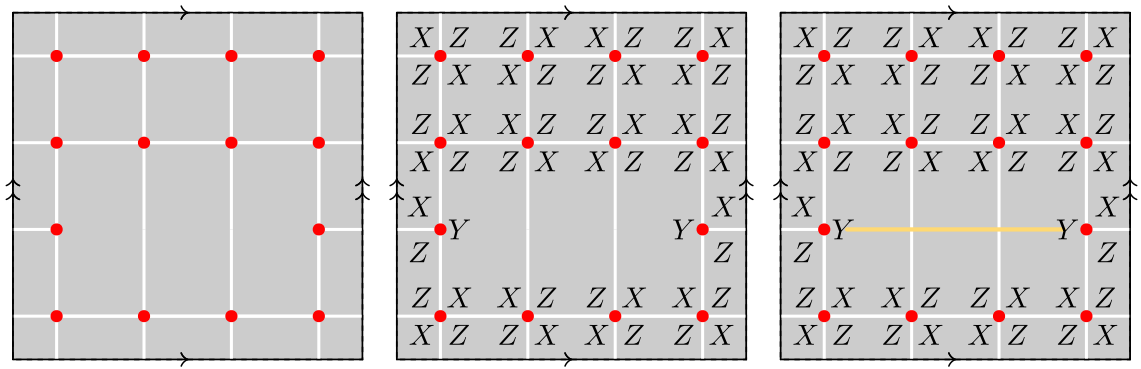

\centering
\begin{subfigure}[t]{0.3\textwidth}
\igr{toric_code_0} 
\caption{A graph $\Gamma$ on a torus with 
16 vertices in red, 32 edges in white and 16 faces in grey.}
\label{fig:toric-code-0}
\end{subfigure}
~
\begin{subfigure}[t]{0.3\textwidth}
\igr{toric_code_1} 
\caption{The faces of $\Gamma$ are bicolourable:
we paint each face with one of two colours blue or green
such that neighbouring faces have different colours.}
\label{fig:toric-code-1}
\end{subfigure}
~
\begin{subfigure}[t]{0.3\textwidth}
\igr{toric_code_2}
\caption{
We associate a qubit with each vertex and a stabilizer with
each face. Using the bicolouring to determine $X$ or $Z$ type stabilizers
we recover the usual (rotated) toric code.}
\label{fig:toric-code-2}
\end{subfigure}
\begin{subfigure}[t]{0.3\textwidth}
\igr{toric_code_domain_0} 
\caption{The graph $\Gamma$ has an odd number of faces
in the vertical direction which prohibits bicolouring the faces.}
\end{subfigure}
~
\begin{subfigure}[t]{0.3\textwidth}
\igr{toric_code_domain_1} 
\caption{We can still define a quantum code as above 
but some stabilizers are both $X$ and $Z$ type and
this is a non-CSS code.}
\end{subfigure}
~
\begin{subfigure}[t]{0.3\textwidth}
\igr{toric_code_domain_2}
\caption{We insert a \emph{domain wall}
separating or resolving the $X$ and $Z$ sectors appropriately.
}
\end{subfigure}
\begin{subfigure}[t]{0.3\textwidth}
\igr{toric_code_twist_0} 
\caption{This graph $\Gamma$ has two trivalent vertices. }
\end{subfigure}
~
\begin{subfigure}[t]{0.3\textwidth}
\igr{toric_code_twist_1} 
\caption{We place $XYZ$ around trivalent vertices . }
\end{subfigure}
~
\begin{subfigure}[t]{0.3\textwidth}
\igr{toric_code_twist_2}
\caption{Here the domain wall connects the two trivalent vertices.}
\end{subfigure}
\caption{Constructing a rotated toric code on a 
graph with bicolourable faces (i)-(iii).
The frustrated bicolourability of $\Gamma$ is
related to domain walls (iv)-(vi).
Trivalent vertices, or \emph{genons},
are another reason bicolourability is frustrated (vii)-(ix).}
\label{fig:genon-codes}
\end{figure*}

\subsection{Outline}

We discuss the underlying theory of two-dimensional 
topological order
with $D(\Integer_2)$ anyon excitations in \S\ref{sec:topo}.
This calculus is divorced from the microscopic details of
the system, so we don't need to take into account any qubit or
stabilizer structure, and can instead focus on the large-scale topological 
properties of the system. 
Crucially this phase has one non-trivial symmetry which we denote
using one-dimensional domain walls.
The endpoints of domain walls are the genons, and
by braiding these we can perform logical Clifford gates.

The symmetry exhibited by domain walls has another
incarnation as precisely 
the data needed to construct a \emph{double cover}
of our two-dimensional manifold, or \emph{Riemann surface}
\S\ref{sec:riemann}.
Topological operations in the base space, such as braiding
genons, will then
lift  functorially to topological operations in the total space,
such as Dehn twists.

We review background on quantum stabilizer codes and notation in \S\ref{sec:qcodes}
and then introduce the \emph{symplectic double} construction in \S\ref{sec:sp},
and show how logical Clifford gates on the base code 
lift functorially 
to logical Clifford gates on the total code \S\ref{sec:func},
as well as their fault-tolerance.
These lifted Clifford gates lie in the phase-free $ZX$-calculus.

We introduce our formalism for topological codes with genons in
\S\ref{sec:graphs}. These are called \emph{genon codes},
and come with a theory of logical string operators which is invariant under
local Clifford operations \S\ref{sec:string}, as well as
rules for decoration by \emph{domain walls}\S\ref{sec:domain}.
When a genon code has no unnecessary $Y$ operators
the symplectic double will also be a genon code \S\ref{sec:double}.

In \S\ref{sec:examples} we go through a series of example
genon codes and symplectic doubles.
These have interesting Clifford gates, 
which we call \emph{genon protocols}. 
The smallest interesting example is a $[[4,1,2]]$ genon code 
with 4 genons, whose symplectic double is a $[[8,2,2]]$ toric code.
We show how braiding these four genons gives a protocol for
implementing the single qubit Clifford group, as well as
Dehn twists on the $[[8,2,2]]$ toric code.

We experimentally demonstrate three different protocols 
on Quantinuum's H1-1 trapped-ion quantum computer, including
genon braiding on the $[[4,2,2]]$ code, Dehn
twists on the $[[8,2,2]]$ code and a lifted Clifford gate on the
$[[10,2,3]]$ code \S\ref{sec:experiments}.
Such experiments are a ``proof-of-principle,''
demonstrating that
the gates arising from the procedures can be realized on modern quantum computers.
We conclude in \S\ref{sec:conc}.

We aim to use a consistent colour scheme as an 
aid to recognizing
the same concept as it appears in seemingly different guises.
Qubits are red dots, $X$-type string operators are blue,
and $Z$-type string operators are green. 
This blue/green
colour scheme also applies to $ZX$-calculus.
The symmetry that swaps blue and green is coloured yellow
and occurs both as domain walls and the Hadamard gate.

Finally, there are several \emph{Claims} in this work which can be taken to
be either unproven theorems, or conjectures.

\subsection{Related work}

This work necessarily treads a well-worn path, and 
we cite some of the work to be found on this path here.

\begin{itemize}
\item
This work is a continuation of
the first author's work on $ZX$-dualities and folding 
\cite{Breuckmann2022}:
``fold-transversal'' logical gates are 
examples of what is here called fiber transversal. 

\item
The symplectic double construction also appears in \cite{Liu2023},
where it is applied to subsystem codes.

\item
A particular example of a genon code without any genons is
the $XZZX$ code \cite{Bonilla2021}, and the $[[5,1,3]]$ code.

\item
Surface codes are an 
example of genon codes on a sphere:
in \cite{Gowda2020} they consider modifying such face-bicolourable 
4-valent graphs to support 3-valent dislocations,
as well as qudit generalizations.

\item
Another treatment of surface codes and twists
is found in \cite{Bombin2023}, where the focus is on
2+1 dimension circuit construction.

\item
The family of genon codes defined in this work 
overlaps and is inspired by the graph-based formalism of 
Sarkar-Yoder \cite{Sarkar2021}.
They were the first to identify the well-known
$[[5,1,3]]$ as a member of a family of topological codes defined on a torus.
In the Sarkar-Yoder formalism they take genons (branch points)
to occur precisely at the trivalent (or odd-valent)
vertices, and so
obtain a deficiency in the vertex count of the double cover (see proof of Thm. 4.9).
In this work we take genons (branch points) to occur on faces
near to the trivalent vertices, giving exactly twice the number
of vertices in the double cover (and a deficiency in the doubled faces).
This works better for our purposes as we then find a connection with the
symplectic double, from which our other code constructions follow.
The Sarkar-Yoder formalism takes the branch cuts between genons
to consist of paths along edges of the graph with endpoints at the
trivalent vertices. These paths are called defect lines
and they show how these relate to a bicolour (or checkerboard)
on the faces.
In this work we take branch cuts to be paths perpendicular to the edges,
which then give the \emph{domain walls} for the underlying topological order.
The Sarkar-Yoder formalism also contains a doubling construction in \S 4.3,
that reproduces our symplectic double when there are no genons present.
For example, the Sarkar-Yoder double of the $[[5,1,3]]$ code is the $[[10,2,3]]$
toric code.

\item
A description of Dehn twists on topological
codes appears in \cite{Breuckmann2017}, \S 4.2.

\item
In \cite{Lavasani2019} Appendix D, they
describe a protocol for instantaneous Dehn twists involving
permutations of qubits followed by 
Pachner moves.
These Pachner moves are implemented
using constant depth CX circuits and are
designed to re-triangulate the code lattice after performing
the qubit permutation.
Their work is developed further in \cite{Zhu2020} where 
more protocols for performing braids of defects (genons) and
Dehn twists are given.

\item
Further discussion on Clifford gates on topological codes
is in \cite{Brown2017,Lavasani2018}.
\end{itemize}

\section{Background}\label{sec:background}

\subsection{$D(\Integer_2)$ topological order}\label{sec:topo} 

\begin{figure*}[!ht]
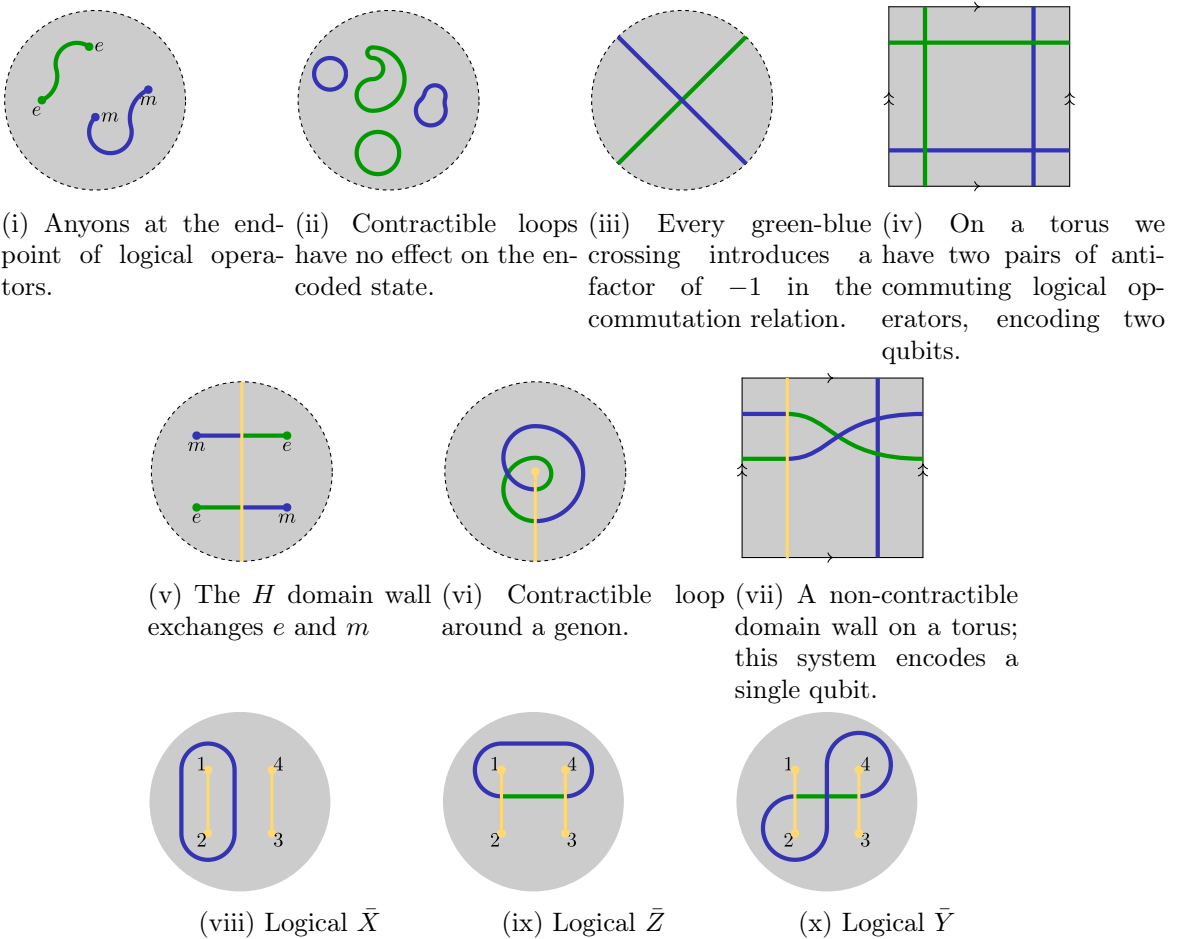

\centering
\begin{subfigure}[t]{0.24\textwidth} \igs{topo_0_em}
\caption{Anyons at the endpoint of logical operators.}
\end{subfigure} 
\begin{subfigure}[t]{0.24\textwidth} \igs{topo_0_stabs} 
\caption{Contractible loops have no effect on the encoded state.}
\end{subfigure} 
\begin{subfigure}[t]{0.24\textwidth} \igs{topo_0_anti} 
\caption{Every green-blue crossing introduces a factor of $-1$
in the commutation relation.}
\end{subfigure} 
\begin{subfigure}[t]{0.24\textwidth} \igs{topo_genus_1_logops_00} 
\caption{On a torus we have two pairs of anti-commuting logical operators,
encoding two qubits.}
\end{subfigure} 
\begin{subfigure}[t]{0.24\textwidth} \igs{topo_0_domain_em}
\caption{The $H$ domain wall exchanges $e$ and $m$}
\end{subfigure} 
\begin{subfigure}[t]{0.24\textwidth} \igs{topo_0_genon_em}
\caption{Contractible loop around a genon.}
\end{subfigure} 
\begin{subfigure}[t]{0.24\textwidth} \igs{topo_genus_1_domain_emm}
\caption{A non-contractible domain wall on a torus; this system encodes
a single qubit.}
\end{subfigure} 
\\
\begin{subfigure}[t]{0.24\textwidth} \igs{topo_1234_X}
\caption{Logical $\bar{X}$ }
\end{subfigure} 
\begin{subfigure}[t]{0.24\textwidth} \igs{topo_1234_Z}
\caption{Logical $\bar{Z}$ }
\end{subfigure} 
\begin{subfigure}[t]{0.24\textwidth} \igs{topo_1234_Y}
\caption{Logical $\bar{Y}$ }
\end{subfigure} 
\caption{String operators are coloured blue and green, 
which are $X$- and $Z$-type respectively.
The $H$ domain wall is a yellow string, whose (any) endpoints are called genons.
(viii)-(x): A sphere supporting four genons encodes one logical qubit.
}
\label{fig:strings}
\end{figure*}

In this section, we briefly discuss the theory of $D(\Integer_2)$
topological order, describing anyon statistics through
the use domain walls, defects and genons.
For a more in depth and accessible introduction we recommend
\cite{Brown2017}, as well as
\cite{Barkeshli2019} \S VI and \cite{Carqueville2023}.

To start, consider an abelian anyon model with four anyon types:
\emph{vacuum}, \emph{electric}, \emph{magnetic} and \emph{electromagnetic} anyons.
We denote these $\iota, e, m, \varepsilon$ respectively (See \Fref{fig:strings}).
Fusion rules describe how pairs of anyons combine to
form other anyons. 
These rules are as follows:
\begin{align*}
\iota \times a &= a \times \iota = a \quad \text{for all anyon labels } a \\
e \times e &= 
m \times m = \varepsilon \times \varepsilon = \iota\\
e \times m &= m\times e = \varepsilon \\
e \times \varepsilon &= \varepsilon\times e =  m \\
m \times \varepsilon &=  \varepsilon\times m = e \\
\end{align*}

We can describe the path of an anyon on a topological
surface by a string operator whose end points are the anyons.
When these string operators are closed, they are the 
logical operators of the topological code.
Here we adopt the convention that blue strings connecting
two $e$ anyons are $X$-type operators and green strings
connecting two $m$ anyons are $Z$-type operators.
These strings can cross and annihilate ends points
following the fusion rules (see. Fig. ~\ref{fig:strings}).

This topological order supports two automorphisms: the
trivial automorphism and the automorphism that swaps
$e$ and $m$. These automorphisms occur when anyons cross
a boundary, or \emph{domain wall}, that separates regions. 
The trivial automorphism
is realized by anyons crossing a trivial domain wall,
while the $e-m$ swapping automorphism occurs when anyons
cross a non-trivial domain wall.
The endpoints of these non-trivial domain walls we call \emph{genons}.

Because $m\times m=\iota$ and $e\times e=\iota$
the associated string operators are self-inverse.
In other words, two copies of the same colour string
can be erased:
$$
\igcs{topo_XX_0}
\cong
\igcs{topo_XX_1}
\cong
\igcs{topo_XX_2}
$$
and similarly for $Z$-type string operators.

The $H$ domain wall exchanges $e \leftrightarrow m$
and so is also self-inverse but for a different reason:
$$
\igcs{topo_HH_0}
\cong
\igcs{topo_HH_1}
\cong
\igcs{topo_HH_2}
$$

Contractible loops around genons
$$
\igcs{topo_hh}
\cong
\igcs{topo_hh_s1}
\cong
\igcs{topo_hh_1s}
$$
imply the following equations
\begin{align*}
&\igcs{topo_hh_0}
\cong
\igcs{topo_hh_1}
\cong
\igcs{topo_hh_2} \\
\cong
&\igcs{topo_hh_3}
\cong
\igcs{topo_hh_4}
\cong
\igcs{topo_hh_5}
\end{align*}
We see from this that performing a \emph{braid} between two genons
has the same result clockwise versus anti-clockwise:
\begin{align*}
\igcs{topo_hh_12_cw}\mapsto
&\igcs{topo_hh_21_cw}
\\
& \ \ \Hsp\shortparallel
\\
\igcs{topo_hh_12_ccw}\mapsto
&\igcs{topo_hh_21_ccw}
\end{align*}
Therefore we can refer to a braid of genons by the
underlying permutation of the genon labels.

Connecting these concepts to the language of the
stabilizer formalism (see \S\ref{sec:qcodes}),
we record the following dictionary:
\begin{center}
\begin{tabular}{cc}
\underline{$D(\Integer_2)$ topological order} & \underline{Quantum stabilizer codes} \\
$X/Z$-type string operator & $X/Z$-type logical operator \\
contractible string operator  & stabilizer \\
$e/m$ anyon  & frustrated $X/Z$-type stabilizer  \\
\end{tabular}
\end{center}

\subsubsection{Example in genus zero}\label{sec:genus-zero} 

So far we have only discussed \emph{local} rules for
anyon calculations, as depicted by the dashed line
surrounding grey regions. In this section we consider
the effect that the \emph{global} topology has.

A sphere with four genons (two domain walls), encodes
one logical qubit, \Fref{fig:strings}.
By topological deformation, we braid these genons.
Here we number the genons $1,2,3,4$ and see the action
on the logical operators as we perform these braids.

The first gate we try is given by the
permutation operator $\sigma = (1,4,3,2)$.
This is swapping genons 2 and 4, and we show a clockwise braid of these two
genons:
\begin{align*}
\bar{X} = \igcs{topo_1234_X}
&\mapsto
\igcs{topo_1432_X}
\cong
\igcs{topo_1432_Z} = \bar{Z},
\\
\bar{Z} = \igcs{topo_1234_Z} 
&\mapsto
\igcs{topo_1432_Z1}
\cong
\igcs{topo_1432_Z2}
\\
&\cong
\igcs{topo_1432_Z3}
\cong
\igcs{topo_1432_Z4}
= \bar{X}.
\end{align*}
And therefore this braid implements a logical $H$ gate.

The next permutation we examine is $\sigma = (1,3,2,4)$
which swaps genons 2 and 3.
\begin{align*}
\bar{X} = \igcs{topo_1234_X}
&\mapsto
\igcs{topo_1324_X}
\cong
\igcs{topo_1324_2}
\cong \bar{Y},
\\
\bar{Z} = \igcs{topo_1234_Z} 
&\mapsto
\igcs{topo_1324_Z1}
\cong
\igcs{topo_1324_Z}
=\bar{Z}.
\end{align*}
And this gives a logical $S$ gate.

Finally we note that the two permutations
$(2,1,4,3)$ and $(3,4,1,2)$ leave the logical
operators invariant.
This is because we are working on a sphere and can
drag operators around the back of the sphere:
\begin{align*}
\bar{X} = \igcs{topo_m4_X1} &\cong
\igcs{topo_m4_X2}, \\
\bar{Z} = \igcs{topo_m4_Z1} &\cong 
\igcs{topo_m4_Z2} \cong
\igcs{topo_m4_Z3}.
\end{align*}

We will show in \S\ref{sec:examples} below a minimal implementation of
these gates in a four qubit quantum code.

\subsubsection{Dehn twists}

\begin{figure*}[t]
\centering
\begin{subfigure}[t]{0.4\textwidth}
\includegraphics[scale=0.1]{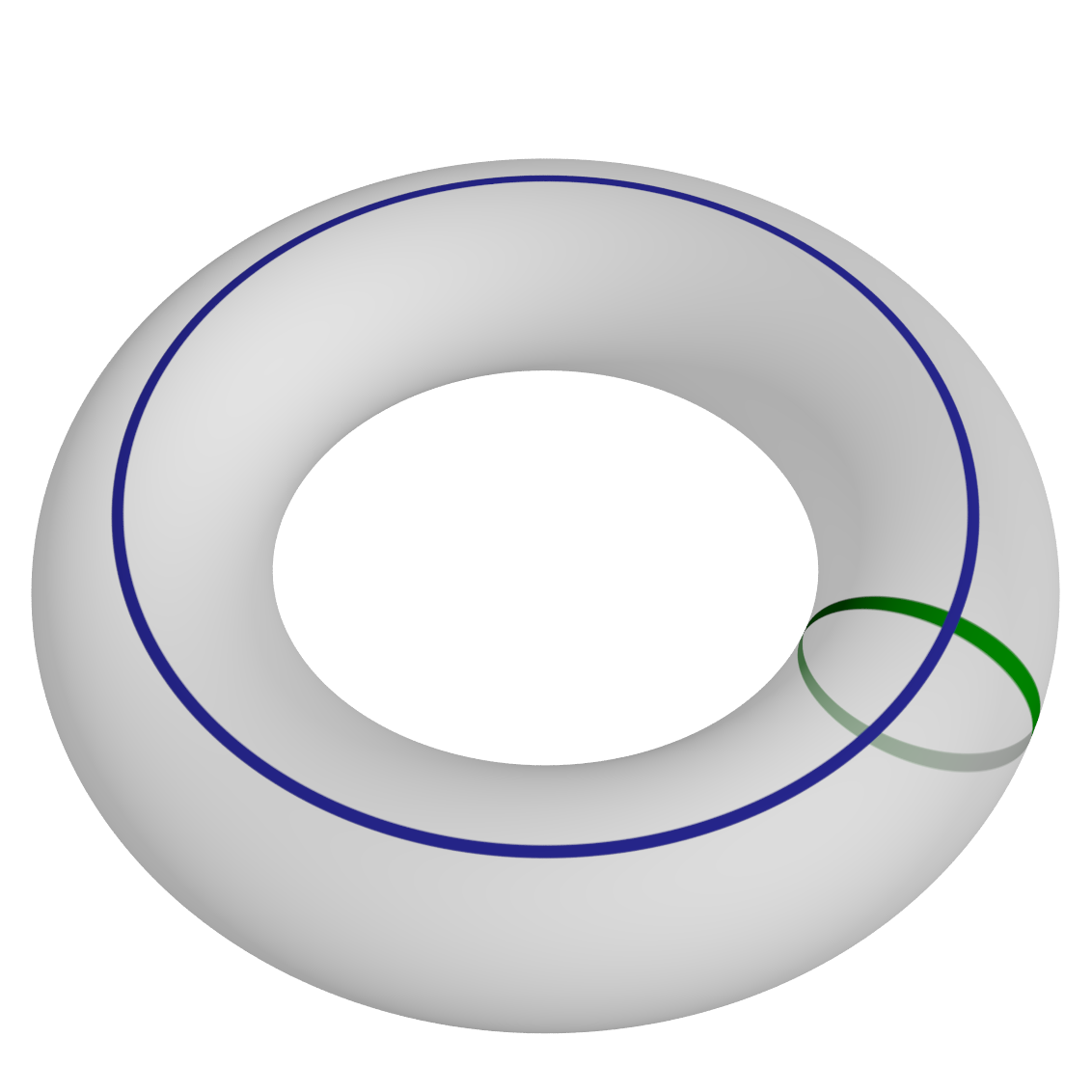}
\caption{The torus with an anti-commuting pair of logical operators.}
\end{subfigure}
~
\begin{subfigure}[t]{0.4\textwidth}
\includegraphics[scale=0.1]{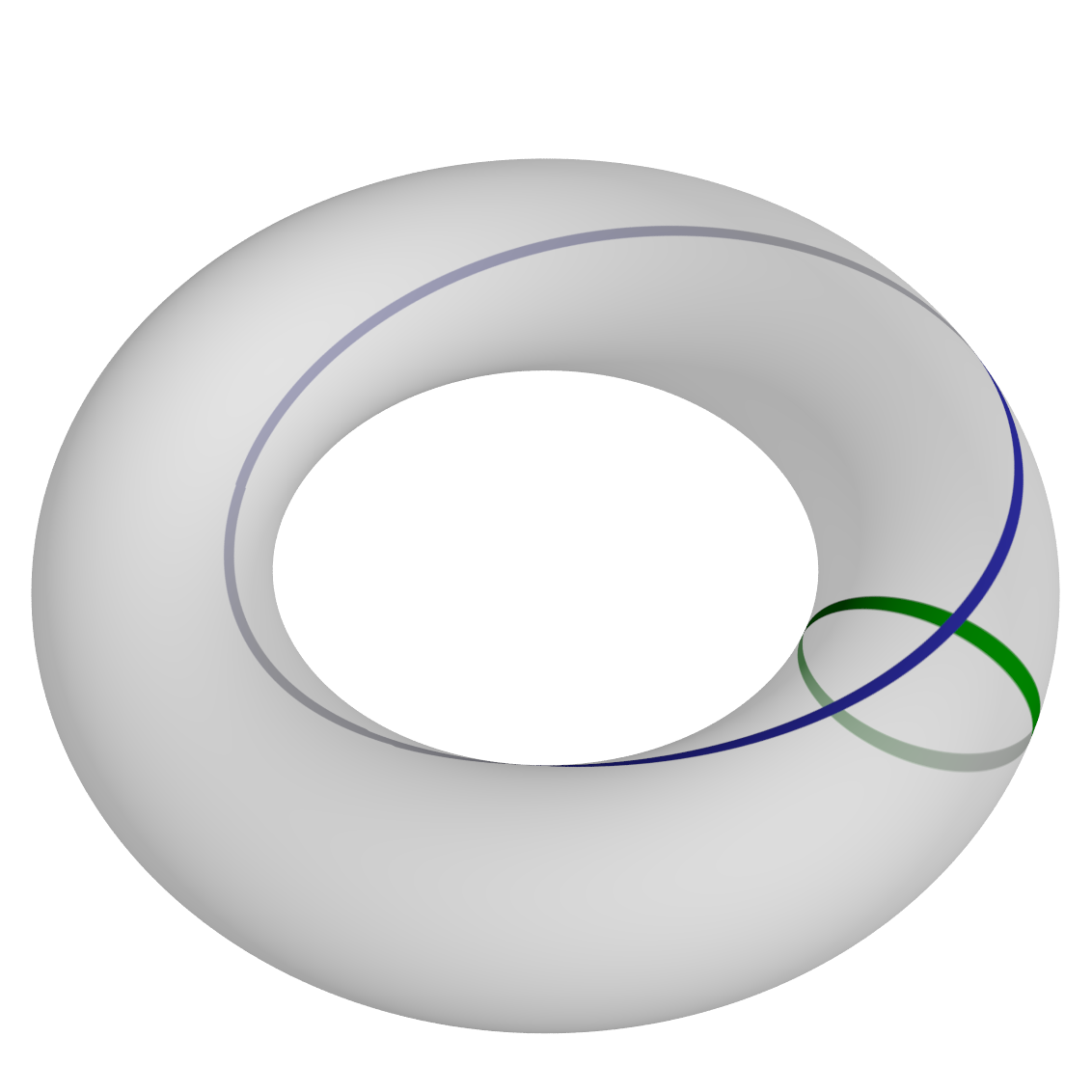}
\caption{A Dehn twist introduces a full rotation in the torus.
Here we see the blue string operator now winds around the
back of the torus.}
\end{subfigure}
\caption{A Dehn twist on a torus.}
\label{fig:dehn}
\end{figure*}

On a genus zero surface (sphere) the only non-trivial
homeomorphisms up to isotopy are the genon braids.
On a higher genus surface we have many more non-trivial
homeomorphisms.
A \emph{Dehn twist} on a torus $T$ is a homeomorphism 
of the torus that introduces a global twist in the torus,
see \Fref{fig:dehn}.
A horizontal Dehn twist is implemented as a linear 
shear operation, up to 
periodic boundary conditions:
\begin{align*}
\igc{topo_genus_1} &\hsp\mapsto\hsp \igc{topo_genus_1_xy}   \\
\end{align*}
Now we compute the action of this horizontal Dehn twist on 
the logical operators.
A complete set of logical operators is given in anti-commuting 
pairs $(\bX_0,\bZ_0)$ and $(\bX_1,\bZ_1)$:
\begin{center}
\igc{topo_genus_1_logops_11}
\end{center}
The action of a horizontal Dehn twist is then found to be 
\begin{align*}
\igc{topo_genus_1_logops_10} \ \ &\mapsto\ \ \igc{topo_genus_1_dehn_0} \\
(\bar{X}_0,\bar{Z}_0)\hsp\hsp\hsp &\mapsto \hsp\hsp(\bar{X}_0, \bar{Z}_0\bar{Z}_1) \\
\end{align*}
\begin{align*}
\igc{topo_genus_1_logops_01} \ \ &\mapsto\ \ \igc{topo_genus_1_dehn_1} \\
(\bar{X}_1,\bar{Z}_1)\hsp\hsp\hsp &\mapsto \hsp\hsp(\bar{X}_0\bar{X}_1, \bar{Z}_1) 
\end{align*}
and therefore this Dehn twist implements logical $CX_{1,0}$
with control qubit $1$ and target qubit $0$.
A similar calculation shows that the vertical
Dehn twist 
\begin{align*}
\igc{topo_genus_1} &\hsp\mapsto \hsp
\raisebox{-0.24\height}{\includegraphics{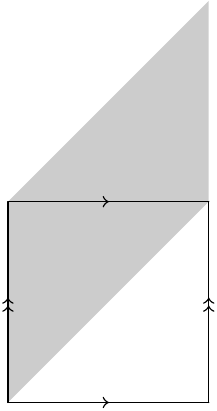}}
\end{align*}
implements a logical $CX_{0,1}$.
The combination of these two logical gates generates
the group $\GL(2,\Integer_2)$ which is isomorphic
to the permutation group $S_3$ of order 6.

These two Dehn twists are known to implement the 
\emph{mapping class group} of the torus~\cite{Farb2011}.
The mapping class group of a surface
is the group of isotopy classes of homeomorphisms of the surface.

\subsubsection{Riemann surfaces and double covers}\label{sec:riemann} 

In this section we re-interpret the above 
theory of $D(\Integer_2)$ topological order 
using the language of Riemann surfaces. 
Loosely speaking, 
if domain-walls correspond to passing between
the normal world and a ``bizarro'' world, then
why don't we interpret this literally?
In other words, take two copies of the topological
phase and cut/glue them together appropriately, along the domain walls.
This motivates the following consideration of 
\emph{branched double covers}.

Topologically, there are only two ways to double cover a circle,
which is the only compact connected one-dimensional manifold, see \Fref{fig:double-cover}.
When we do this with compact surfaces things get much more
interesting.
See the textbook \cite{Girondo2012} \S 1.2.5.

Compact oriented connected topological surfaces are characterized
by their \emph{genus}, which counts the number of ``holes'':
\begin{center}
\begin{tabular}{ccc}
  \includegraphics[scale=0.1]{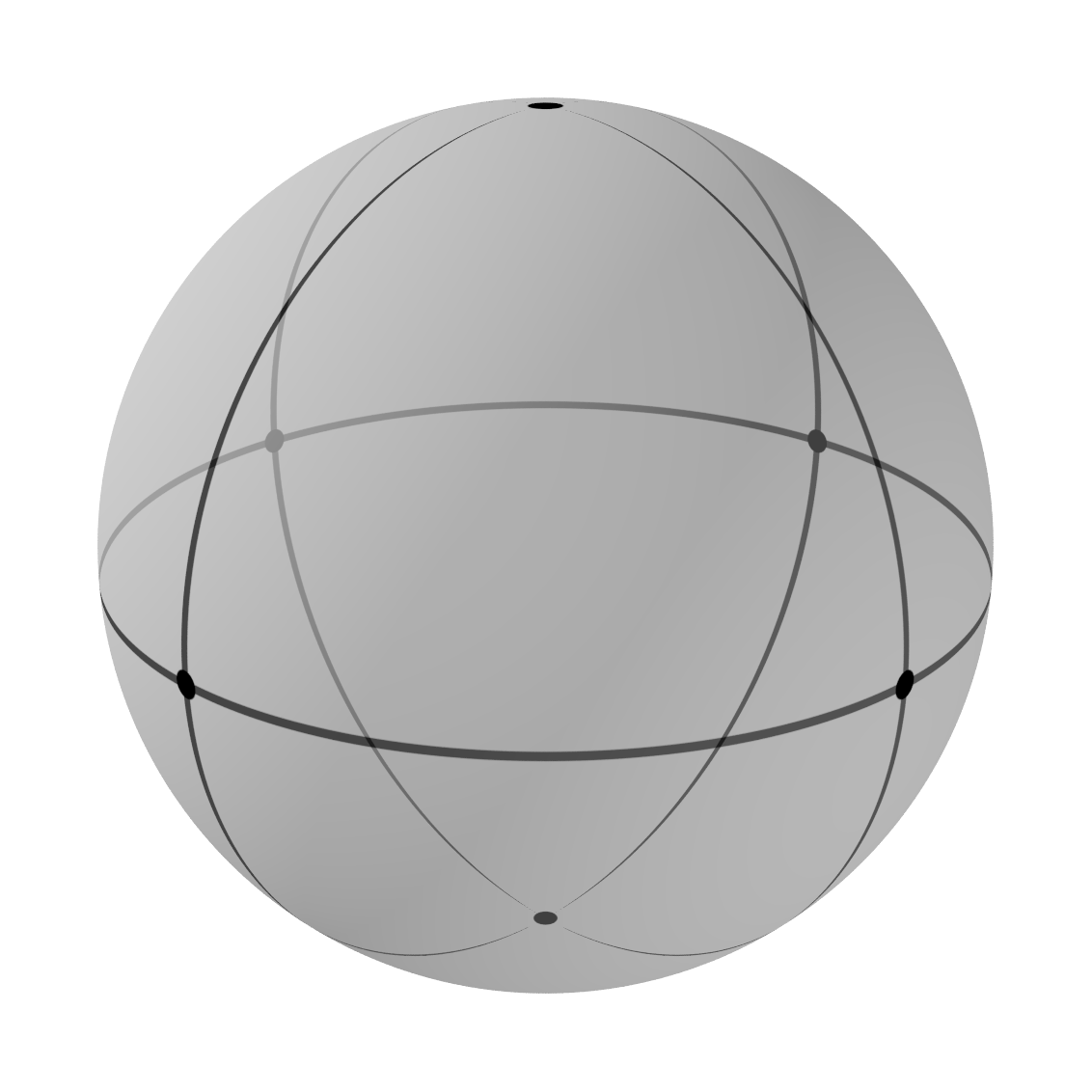}
& \includegraphics[scale=0.1]{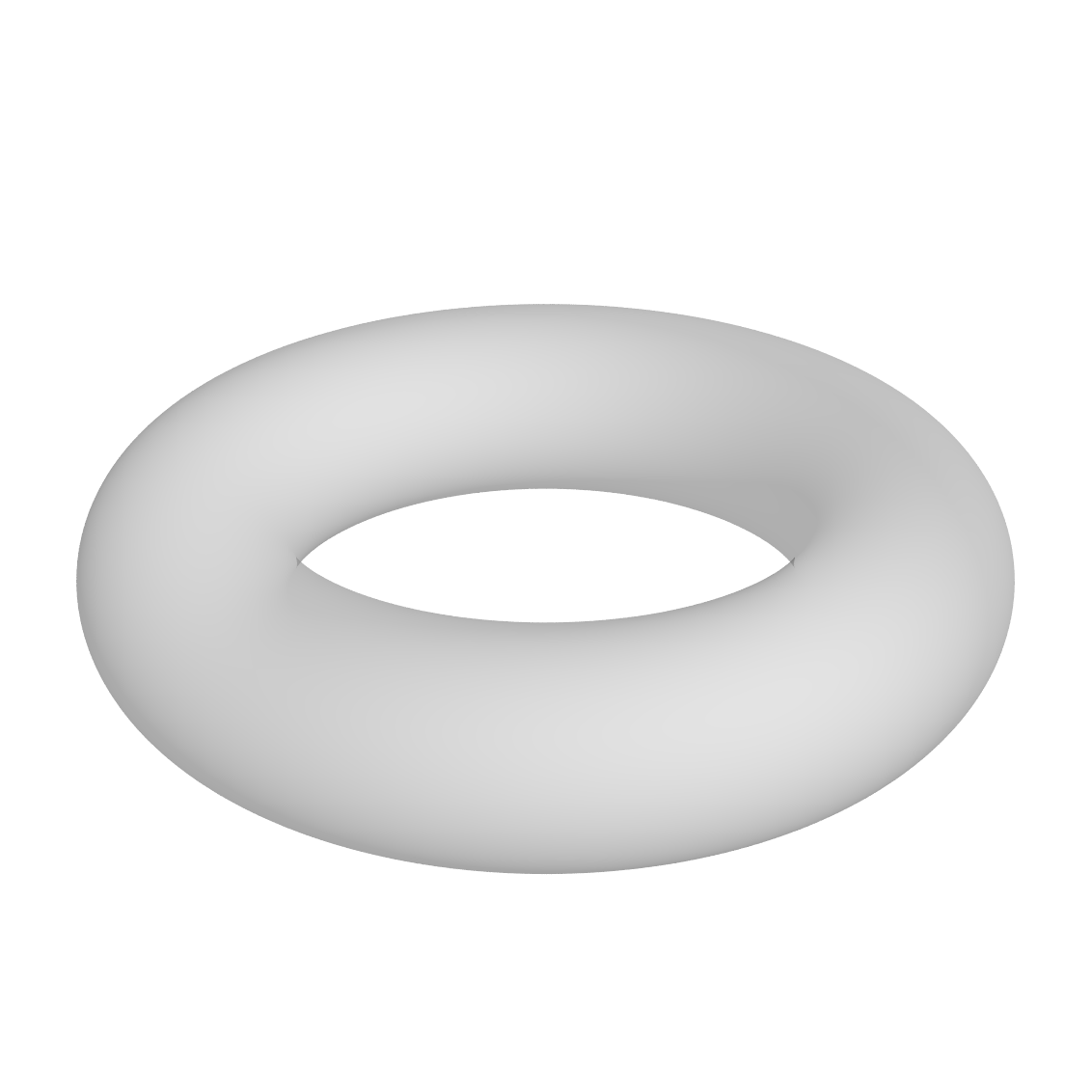}
& \includegraphics[scale=0.2]{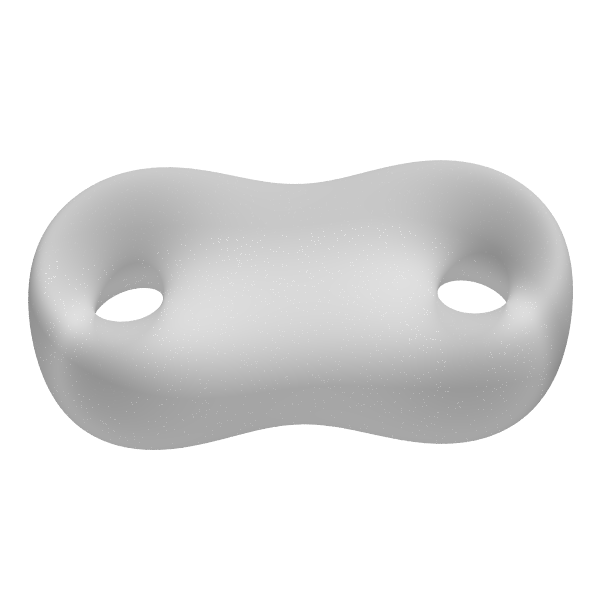}
\\
genus zero & genus one & genus two 
\end{tabular}
\end{center}
The genus zero surface is a sphere, and we have inscribed on it
$v=6$ vertices, $e=12$ edges joining vertices, and $f=8$ faces, thus
giving the \emph{Euler character}
$$
\chi = v - e + f = 6 - 12 + 8 = 2.
$$
In general, for a genus $g$ surface we have that
$\chi = 2 - 2g.$
The Euler character of a surface constrains what
graphs we can inscribe on that surface.

Given a compact surface $E$ that double covers a surface $B$,
$$
\begin{tikzcd}
E \arrow[d, twoheadrightarrow, "p"] \\
B
\end{tikzcd}
$$
the Euler characteristic $\chi(.)$ satisfies the formula
$$
    \chi(E) = 2\chi(B)
$$
If the cardinality of the fiber $p^{-1}(b)$ is $1$ at
a point $b\in B$ we say that the cover has a 
\emph{branch point} at $b$.
Such branch points introduce a correction into the
formula for the Euler characteristic:
$$
    \chi(E) = 2\chi(B) + m
$$
where $m$ is the number of branch points.
This is a special case of the more general 
Riemann-Hurwitz formula, see \cite{Girondo2012} Thm 1.76.
In terms of the \emph{genus} $g(.)$ of the surfaces $E$ and $B$,
we have
\begin{align*}
2g(E) - 2 &= 2(2g(B) - 2) + m, \\
g(E) - 1 &= 2g(B) - 2 + \frac{1}{2}m.
\end{align*}
When the genus of both $E$ and $B$ is zero
we find that $m=2$. This cover is given
by the double cover of the Riemann sphere,
or extended complex plane, under the function $f(z)=z^2.$
Most points, such as $1=(\pm 1)^2$ and $-1=(\pm i)^2$ are
double covered by $f$, except for the $m=2$ points at $0$ and $\infty$:
\begin{center}
\raisebox{-0.4\height}{\includegraphics[scale=0.1]{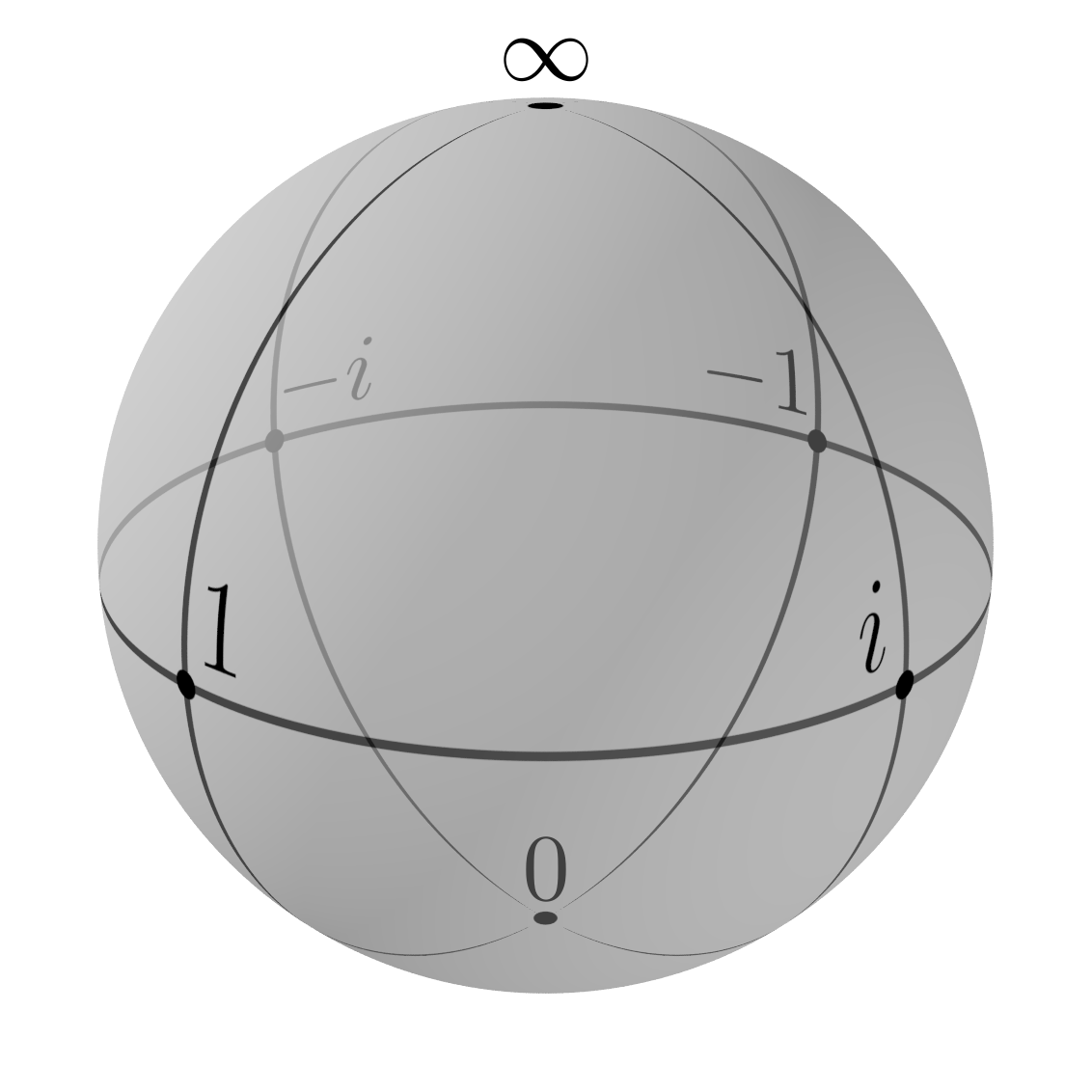}}
\hsp$\xtwoheadrightarrow{p}$\hsp
\raisebox{-0.4\height}{\includegraphics[scale=0.1]{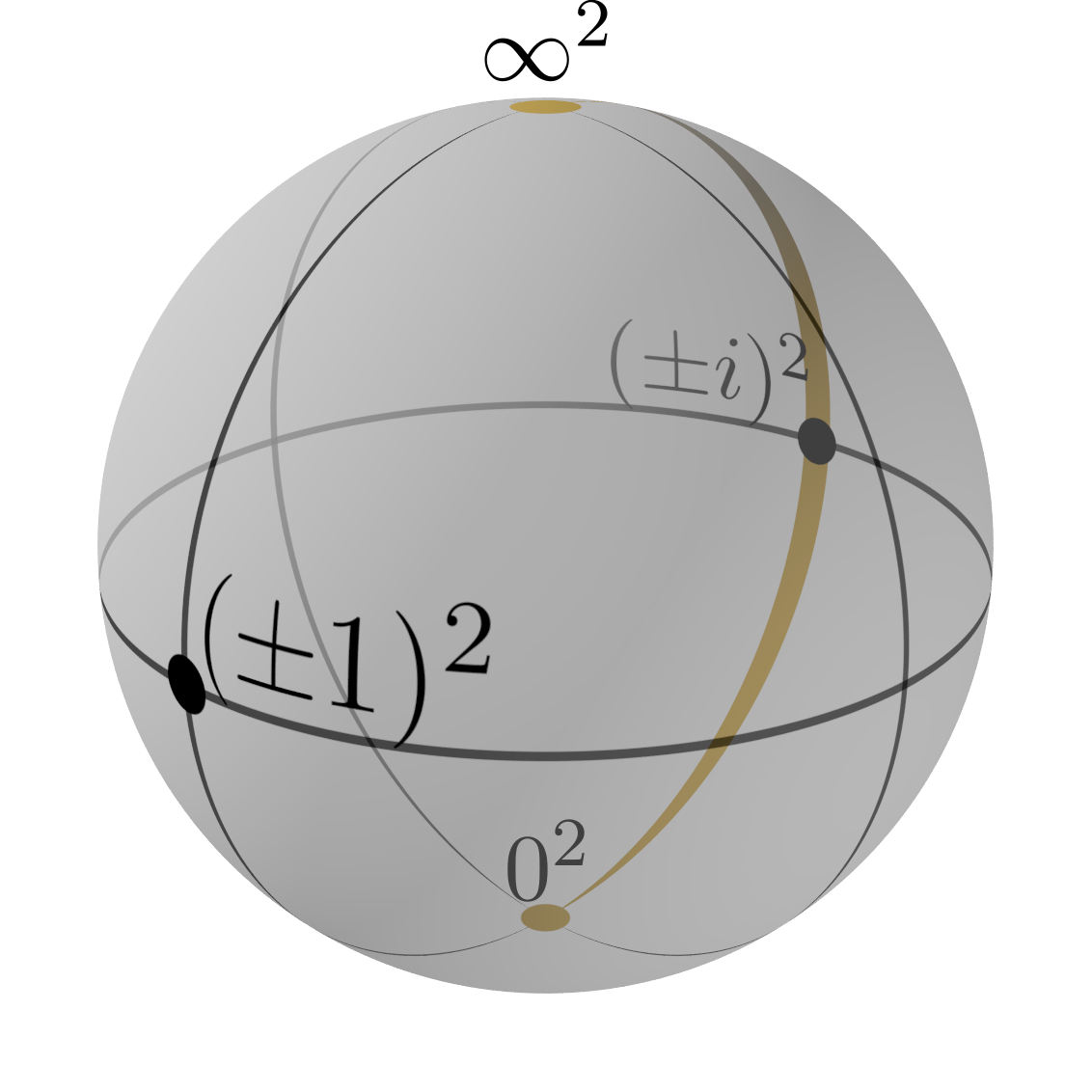}}
\end{center}
The yellow line is an arbitrary line connecting the two yellow branch points
at $z=0$ and $z=\infty$.
We can construct these branched covers with some cutting and glueing.
First we take two copies of the base space:
\begin{center} 
\includegraphics[scale=0.25]{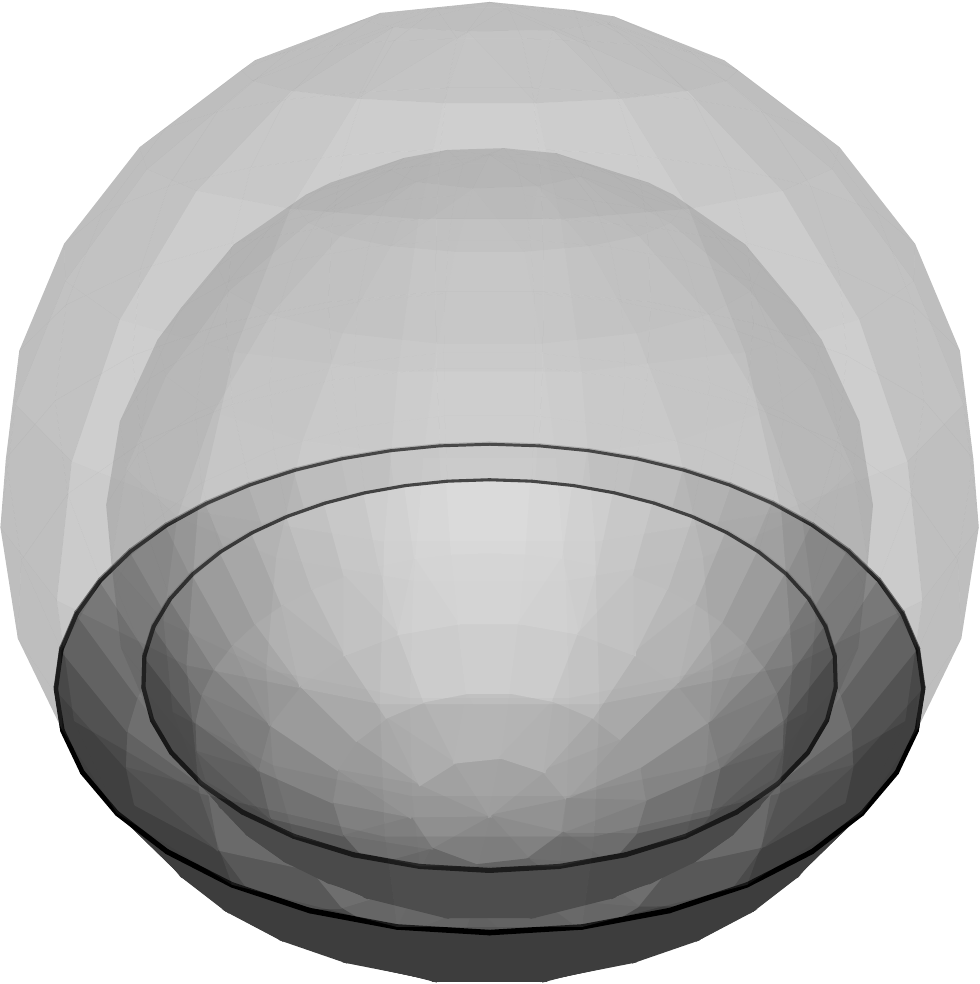}
\end{center}
This is the trivial double cover of the base.
Now focusing on the lower part of this figure:
\begin{center} 
\raisebox{-0.4\height}{\includegraphics[scale=0.25]{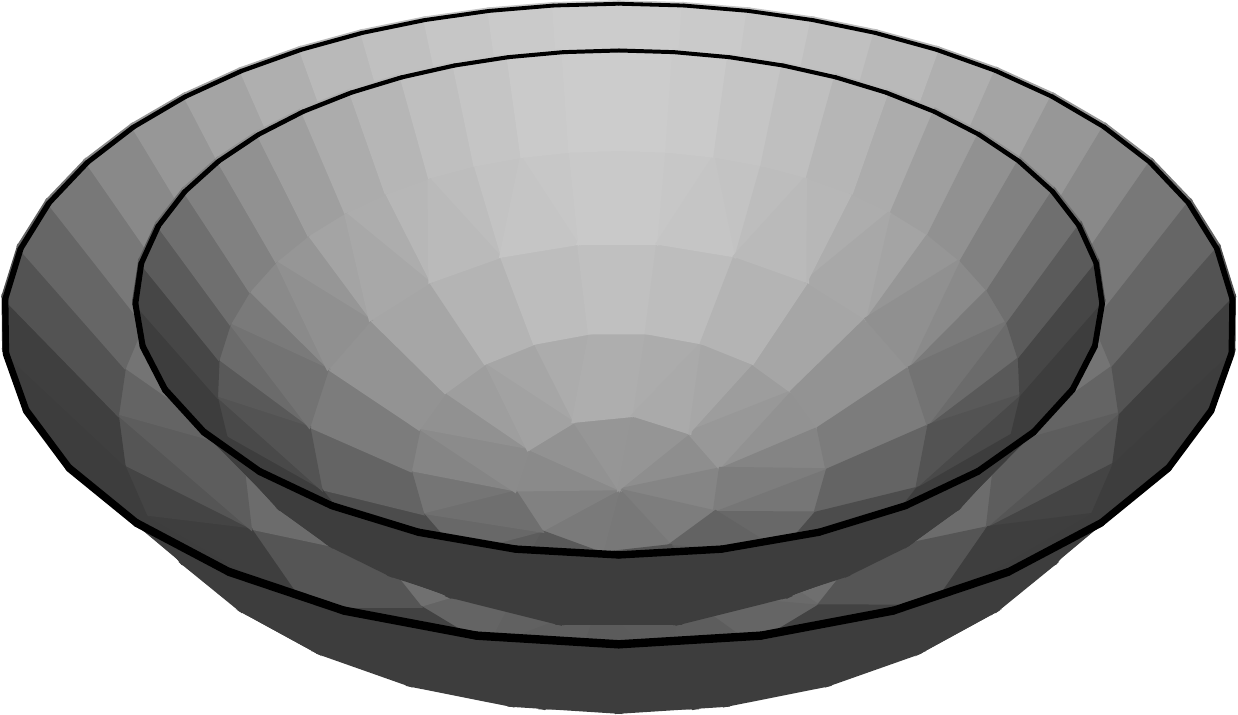}}
\ \ $\mapsto$\ \ 
\raisebox{-0.4\height}{\includegraphics[scale=0.25]{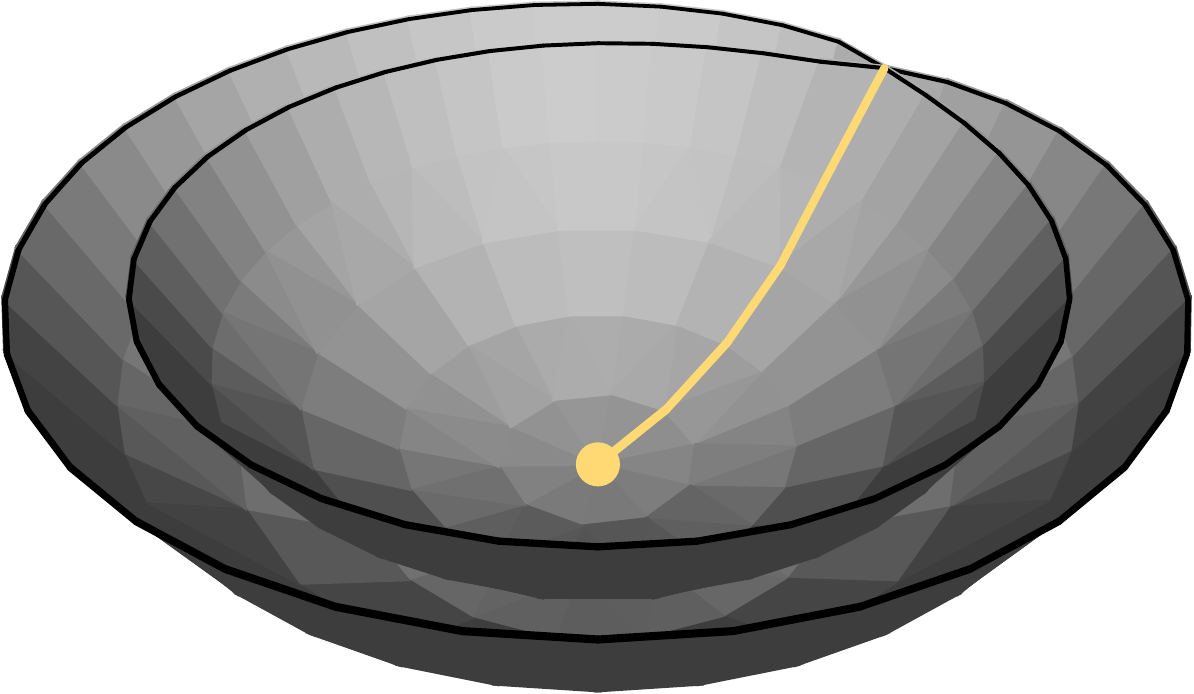}}
\end{center}
This shows explicitly how domain walls and
genons correspond to branches of double covers,
and will motivate the development of the qubit
theory below, see \Fref{fig:cover-genon}.

Here we tabulate values of $m$ for various double
covers:
$$
\begin{array}{c|cccccccc}
  m=\ ?    & g(E)\!=\!0 & g(E)\!=\!1 & g(E)\!=\!2 & g(E)\!=\!3 & g(E)\!=\!4 & g(E)\!=\!5 & & \\ 
\hline
g(B) = 0   &  2 &  4 &  6 &  8 &  10 &  12 \\
g(B) = 1   &    &  0 &  2 &  4 &  6  &  8  \\
g(B) = 2   &    &    &    &  0 &  2  &  4  \\
g(B) = 3   &    &    &    &    &     &  0  
\end{array}
$$
The interplay between the $m$ branch points and
the resulting genus of a double cover 
is the origin of the term \emph{genon}.
In summary, we have the following dictionary:
\begin{center}
\begin{tabular}{cc}
\underline{$D(\Integer_2)$ topological order} &
    \underline{Riemann surfaces} \\
domain wall  & branch cut \\
genon        & branch point 
\end{tabular}
\end{center}

\subsection{$ZX$-calculus survival guide}\label{sec:zx}

The $ZX$-calculus
is a notation for drawing quantum circuit diagrams \cite{Coecke2011,vandeWetering2020}.
We draw circuits from right to left, in agreement with algebraic (Dirac) notation.
The building building blocks for this
notation are wires, blue and green \emph{spiders}, 
and the yellow Hadamard box.
Such diagrams are examples of tensor networks.
In this section we give a brief and incomplete introduction to this notation
and how we compute with it.

\begin{center}
\begin{tabular}{|c|c|c|c|c|c|c|c|c|}
\hline
$ZX$-diagram 
& \igd{-0.25}{blue_100}
& \igc{blue_102}
& \igd{-0.25}{green_100}
& \igc{green_102}
& \igd{-0.25}{blue_010}
& \igc{blue_012}
& \igd{-0.25}{green_010}
& \igc{green_012}
\\
\hline
Dirac notation 
& $\ket{0}$
& $\ket{1}$
& $\ket{+}$
& $\ket{-}$
& $\bra{0}$
& $\bra{1}$
& $\bra{+}$
& $\bra{-}$
\\
\hline
\end{tabular}
\end{center}

\begin{center}
\begin{tabular}{|c|c|c|c|c|c|c|c|}
\hline
$ZX$-diagram 
& \igc{blue_112}
& \igc{blue_111}
& \igc{blue_113}
& \igd{-0.2}{gate_H}
& \igc{green_112}
& \igc{green_111}
& \igc{green_113}
\\
\hline
Gate
& $X$
& $\sqrt{X}$ 
& $\sqrt{X}^{\dag}$ 
& $H$ 
& $Z$ 
& $S$
& $S^{\dag}$
\\
\hline
\end{tabular}
\end{center}

\begin{center}
\begin{tabular}{|c|c|c|c|c|c|}
\hline
$ZX$-diagram 
& \igd{-0.35}{gate_CX01}
& \igd{-0.35}{gate_CX10}
& \igd{-0.35}{gate_CZ}
& \igc{gate_CY}
& \igd{-0.35}{gate_SWAP}
\\
\hline
Gate
& $CX_{0,1}$ 
& $CX_{1,0}$ 
& $CZ$ 
& $CY_{0,1}$ 
& $SWAP$ 
\\
\hline
\end{tabular}
\end{center}

The definition of a blue or green \emph{spider} 
with $m$ outputs, $n$ inputs and labelled with \emph{phase} $a\in \Integer/4$ is
given by:
\begin{align*}
\igc{green_mn} \ &:= \ 
\ket{0}^{\tensor m}\bra{0}^{\tensor n} + e^{2\pi ia/4} \ket{1}^{\tensor m}\bra{1}^{\tensor n} 
\\
\igc{blue_mn} \ &:= \ 
\ket{+}^{\tensor m}\bra{+}^{\tensor n} + e^{2\pi ia/4} \ket{-}^{\tensor m}\bra{-}^{\tensor n} 
\end{align*}
When the phase is zero we usually omit the phase label.
ZX-diagrams without phase labels are called \emph{phase-free}.

The most important equations
for our purposes
allow us to commute the Pauli $X$ and $Z$ operators through a circuit.
When a phase $2$ operator meets a spider of the same colour,
it passes through freely:
\begin{equation*}
\begin{aligned}[c]
\igc{zx_bb_b2} &= \igd{-0.6}{zx_bb2_b} = \igc{zx_b2b_b} \\
\igc{zx_gg_g2} &= \igd{-0.6}{zx_gg2_g} = \igc{zx_g2g_g} \\
\end{aligned}
\hsp
\begin{aligned}[c]
\igc{zx_b_bb2} &= \igd{-0.6}{zx_b_b2b} = \igc{zx_b2_bb} \\
\igc{zx_g_gg2} &= \igd{-0.6}{zx_g_g2g} = \igc{zx_g2_gg} \\
\end{aligned}
\end{equation*}
When a phase $2$ operator meets a spider of the opposite colour,
it is copied:
\begin{equation*}
\begin{aligned}[c]
\igc{zx_gg_gb2} &= \igd{-0.55}{zx_b2b2gg_g} \\
\igc{zx_bb_bg2} &= \igd{-0.55}{zx_g2g2bb_b} \\
\end{aligned}
\hsp
\begin{aligned}[c]
\igd{-0.55}{zx_g_ggb2b2} &= \igc{zx_b2g_gg} \\
\igd{-0.55}{zx_b_bbg2g2} &= \igc{zx_g2b_bb} \\
\end{aligned}
\end{equation*}
Similarly,
states and effects get copied by spiders of the opposite colour:
\begin{equation*}
\begin{aligned}[c]
\igc{zx_bb_bg} &= \igc{zx_gg_} \\
\igc{zx_gg_gb} &= \igc{zx_bb_} \\
\end{aligned}
\hsp
\begin{aligned}[c]
\igc{zx__gg} &= \igc{zx_gb_bb} \\
\igc{zx__bb} &= \igc{zx_bg_gg} \\
\end{aligned}
\end{equation*}
Spider legs are flexible, and this is how we
justify the use of vertical wires in our $ZX$-diagrams. For example:
$$
 \igc{zx_gb_gb} := \igc{zx_b_bbgg_g} = \igc{zx_g_ggbb_b}
$$
Using these rules we commute a Pauli $Z$ operator on the control
qubit of a $CX$:
$$
\igc{zx_CX_g2} = \igc{zx_g2_CX}
$$
and a Pauli $X$ operator,
$$
\igc{zx_CX_b2_1} =
\igc{zx_CX_b2_2} = 
\igd{-0.5}{zx_CX_b2_3}
$$
This diagrammatic calculus is blue-green symmetric, and indeed,
the $ZX$-calculus has a \emph{Fourier duality} which implements
this colour reversal:
$$
\igc{zx_Hg2} = \igc{zx_b2H}\hsp
\igc{zx_Hb2} = \igc{zx_g2H}
$$
Now we can commute the Pauli $X$ operator through a $CZ$:
$$
\igc{zx_CZ_b2_1} =
\igc{zx_CZ_b2_2} = 
\igc{zx_CZ_b2_3} =
\igd{-0.5}{zx_CZ_b2_4}
$$

The Fourier duality plays a fundamental role in this work,
and we make use of this blue-green-yellow colour scheme consistently to
refer to this connection.

\subsection{Quantum codes and symplectic geometry}\label{sec:qcodes} 

In this section we recall basic facts about
qubit stabilizer codes, their relation to
symplectic geometry,
and the Clifford group~\cite{Calderbank1998,Haah2016}.

Given a field $\Field$, and integer $n\ge 0$,
we define the \emph{standard
$2n$-dimensional symplectic space} to be the $2n$-dimensional vector space
$\Field^{n}\oplus\Field^n$
with symplectic form 
$$
\Omega_n =
\begin{pmatrix}
0 & I_n \\
-I_n & 0
\end{pmatrix}
$$
where $I_n$ is the $n\times n$ identity matrix.
A vector $v\in \Field^{n}\oplus\Field^n$
is written is a block column matrix:
$$
v = \begin{pmatrix} v_X \\ v_Z \end{pmatrix}
$$
and we call $v_X$ the $X$ part of $v$ and $v_Z$ the $Z$ part of $v$.
Similarly for covectors and row matrices.

Given a vector $v$ in the vector space $\Field^n$ we define the \emph{weight} of $v$,
denoted $w(v)$,
to be the number of non-zero components of $v$.
For a vector $v$ in the standard symplectic space 
$\Field^{n}\oplus\Field^n$, we define the weight as
$$
w(v) = w(v_X) + w(v_Z) - w(v_X \cdot v_Z),
$$
where $v_X\cdot v_Z \in \Field^n$ is the componentwise product of $v_X$ and $v_Z$.

Let $\Field_2$ be the finite field of order $2$.
Much of the theory below can be developed for other finite fields,
but here we focus on the case $\Field_2$.
We define a \emph{quantum code} $C$ to be
an isotropic subspace $C\subset\Field_2^n\oplus\Field_2^n$
where $n\ge 0$ is an integer \emph{number of qubits}.
We also call $C$ the \emph{codespace}.
Given such a code, the \emph{logical space} is
the coisotropic subspace given by:
$$
C^{\perp} = \{v \in \Field_2^n\oplus\Field_2^n\ |\ v^{\top}\Omega_n C = 0\}
\supset C.
$$
The \emph{parameters} $[[n,k,d]]$ of a quantum code $C$ are:
$n$ the number of qubits, $k$ the dimension of $C^{\perp}/C$,
and $d$ the minimum weight of $v \in C^{\perp}$ with $v\notin C.$

Quantum codes $C$ can also be specified by 
a $m\times 2n$ \emph{(parity) check matrix} $\Parity$.
This is a full-rank matrix with 
$ \Parity\Omega_n \Parity^{\top} = 0. $
The codespace $C$ is the rowspan of $\Parity$,
and the logical space $C^\perp$ is the kernel (nullspace)
of the matrix $\Parity\Omega_n$.
We write such a check matrix in block form
$\Parity = \bigl( \Parity_X\ \Parity_Z \bigr)$ where $\Parity_X$ and $\Parity_Z$ are $m\times n$ matrices.
Expanding the isotropic condition 
we find the equivalent statement
\begin{align*}
\Parity_X \Parity_Z^\top - \Parity_Z\Parity_X^\top = 0.
\end{align*}
Given a quantum code $C\subset \Field_2^n\oplus\Field_2^n$
we say that $C$ is \emph{CSS} when we have
the direct sum decomposition $C=C_X\oplus C_Z$ with 
$C_X\subset\Field_2^n$
and
$C_Z\subset\Field_2^n$.
Equivalently, 
$C$ is CSS when it has a check matrix 
of the form
$$
\Parity = \bigl( \Parity_X\ \Parity_Z \bigr) = 
\begin{pmatrix}
\Parity'_X & 0 \\
0 & \Parity'_Z \\
\end{pmatrix}.
$$
In other words, $\Parity_X$ and $\Parity_Z$ can be written without any nonzero rows in common.

We will make use of \emph{Pauli operator} notation:
for a vector $v\in\Symp$, 
with components $(v_1,...,v_n, v_{n+1},...,v_{2n})$
we write this as a length $n$ string ($n$-tuple)
of symbols $I,X,Y,Z$ with $i$-th entry given by
$$
\left\{
\begin{array}{lll}
I & \text{if} & v_i=0, v_{n+i}=0, \\
X & \text{if} & v_i=1, v_{n+i}=0, \\
Z & \text{if} & v_i=0, v_{n+i}=1, \\
Y & \text{if} & v_i=1, v_{n+i}=1.
\end{array}
\right.
$$
We also use the dot $.$ in place of $I$.
For example, the vector $(1 0 1 1)\in \Field_2^2\oplus \Field_2^2$ 
has Pauli operator $YZ$.
The subspace of $\Symp$
spanned by $v_i, v_{n+i}$ is the \emph{$i$-th qubit}.
We declare two codes $C$ and $C'$ to be isomorphic $C\cong C'$
when they are the same up to permutation of qubits.

\begingroup
\renewcommand{\arraystretch}{0.6}
\setlength\arraycolsep{0pt}

For an example of a $[[4,1,2]]$ quantum code $C\subset \Field_2^4\oplus \Field_2^4$
we have the 
parity check matrix and corresponding Pauli operator notation:
$$
\Parity = 
\left(
\begin{array}{cccc;{1pt/0pt}cccc}
1&1&.&.&.&1&1&.\\.&1&1&.&.&.&1&1\\.&.&1&1&1&.&.&1
\end{array}
\right)
=
\begin{pmatrix}
{ X}&{ Y}&{ Z}&.\\
.&{ X}&{ Y}&{ Z}\\
{ Z}&.&{ X}&{ Y}\\
\end{pmatrix}.
$$
We have a vertical line separating the $\Parity_X$
and $\Parity_Z$ blocks,
and the dot notation is for zero or $I$ entries.
An example of a CSS code
$C\subset \Field_2^{8}\oplus \Field_2^{8}$ 
 with parameters $[[8,2,2]]$
is given by
$$
\Parity = 
\left(
\begin{array}{cccccccc;{1pt/0pt}cccccccc}
1&1&.&.&.&1&1&.&.&.&.&.&.&.&.&.\\
.&1&1&.&.&.&1&1&.&.&.&.&.&.&.&.\\
.&.&1&1&1&.&.&1&.&.&.&.&.&.&.&.\\
.&.&.&.&.&.&.&.&.&1&1&.&1&1&.&.\\
.&.&.&.&.&.&.&.&.&.&1&1&.&1&1&.\\
.&.&.&.&.&.&.&.&1&.&.&1&.&.&1&1
\end{array}
\right)
=
\begin{pmatrix}
{ X}&{ X}&.&.&.&{ X}&{ X}&.\\
.&{ X}&{ X}&.&.&.&{ X}&{ X}\\
.&.&{ X}&{ X}&{ X}&.&.&{ X}\\
.&{ Z}&{ Z}&.&{ Z}&{ Z}&.&.\\
.&.&{ Z}&{ Z}&.&{ Z}&{ Z}&.\\
{ Z}&.&.&{ Z}&.&.&{ Z}&{ Z}\\
\end{pmatrix}.
$$
These two examples are chosen for a reason:
the parity check matrix of the $[[8,2,2]]$
code contains two copies of the parity check matrix of
the $[[4,1,2]]$ code. 
This \emph{symplectic double} procedure is the subject of \S\ref{sec:sp} below.
\endgroup

\subsubsection{The qubit Clifford and Pauli groups}

\begingroup
\renewcommand{\arraystretch}{0.6}
\setlength\arraycolsep{4.8pt}

The $n$-qubit \emph{Pauli group},
also called the \emph{Heisenberg-Weyl group},
$\Pauli_2(n)$
is a subgroup of the unitary group $\Unitary(2^n)$ 
generated by $n$-fold tensor products of
the matrices
$$
iI = \begin{pmatrix}i & 0\\0 & i\end{pmatrix},\ \
X = \begin{pmatrix}0 & 1\\1 & 0\end{pmatrix},\ \
Z = \begin{pmatrix}1 & 0\\0 & -1\end{pmatrix}.
$$
This group has order given by
$$
|\Pauli_2(n)| = 4\cdot 4^{n}
$$
and the center $\Center(\Pauli_2(n))\cong \ZZ/4$
is generated by $i$.
The quotient $\Pauli_2(n)/\Center(\Pauli_2(n))$ is isomorphic to (the additive group of)
the $\Field_2$-vector space $\Field_2^{2n}.$
We write this as the short exact sequence:
$$
\ZZ/4 \rightarrowtail \Pauli_2(n) \twoheadrightarrow \Field_2^{2n}.
$$
The 2-cocycle for this central extension is a function
$\beta:\Field_2^{2n}\times\Field_2^{2n}\to \ZZ/4$
satisfying
$$\beta(v,w) \mod 2 = \langle v, w \rangle,$$
for all $v,w\in\Field_2^{2n}$.
Here we write $\langle v, w\rangle$
for the symplectic inner product on $\Field_2^{2n}$.
See \cite{Heinrich2021} \S 3.3.1.

The $n$-qubit Clifford group
can be defined to be the normalizer of $\Pauli_2(n)$ in the unitary group $\Unitary(2^n)$.
This is an infinite group, however for our purposes we
will be using the following finite subgroup
as our definition of 
the $n$-qubit \emph{Clifford group}.
This is generated from scalar and matrices,
$$
\omega,\ \ 
H = \frac{1}{\sqrt{2}}\begin{pmatrix}1 & 1\\1 & -1\end{pmatrix},\ \ 
S = \begin{pmatrix}1 & 0\\0 & i\end{pmatrix},\ \ 
CZ = \begin{pmatrix}1&0&0&0\\0&1&0&0\\0&0&1&0\\0&0&0&-1\end{pmatrix}
$$
using multiplication and tensor products.
For $n\ge 0$, this group is denoted $\Cliff_2(n)$
and has order given by
$$
|\Cliff_2(n)| = 8\prod_{j=1}^n 2(4^j - 1)4^j.
$$
This sequence begins as $8, 192, 92160, 743178240,...$ and is sequence A003956 in the OEIS.
These matrices have elements in the ring $\QQ[1^{1/8}].$
See \cite{Selinger2015}, Figure 8, for an abstract presentation of 
the Clifford group(s)
in terms of generators and relations.

The reference~\cite{Heinrich2021} uses a slightly different
definition of the Clifford group which is an index two subgroup of $\Cliff_2(n)$, \S 4.1.2.
\footnote{there's a typo in eq. (4.12)}
This is done by altering the definition of the Hadamard.
The generators are:
$$
i=\omega^2,\ \  \omega H=\frac{i+1}{2}\begin{pmatrix}1&1\\1&-1\end{pmatrix},
\ \  S,\ \  CZ.
$$
The generated matrices have elements in the ring $\QQ[i].$
We call this group the $n$-qubit \emph{semi-Clifford group}, denoted $\SCliff_2(n)$.
The OEIS notes that the order of these groups A027638
is also the order of a unitary group acting on Siegel modular forms.

The center $\Center(\Cliff_2(n))$ is isomorphic to $\ZZ/8$,
and we define the quotient group to be the \emph{affine symplectic group} over $\Field_2$:
$$
\ASp(2n,\Field_2) := \Cliff_2(n) / \mathcal{Z}(\Cliff_2(n)).
$$
Warning:
for $n>1$, this group is not (!) isomorphic to the
expected definition of the affine symplectic group which is the
semidirect product $\Sp(2n,\Field_2)\ltimes\Field_2^{2n}.$
This is a peculiar consequence of the dimension of 
our qubit space which is even. The story is much
simpler for odd prime-dimensional qudits.

Combining the above,
we have the following commutative diagram of group homomorphisms, where
every row and column is short exact:
$$
\begin{tikzcd}
\Field_2^{2n} \arrow[r, tail] & \ASp(2n, \Field_2) \arrow[r, two heads] & \Sp(2n, \Field_2) \\
\Pauli_2(n) \arrow[r, tail] \arrow[u, two heads] & \Cliff_2(n) \arrow[r, two heads] \arrow[u, two heads] & \Mp(2n, \Field_2) \arrow[u, two heads] \\
\ZZ/4 \arrow[r, tail]\arrow[u, tail] & \ZZ/8 \arrow[r, two heads]\arrow[u, tail] & \ZZ/2\arrow[u, tail]
\end{tikzcd}
$$
Here $\Mp(2n,\Field_2)$ is the metaplectic group over $\Field_2$.
This suggests that the Clifford group is, or should be called,
the affine metaplectic group.
See also \cite{Gurevich2012}.

In summary, the action of the Clifford group 
on the Pauli group by conjugation is,
up to phases, represented by the symplectic group.

\begin{figure*}[t]
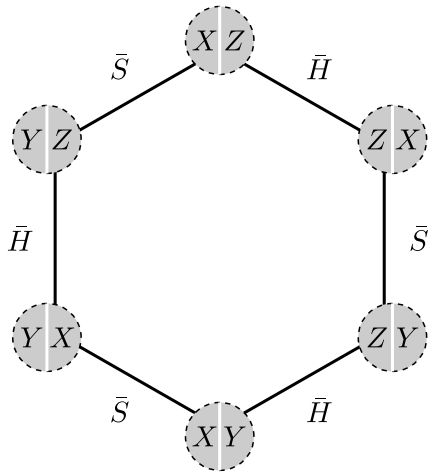

\centering
\igr{local_clifford}
\caption{
Here we record the action of single qubit Clifford operators
$S$ and $H$ on anti-commuting pairs of Pauli operators.
We notate the action via the barred operators $\bS$ and $\bH$
and we see that $\bS\bH\bS = \bH\bS\bH.$
}
\label{fig:lc}
\end{figure*}

If we omit the entangling gate $CZ$ from the list of
generators of the Clifford group, we get the
\emph{local Clifford group}.
As symplectic matrices, this is generated by
$n$-fold direct sums of 
the $\Field_2$ matrices
$$
\bH=\begin{pmatrix}0&1\\1&0\end{pmatrix},\ \ 
\bS=\begin{pmatrix}1&0\\1&1\end{pmatrix}.
$$
On a single qubit this group is $\Sp(2,\Field_2)$
which is isomorphic to the permutation group on
three elements $S_3$. See \Fref{fig:lc}.

\begin{lemma}
The $n$-qubit local Clifford group 
preserves the parameters $[[n,k,d]]$ of a quantum code.
\end{lemma}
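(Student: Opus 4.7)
The plan is to verify the three parameters separately, reducing each to a simple property of the action. Local Cliffords act as symplectic transformations on $\Symp$ that decompose as direct sums $g = g_1 \oplus \cdots \oplus g_n$ with each $g_i \in \Sp(2,\Field_2)$ acting on the $i$-th qubit subspace. Since $g$ is a symplectic automorphism, $n$ is preserved trivially, and $g$ carries isotropic subspaces to isotropic subspaces of the same dimension and preserves the symplectic complement, so $C \mapsto gC$ and $C^\perp \mapsto gC^\perp$. Therefore $\dim(gC^\perp/gC) = \dim(C^\perp/C) = k$, giving preservation of $k$.

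The only remaining content is preservation of $d$, which follows once we show that $w(gv) = w(v)$ for all $v \in \Symp$ and all local Clifford $g$. The key observation, which I would state explicitly, is that the weight formula $w(v) = w(v_X) + w(v_Z) - w(v_X \cdot v_Z)$ is equal to the cardinality of the \emph{support} of $v$, namely the set
\[
\mathrm{supp}(v) = \{\, i \in \{1,\dots,n\} : (v_i, v_{n+i}) \neq (0,0) \,\}.
\]
Indeed, for each $i$ the contribution to the right-hand side is $0$ if $(v_i,v_{n+i}) = (0,0)$ and $1$ otherwise, by a two-line case check on the four possible values in $\Field_2 \oplus \Field_2$.

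Now because $g = g_1 \oplus \cdots \oplus g_n$ acts block-diagonally on qubits, each $g_i$ is a linear bijection of $\Field_2^2$, so $g_i$ fixes $(0,0)$ and maps non-zero pairs to non-zero pairs. Hence $\mathrm{supp}(gv) = \mathrm{supp}(v)$, and so $w(gv) = w(v)$. Applying this to any $v \in C^\perp \setminus C$, the image $gv$ lies in $gC^\perp \setminus gC$ with the same weight, and conversely; taking the minimum of the weights yields $d(gC) = d(C)$.

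I do not expect any real obstacle here; the main subtlety is simply recognising that the somewhat opaque weight formula coincides with the support size, which is obviously invariant under any block-diagonal action by invertible maps. If one instead tried to verify weight preservation directly on the generators $\bS$ and $\bH$ acting on each qubit, one would reduce to checking that the orbits of $\Sp(2,\Field_2) \cong S_3$ on $\Field_2^2 \setminus \{0\}$ consist of all three non-zero vectors, which is another way to see the same fact.
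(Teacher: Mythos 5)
Your proof is correct and follows essentially the same route as the paper: invariance of $n$ and $k$ via the symplectic action, and invariance of $d$ via weight-preservation of the block-diagonal local-Clifford action. You simply fill in the details the paper leaves implicit — notably the observation that the weight formula computes the support size, which is the cleanest way to see why each $g_i \in \Sp(2,\Field_2)$ preserves weight.
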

\begin{proof}
The Clifford group preserves the parameters $n$ and $k$
of any quantum code $C$.
The local Clifford group preserves the weight of
any vector $v\in\Field_2^n\oplus\Field_2^n$ and in
particular will preserve the weight of any codeword in $C$, 
thereby preserving the parameter $d$.
\end{proof}

\endgroup

\section{The symplectic double}\label{sec:sp} 

Given a vector space $V$ over a field $\Field$,
we construct a
symplectic space $\Double(V) := V\oplus V^{\star}$ with symplectic
form:
\begin{align*}
    \Omega: \Double(V)\tensor \Double(V) &\to \Field \\
        (v\oplus c, u\oplus d) &\mapsto d(v) - c(u).
\end{align*}
Moreover, the assignment 
$$
\Double ( V ) = V\oplus V^{\star}
$$
is \emph{functorial}, 
which means that given invertible $f:V\to V$,
we have that 
\begin{align}\label{eq:double-f}
    \Double ( f  ) := f \oplus (f^{-1})^{\star}
\end{align}
is a symplectic map on $\Double(V)$:
\begin{align*}
    & \Omega( (f\oplus(f^{-1})^{\star})(v\oplus c), (f\oplus(f^{-1})^{\star})(u\oplus d) ) \\
    =\ & \Omega( f(v)\oplus cf^{-1}, f(u)\oplus df^{-1} ) \\
    =\ & d(v) - c(u)\\
    =\ & \Omega( v\oplus c, u\oplus d ).
\end{align*}
and also that $\Double(.)$ preserves composition.
In other words, we have a group homomorphism:
\begin{align}\label{eq:functorial}
\Double(.):\GL(V,\Field)\to \Sp(\Double(V),\Field)
\end{align}
and this homomorphism is injective.

When $V$ itself is symplectic, with symplectic form $\Omega_0$,
we have an isomorphism 
\begin{align*}
    V &\xrightarrow{\cong} V^\star \\
    v &\mapsto \Omega_0(v, \_).
\end{align*}
which we use in the definition of $\Double(.)$, as the following lemma shows.
This lemma is key to 
the construction of fault-tolerant Clifford gates
on doubled codes.
\begin{lemma}\label{lem:functorial}({\it Functoriality})
Given symplectic space $(V,\Omega_0)$ 
with $n$-dimensional isotropic subspace $U\subset V$
then $\Double(U) := U\oplus \Omega_0(U,\_)$
is a $2n$-dimensional isotropic subspace of $\Double(V)$.
Moreover, given a symplectic map $f:V\to V$ that
preserves $U$ as a subspace $f(U)=U$, then
$\Double(f)$ is a symplectic map that
preserves the subspace $\Double(U)$.
\end{lemma}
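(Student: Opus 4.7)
The plan is to unpack each clause and verify it by direct symplectic computation, with the key leverage coming from (a) the isotropy of $U$ in $(V,\Omega_0)$, and (b) the invariance $\Omega_0(f(x), f(y)) = \Omega_0(x, y)$ for symplectic $f$. Everything else is bookkeeping with the definitions of $\Double(V) = V \oplus V^\star$ and $\Double(f) = f \oplus (f^{-1})^\star$ already recorded in \S\ref{sec:sp}.

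First I would dispatch the dimension count. Since $(V, \Omega_0)$ is symplectic and finite-dimensional, $v \mapsto \Omega_0(v, \_)$ is a linear isomorphism $V \xrightarrow{\cong} V^\star$; restricting it to $U$ is still injective, so $\Omega_0(U, \_)$ has dimension $n$ in $V^\star$. Hence $\Double(U) = U \oplus \Omega_0(U, \_)$ is genuinely an internal direct sum of dimension $2n$ inside $\Double(V)$. Next, for isotropy, I would pair two generic elements $v \oplus \Omega_0(u, \_)$ and $v' \oplus \Omega_0(u', \_)$ of $\Double(U)$ (so $u,v,u',v' \in U$) under the form $\Omega$ defined at the top of \S\ref{sec:sp}. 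The definition collapses this pairing to $\Omega_0(u', v) - \Omega_0(u, v')$, and each summand vanishes because $U$ is isotropic in $V$.

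For the moreover clause, the fact that $\Double(f)$ is symplectic on $\Double(V)$ is already recorded before the lemma (the group homomorphism \Eref{eq:functorial}), so the only genuinely new task is preservation of the subspace $\Double(U)$. Using \Eref{eq:double-f}, a generic element $v \oplus \Omega_0(u, \_)$ is sent to $f(v) \oplus \bigl(\Omega_0(u, \_) \circ f^{-1}\bigr)$. The one step with any content is rewriting $\Omega_0(u, f^{-1}(w))$ as $\Omega_0(f(u), w)$, which is exactly the symplectic invariance of $f$ after the substitution $w \mapsto f^{-1}(w)$. Once this identification is made, the image is $f(v) \oplus \Omega_0(f(u), \_)$, and the hypothesis $f(U) = U$ places both $f(v)$ and $f(u)$ in $U$, so the image lies in $\Double(U)$.

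The main obstacle, really the only non-bookkeeping step, is exactly that identification $\Omega_0(u, f^{-1}(\_)) = \Omega_0(f(u), \_)$: it is where the hypothesis that $f$ is symplectic (rather than merely linear) is used in an essential way, and it is what distinguishes this lemma from the purely linear functoriality $\GL(V) \to \Sp(\Double(V))$ already established. Everything else follows by carefully tracking the isomorphism $V \cong V^\star$ induced by $\Omega_0$ and applying the isotropy of $U$ once at the end.
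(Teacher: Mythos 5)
Your proof is correct and follows the same direct-computation route as the paper's: pair two generic elements of $\Double(U)$ under $\Omega$, use isotropy of $U$ twice, and then push $\Double(f)$ through a generic element. One remark: your identification $(f^{-1})^\star\bigl(\Omega_0(u,\_)\bigr) = \Omega_0(u,f^{-1}(\_)) = \Omega_0(f(u),\_)$ is the precise form of the key step, whereas the paper's proof writes the image as $f(u)\oplus\Omega_0(f^{-1}(v),\_)$, which appears to be a slip (the correct second factor is $\Omega_0(f(v),\_)$); the paper's conclusion is unaffected only because $f(U)=U$ also forces $f^{-1}(U)=U$, but your version is the one that makes the use of symplecticity of $f$ explicit and correct.
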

\begin{proof}
Clearly $\Double(U)$ is a subspace of $\Double(V)$,
what remains to be shown is that $\Double(U)$ is isotropic.
With $u,v,u',v'\in U$ we have
generic elements of $\Double(U)$ given by 
        $u\oplus\Omega_0(v,\_)$ and 
        $u'\oplus\Omega_0(v',\_)$. 
The symplectic pairing
evaluated on these two elements is 
\begin{align*}
    & \Omega( 
        u\oplus\Omega_0(v,\_),
        u'\oplus\Omega_0(v',\_)) \\
    =\ & \Omega_0(v',u) - \Omega_0(v,u')\\
    =\ & 0-0 = 0
\end{align*}
and so $\Double(U)$ is isotropic.
Next, the action $\Double(f):
        u\oplus\Omega_0(v,\_)
\mapsto
        f(u)\oplus\Omega_0(f^{-1}(v),\_)
\in\Double(U)$
and so $\Double(f)$ preserves the subspace $\Double(U)$
when $f$ preserves the subspace $U$.
\end{proof}

Given a $m\times 2n$ check matrix 
$\Parity = \bigl( \Parity_X\ \Parity_Z \bigr)$ 
the \emph{doubled} check matrix $\Double(\Parity)$ is
a $2m\times 4n$ check matrix
\begin{align}\label{eq:dh}
\Double(\Parity) := 
\begin{pmatrix}
\Parity_X & \Parity_Z & 0 & 0 \\
0 & 0 & \Parity_Z & \Parity_X 
\end{pmatrix}.
\end{align}
By direct calculation we verify this is the check
matrix for a quantum code (isotropic subspace), as promised by 
the functoriality lemma:
\begin{align}\label{eq:dhsymp}
\Double(\Parity)\Omega_{2n} \Double(\Parity)^\top = 
\begin{pmatrix}
0 & \Parity\Omega_n \Parity^\top \\
\Parity\Omega_n \Parity^\top & 0
\end{pmatrix} = 0.
\end{align}

\begin{theorem}\label{th:double}
Given a quantum code $C$ with parameters $[[n,k,d]]$,
we have $\Double(C)$ a CSS quantum code with
parameters $[[2n,2k,\ge d]]$.
\end{theorem}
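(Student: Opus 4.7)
The plan is to verify three claims in turn: the CSS structure, the $[[2n,2k]]$ parameters, and the distance bound $d' \ge d$. The first two fall out of the form of $\Double(\Parity)$ in \eqref{eq:dh}; the substantive content lies in the distance argument.

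For CSS-ness I partition the $4n$ columns of $\Double(\Parity)$ into the first $2n$ (X-parts of the $2n$ doubled qubits) and the last $2n$ (Z-parts). With this partition the first $m$ rows of $\Double(\Parity)$ have zero Z-block (pure X-type stabilizers) and the last $m$ rows have zero X-block (pure Z-type), which is the CSS form. The qubit count $2n$ is immediate. For $2k$ logicals it suffices to show the $2m$ rows are linearly independent: rows from different blocks have disjoint column supports, and within each block independence reduces to $\Parity$ having rank $m$ (noting that $(\Parity_Z,\Parity_X)$ equals $\Parity$ up to a column permutation). So the rank is $2m$ and the doubled code has $2n-2m=2k$ encoded qubits.

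For the distance I write $v\in\Field_2^{4n}$ as a concatenation of four vectors $(v_1,v_2,v_3,v_4)$ in $\Field_2^n$, where $(v_1,v_2)$ is the X-part and $(v_3,v_4)$ the Z-part of $v$ on the $2n$ qubits. Expanding $\Double(\Parity)\Omega_{2n}v^\top=0$ using \eqref{eq:dh}, the symplectic orthogonality condition decouples into $(v_1,v_2)\in C^\perp$ and $(v_4,v_3)\in C^\perp$, where each pair is read as a vector in $\Field_2^{2n}$ with the indicated X- and Z-parts. A parallel computation with the rowspan of $\Double(\Parity)$ shows $v\in\Double(C)$ if and only if $(v_1,v_2)\in C$ and $(v_4,v_3)\in C$. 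Consequently $v$ represents a non-trivial logical of $\Double(C)$ exactly when at least one of $(v_1,v_2)$, $(v_4,v_3)$ is a non-trivial logical of the original code $C$, and therefore has weight at least $d$ there.

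The weight comparison concludes the proof. The qubit layout in \eqref{eq:dh} groups the $2n$ qubits of $\Double(C)$ into $n$ \emph{fibers}, the fiber over original qubit $i$ being $\{i,n+i\}$, on which $v$ acts as $X^{(v_1)_i}Z^{(v_3)_i}$ and $X^{(v_2)_i}Z^{(v_4)_i}$ respectively. Assuming $(v_1,v_2)\in C^\perp\setminus C$: for every index $i$ where this $n$-qubit operator is non-identity (that is, $(v_1)_i=1$ or $(v_2)_i=1$), at least one of qubits $i,n+i$ of $\Double(C)$ is also non-identity, so the fiber contributes at least $1$ to $w(v)$; summing gives $w(v)\ge w((v_1,v_2))\ge d$. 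The symmetric case $(v_4,v_3)\in C^\perp\setminus C$ works identically. The main obstacle I expect is bookkeeping the X/Z swap in the dictionary between $\Double(C)$-coordinates and the two projections into logical operators of $C$; once that is settled both the orthogonality calculation and the pointwise weight comparison are routine.
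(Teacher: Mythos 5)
Your proof is correct and follows the same route as the paper's: read off the CSS structure and the parameters from the block form of $\Double(\Parity)$, then observe that a doubled logical $v=(v_1,v_2,v_3,v_4)\in\Double(C)^\perp$ projects to two base-code logicals $(v_1,v_2)$ and $(v_4,v_3)$ in $C^\perp$, at least one of which is outside $C$, and finish with the per-fiber weight comparison $w(v)\ge w((v_1,v_2))$. One remark worth making: your weight step is actually more careful than the paper's, whose printed inequality ``$w(v)\ge w(v_X)+w(v_Z)$'' fails as literally written when $v$ has $Y$ support (already for $n=1$, $v=(1,0,1,0)$ has $w(v)=1$ but both projections have weight $1$); the per-fiber argument you give correctly yields $w(v)\ge\max\bigl(w((v_1,v_2)),\,w((v_4,v_3))\bigr)\ge d$, which is what the paper evidently intends.
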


\begin{proof}
By the functoriality lemma \ref{lem:functorial}, 
we see that $\Double(C)$ is a $2n$ qubit quantum code. 
A check matrix for the codespace $\Double(C)=C\oplus \Omega_n(C,\_)$
is given by \Eref{eq:dh}, which has CSS form.
Next, we examine logical operators $v\in\Double(C)^\perp$.
Both $v_X$ and $\Omega_n v_Z$ are in $C^\perp$,
and $w(v)\ge w(v_X) + w(v_Z)$, which is lower bounded by $d$
because one or both of $v_X,\Omega_n v_Z$ are not in $C$.
\end{proof}

A closely related result is the following fault-tolerant property of
fiber transversal Clifford gates.
\begin{theorem}\label{th:fault-tolerant}
Given a quantum code $C$ with parameters $[[n,k,d]]$ 
a logical Clifford on $\Double(C)$ that acts on each fiber separately
is fault-tolerant up to distance $d$.
\end{theorem}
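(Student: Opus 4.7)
The plan is to leverage the fiber-locality of the gate together with the weight argument from the proof of Theorem~\ref{th:double}. A fiber transversal Clifford decomposes as $U = \bigotimes_{j=1}^n U_j$, where $U_j$ is a two-qubit Clifford acting on the fiber over base qubit $j$. For any Pauli error $E$ on $\Double(C)$, conjugation gives $U E U^\dagger = \bigotimes_j U_j E_j U_j^\dagger$, supported on exactly the fibers where $E$ is non-identity. Hence $t$ single-qubit faults in the gate produce, after propagation, a Pauli error whose support lies in at most $t$ fibers.

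Next I would reduce the claim to the following lemma: any Pauli $P \in \Double(C)^\perp$ whose support meets fewer than $d$ fibers must already lie in $\Double(C)$. Granting this, any error arising from $t < d$ faults either anti-commutes with some stabilizer (hence is syndrome-detectable) or is itself a stabilizer (trivial on the codespace); the gate therefore tolerates fewer than $d$ faults, matching the distance of the base code $C$.

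To prove the lemma I would port the argument from Theorem~\ref{th:double}. Split $P$ according to the decomposition $\Double(V) = V \oplus V^\star$ into a $V$-part $v_X$ and a $V^\star$-part $v_Z$, using the identification $V^\star \cong V$ supplied by the symplectic form. As in that proof, both $v_X$ and $\Omega_n v_Z$ lie in $C^\perp$. If $P$ is supported on at most $t$ fibers, then each fiber contributes at most one qubit to each copy, giving $w(v_X) \leq t$ and $w(v_Z) \leq t$. For $t < d$ the distance of $C$ forbids either from being a non-trivial coset representative in $C^\perp / C$, so $v_X, \Omega_n v_Z \in C$ and therefore $P \in \Double(C)$.

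The main obstacle is the weight bookkeeping in the final step. A naive bound gives at most $2t$ qubits of support in $\Double(C)$ for an error on $t$ fibers, which does not suffice since $\Double(C)$ a priori only has distance $\geq d$. The needed refinement is the fiber-by-fiber decoupling into $V$- and $V^\star$-contributions built into the doubled check matrix~\Eref{eq:dh}, which lets us bound $w(v_X)$ and $w(v_Z)$ each by $t$ individually rather than bounding just their sum $w(v_X) + w(v_Z)$ by $2t$.
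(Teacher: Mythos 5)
Your proof is correct, but it takes a genuinely different route from the paper's. The paper's proof (Appendix~\ref{sec:proof}) is \emph{syndrome-based}: it establishes an explicit correspondence between syndromes of faults on the base code and syndromes of faults on the doubled code --- specifically, a $Z$-type fault $f'_Z = (f_Z,\, f_X)^\top$ on $\Double(C)$ reproduces the base-code syndrome $S_f = \Parity_X f_Z + \Parity_Z f_X$ --- and then lifts a base decoder to a decoder on $\Double(C)$, enumerating the $X$-, $Y$- and $Z$-type faults arising from the two-qubit fiber gates. Your proof instead isolates a cleaner abstraction: a \emph{fiber-distance lemma} stating that any element of $\Double(C)^\perp$ supported on fewer than $d$ fibers already lies in $\Double(C)$, deduced from the fact that both $v_X$ and $\Omega_n v_Z$ live in $C^\perp$ and are each supported on no more fibers than $P$ is. Fault-tolerance then follows by the standard abstract argument, since fiber transversality confines the propagated error to the faulty fibers. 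Your version is more conceptual and less case-heavy; the paper's version is more constructive, in that it actually exhibits the lifted decoder, and it sits closer to the explicit syndrome bookkeeping used in the rest of the paper.

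One observation worth flagging: you note that the naive bound of $2t$ physical qubits ``does not suffice since $\Double(C)$ a priori only has distance $\geq d$,'' but in fact $2t \le d-1 < d$, so the naive bound does handle a single error; where the fiber refinement becomes genuinely necessary is in the correctability argument, where one compares two candidate errors $E_1 E_2^\dagger$ from $\le t$ faults each, whose combined support can span $2t$ fibers (up to $4t$ qubits). The per-component bounds $w(v_X), w(v_Z) < d$ are what make that comparison go through, so the refinement you identify is the right one --- the motivation you give for it is just slightly off. Likewise, your phrasing ``tolerates fewer than $d$ faults'' is looser than the theorem's intent; the same fiber lemma gives correctability up to $\lfloor (d-1)/2 \rfloor$ faults via the usual argument, which is the operative content of ``fault-tolerant up to distance $d$.''
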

\begin{proof}
This is very similar to the proof of Theorem \ref{th:double}.
See Appendix \ref{sec:proof}.
\end{proof}

\begin{claim}\label{th:double-css}
A quantum code $C$ is a CSS code iff 
$\Double(C) = C\oplus H(C).$ 
\end{claim}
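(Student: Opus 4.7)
The plan is to express both $\Double(C)$ and $C\oplus H(C)$ by parity check matrices on $2n$ qubits and compare their rowspans directly. Label the qubits $1,\dots,n$ as the first copy and $n+1,\dots,2n$ as the second. The $m$ stabilizers of $C$ on the first copy give rows of the form $(\Parity_X \mid 0 \mid \Parity_Z \mid 0)$, and the stabilizers of $H(C)$ on the second copy, obtained by swapping the $X$ and $Z$ parts, give rows $(0 \mid \Parity_Z \mid 0 \mid \Parity_X)$. Stacking yields
$$
\Parity^{C\oplus H(C)} = \begin{pmatrix} \Parity_X & 0 & \Parity_Z & 0 \\ 0 & \Parity_Z & 0 & \Parity_X \end{pmatrix},
$$
which I will compare with $\Double(\Parity)$ from \Eref{eq:dh}. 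The two matrices agree in their first and fourth column blocks but differ by a transposition of $\Parity_Z$ between the middle two blocks.

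For the ``if'' direction, I assume $C$ is CSS so that the nonzero rows of $\Parity_X$ and $\Parity_Z$ are disjoint, i.e.\ $\Parity$ splits as a block diagonal $(\Parity'_X,\Parity'_Z)$. Substituting into $\Double(\Parity)$, the rows break into four groups: pure $X$ on the first copy with pattern $\Parity'_X$, pure $X$ on the second copy with pattern $\Parity'_Z$, pure $Z$ on the first copy with pattern $\Parity'_Z$, and pure $Z$ on the second copy with pattern $\Parity'_X$. The same block substitution turns $\Parity^{C\oplus H(C)}$ into exactly the same four groups, since $H$ transversally swaps $X$ and $Z$ stabilizers on the second copy. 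Hence the two rowspans coincide.

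For the ``only if'' direction, I assume $\Double(C)=C\oplus H(C)$ and let $(g_X\mid g_Z)$ be any row of a check matrix for $C$. The corresponding top-block row of $\Double(\Parity)$ encodes the Pauli $X(g_X)\otimes X(g_Z)$ supported on the $X$-parts of both copies. Since this operator lies in $C\oplus H(C)$ and elements of $C\oplus H(C)$ have independent restrictions to the two copies, the first-copy restriction $X(g_X)$ must be a stabilizer of $C$ and the second-copy restriction $X(g_Z)$ must be a stabilizer of $H(C)$. Translating back into vector form, both $(g_X,0)$ and $(0,g_Z)$ lie in $C$, so every generator of $C$ splits as a sum of a pure $X$ and a pure $Z$ codeword, yielding a CSS generating set.

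The main obstacle is bookkeeping: the $4n$ columns of $\Double(\Parity)$ are ordered as two $X$-qubit blocks followed by two $Z$-qubit blocks, while $\Parity^{C\oplus H(C)}$ is naturally organized copy-by-copy. Once this column identification is fixed, the proof reduces to routine linear algebra over $\Field_2$, with no analytic or combinatorial subtleties, and the key conceptual input is simply that $C\oplus H(C)$ carries no stabilizers whose Pauli support straddles the two copies.
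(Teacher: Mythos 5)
The paper labels this statement a \emph{Claim} and gives no proof, so there is no argument to compare against; you have supplied one, and it is correct. You fix a single column ordering on the $4n$ columns (first the $X$-parts of both copies, then the $Z$-parts), so $\Double(\Parity)$ and the check matrix for $C\oplus H(C)$ are genuinely comparable. In the forward direction, after substituting the CSS block form $\Parity=\bigl(\begin{smallmatrix}\Parity'_X & 0\\ 0 & \Parity'_Z\end{smallmatrix}\bigr)$, the two $2m\times 4n$ matrices differ only by a row permutation, so their rowspans coincide. For the converse, a top-block row $(g_X\,|\,g_Z\,|\,0\,|\,0)$ of $\Double(\Parity)$ lies in $C\oplus H(C)$, and the direct-sum structure of $C\oplus H(C)$ (every element is uniquely a pair with first-copy component in $C$ and second-copy component in $H(C)$) forces $(g_X,0)\in C$ and $(g_Z,0)\in H(C)$, hence $(0,g_Z)\in C$, so every generator of $C$ splits into pure-$X$ and pure-$Z$ codewords and $C$ is CSS. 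A marginally quicker converse is available if you want it: Theorem~\ref{th:double} already shows $\Double(C)$ is CSS for every $C$, so the hypothesis makes $C\oplus H(C)$ CSS, and essentially the same restriction argument shows a direct sum of codes is CSS only if each summand is; but this is the same idea in different packaging, so your self-contained version is fine.
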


We now give two operational characterizations of
the CSS codes that are the double of some other code.
The first characterization relies on the following definition.
Given an $n$ qubit CSS code $C=C_X\oplus C_Z$, 
a \emph{$ZX$-duality} on $C$
is a permutation $\tau:n\to n$ such that
$\tau(C_X)=C_Z, \tau(C_Z)=C_X$ where the action of
$\tau$ on subspaces of $\Field_2^n$ is by permuting coordinates.%
\footnote{
This is slightly more general than the definition in \cite{Breuckmann2022}.
}
The second characterization is in terms of concatenation
with a CSS code with stabilizers $XXXX,ZZZZ$, logicals $XXII,ZIZI,XIXI,ZZII$.
We call this \emph{the $[[4,2,2]]$} code, even though there are other CSS codes
with these parameters.

\begin{figure*}[t]
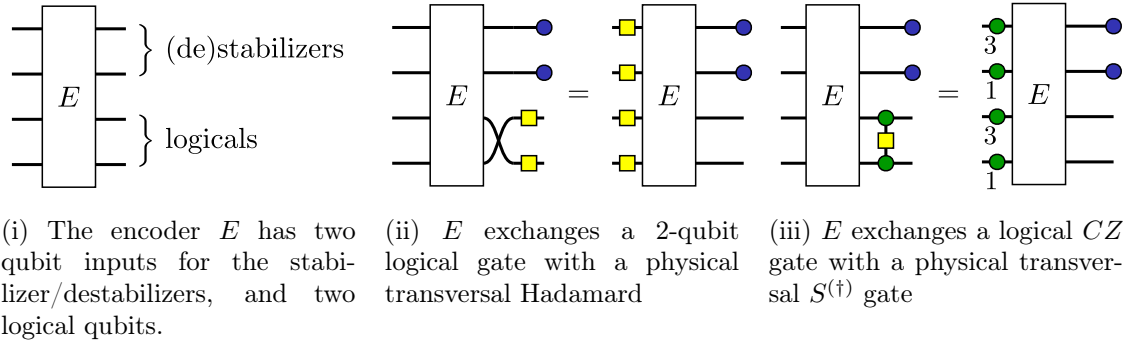

\centering
\begin{subfigure}[t]{0.3\textwidth}
\igr{422_encode}
\caption{The encoder $E$ has two qubit inputs for the
stabilizer/destabilizers, and two logical qubits.}
\label{fig:422-encode}
\end{subfigure}
~
\begin{subfigure}[t]{0.3\textwidth}
\!\igr{422_ZX}
\caption{$E$ exchanges a 2-qubit logical gate
with a physical transversal Hadamard}
\label{fig:422-ZX}
\end{subfigure}
~
\begin{subfigure}[t]{0.3\textwidth}
\igr{422_CZ}
\caption{$E$ exchanges a logical $CZ$ gate
with a physical transversal $S^{(\dag)}$ gate}
\label{fig:422-CZ}
\end{subfigure}
\caption{We show an encoding unitary $E$ for the $[[4,2,2]]$
code as a circuit diagram, which flows 
in the algebraic direction, from right to left.}
\label{fig:422}
\end{figure*}

\begin{theorem}\label{th:unwrap}
Given a CSS code $C$ on $2n$ qubits,
the following are equivalent:

(1) $C$ has a fixed-point-free involutory $ZX$-duality,

(2) $C=\Double(C')$ for some $n$ qubit
quantum code $C'$, and

(3) there is a concatenation
of $C$ with $n$ copies of the $[[4,2,2]]$
code that is self-dual.
\end{theorem}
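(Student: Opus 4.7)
The plan is to split the three-way equivalence into (1)$\Leftrightarrow$(2) by direct inspection of the doubled check matrix \eqref{eq:dh}, and (2)$\Leftrightarrow$(3) by tracking how a transversal Hadamard passes through the $[[4,2,2]]$ encoder of Figure \ref{fig:422}.

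For (2)$\Rightarrow$(1): the block structure of \eqref{eq:dh} displays $C_X$ as the row span of $(\Parity_X\ \Parity_Z)$ and $C_Z$ as the row span of $(\Parity_Z\ \Parity_X)$, so the qubit permutation $\tau$ that swaps $i$ with $i+n$ is manifestly a fixed-point-free involutory $ZX$-duality. For the converse (1)$\Rightarrow$(2): any fixed-point-free involution on $2n$ points is, after a qubit relabeling, the swap $i\leftrightarrow i+n$, so I may write a generating matrix of $C_X$ in block form $(A\ B)$ with $A,B$ each of size $m\times n$; the duality then forces $C_Z$ to have generating matrix $(B\ A)$. The CSS cross-commutation collapses to $AB^{\top}+BA^{\top}=0$, which is exactly the isotropy condition for the $n$-qubit check matrix $(A\ B)$. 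Letting $C'$ be the code it defines, \eqref{eq:dh} recovers $C=\Double(C')$.

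The equivalence (2)$\Leftrightarrow$(3) hinges on the identity depicted in Figure \ref{fig:422-ZX}: transversal Hadamard on the four physical qubits of the $[[4,2,2]]$ code corresponds, through the encoder $E$, to the logical gate $\mathrm{SWAP}\cdot(H\otimes H)$ on the two encoded qubits. This follows directly from the stated logicals $\bar X_1=XXII$, $\bar Z_1=ZIZI$, $\bar X_2=XIXI$, $\bar Z_2=ZZII$: transversal $H$ sends $XXII\leftrightarrow ZZII$ and $ZIZI\leftrightarrow XIXI$, i.e.\ $\bar X_1\leftrightarrow \bar Z_2$ and $\bar Z_1\leftrightarrow \bar X_2$, which is precisely this composite; the inner stabilizers $XXXX,ZZZZ$ are exchanged and hence set-preserved. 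Now concatenate $C$ with one inner $[[4,2,2]]$ block per pair of outer qubits, using a fixed-point-free involutory pairing $\pi$ of the $2n$ qubits of $C$. On the resulting $4n$-qubit CSS code, transversal Hadamard pulls back through the encoders to the logical action $\sigma_\pi\circ H^{\otimes 2n}$ on $C$, where $\sigma_\pi$ is the qubit permutation induced by $\pi$. The concatenated code is self-dual precisely when transversal $H$ preserves it, which in turn happens iff $\sigma_\pi\circ H^{\otimes 2n}$ preserves $C$, iff $\sigma_\pi$ sends $C_X$ to $C_Z$, i.e.\ iff $\sigma_\pi$ is a $ZX$-duality of $C$. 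Since $\pi$ is a fixed-point-free involution, this is exactly (1), and (1)$\Leftrightarrow$(2) then closes the loop.

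The main obstacle I anticipate is making the diagrammatic identity rigorous: one has to verify that the physical transversal Hadamard on the concatenated code factors through the encoders as $\sigma_\pi\circ H^{\otimes 2n}$ with no residual Pauli or phase corrections that would spoil the self-dual conclusion. Once this is pinned down (either via the $ZX$-calculus rewrite rules of \S\ref{sec:zx} applied to Figure \ref{fig:422}, or by explicit stabilizer-tableau tracking of a chosen encoder $E$), the remainder of the argument is routine block-matrix bookkeeping and dimension counting.
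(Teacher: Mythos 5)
Your proposal follows essentially the same route as the paper: in the $(1)\Leftrightarrow(2)$ direction, both arguments normalize the involution to $i\leftrightarrow i+n$ and read off the doubled block structure of \eqref{eq:dh}; the paper phrases the basis-change step via an invertible matrix $A$ satisfying $\tau H_Z^\top = H_X^\top A$, whereas you fix a generating matrix $(A\ B)$ for $C_X$, let the duality produce $(B\ A)$ for $C_Z$, and observe that CSS cross-commutation $AB^\top + BA^\top = 0$ is exactly the isotropy condition for the base check matrix $(A\ B)$ --- a slightly more concrete version of the same computation. For the $(3)$ direction you correctly identify the same key identity as the paper (Fig.~\ref{fig:422-ZX}: transversal $H$ on the $[[4,2,2]]$ pulls through its encoder to $\mathrm{SWAP}\cdot(H\otimes H)$), and you spell out the resulting ``self-dual iff $\sigma_\pi\circ H^{\otimes 2n}$ preserves $C$ iff $\sigma_\pi$ is a $ZX$-duality'' chain more explicitly than the paper's one-line sketch, which is a useful elaboration; you are also right that the one detail worth pinning down is the absence of residual Pauli/phase corrections in the encoder commutation, which a stabilizer-tableau check of the chosen $E$ settles. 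Minor nit: you announce this last step as ``$(2)\Leftrightarrow(3)$'' but actually prove ``$(1)\Leftrightarrow(3)$'' and then invoke $(1)\Leftrightarrow(2)$, which is fine logically (and matches the paper) but should be labelled consistently.
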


\begin{proof}
\emph{(1)=>(2)}
Let $\tau:2n\to 2n$ be a fixed-point-free involutory $ZX$-duality on $C$.
This means that the orbits of $\tau$ have size two. 
Without loss of generality
we can assume these orbits are of the form $\{i,i+n\}_{i=1,...,n}$.
Let the check matrix for $C$ be given by
the $2m\times 4n$ matrix
$$
\Parity = 
\begin{pmatrix}
\Parity_X & 0 \\
0 & \Parity_Z \\
\end{pmatrix}
$$
We see that $\tau H_Z^\top = H_X^\top A$
where $A$ is an invertible $m\times m$ matrix.
Therefore, we have
$$
\begin{pmatrix}
A^\top\Parity_X & 0 \\
0 & \Parity_Z \\
\end{pmatrix}
$$
is in the form of a doubled check matrix \Eref{eq:dh}.

\emph{(2)=>(1)} The converse direction follows because 
a doubled check matrix always has the above $\tau$ a
(fixed-point-free involutory) $ZX$-duality.

\emph{(1)<=>(3)} 
Concatenation corresponds to composing encoding circuits.
The two qubit orbits of $\tau$ correspond to the pairs
along which we concatenate with the $[[4,2,2]]$ code.
The $[[4,2,2]]$ encoder satisfies the identity implementing a $ZX$-duality.
See \Fref{fig:422-ZX}.
\end{proof}

A stronger statement can be made: there is a bijection
between CSS codes $C$ with fixed-point-free involutory $ZX$-duality $\tau$,
and codes $C'$ such that $C\cong \Double(C')$.
In other words, there can be distinct codes $C'$ and $C''$ that
double to isomorphic codes $\Double(C')\cong\Double(C'')$.
We will see an example of this
in \S\ref{sec:xzzx} and \Fref{fig:ten-two-three}.

Given any of the conditions in Theorem \ref{th:unwrap}
together with a condition on the $X$-type stabilizers,
we have from \cite{Breuckmann2022} Theorem 7,
that $C$ will have a transversal $CZ$ gate.
Condition (3) is a generalization of the well-known construction
of the \{4,8,8\} colour code by concatenating 
the $[[4,2,2]]$ code with two copies of a toric
code paired along a $ZX$-duality \cite{Criger2016}.
We write this concatenation along a $ZX$-duality
$\tau$ as $[[4,2,2]]\otimes_{\tau}C$.

\begin{claim}
Given a quantum code $C$ on $n$ qubits
the following are equivalent: \\
(1) $\Double(C)$ has a fiber transversal CZ gate \\
(2) C has a basis $\{v_i\}$ such that the parity of $Y$'s in each $v_i$ is even \\
(3) $[[4,2,2]]\otimes_\tau\Double(C)$ has a transversal $S^{(\dag)}$ gate.
\end{claim}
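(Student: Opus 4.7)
The plan is to prove $(1)\Leftrightarrow(2)$ by a direct Pauli-phase computation and then $(1)\Leftrightarrow(3)$ by invoking the $[[4,2,2]]$-encoder identity of \Fref{fig:422-CZ}. First, introduce $q(v):=\langle v_X,v_Z\rangle$, which reads off the parity of $Y$'s in the Pauli associated to $v$. A short calculation gives $q(v+w)=q(v)+q(w)+\Omega_n(v,w)$, and since $C$ is isotropic the last term vanishes on $C$. Therefore $q$ is \emph{linear} as a map $C\to\Field_2$, and (2) is equivalent to the basis-free condition that $q\equiv 0$ on all of $C$.

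For $(1)\Leftrightarrow(2)$, I track fiber-transversal $CZ$ on the generators of $\Double(\Parity)$. The second-type rows $(0|0|\Parity_{Z,i}|\Parity_{X,i})$ are pure $Z$-type, hence fixed. For the first-type rows $(\Parity_{X,i}|\Parity_{Z,i}|0|0)$, the symplectic image under fiber-transversal $CZ$ is $(\Parity_{X,i}|\Parity_{Z,i}|\Parity_{Z,i}|\Parity_{X,i})$, which equals the sum of the $i$-th first-type and $i$-th second-type rows and so always lies in the row-span of $\Double(\Parity)$. The subtlety is the sign: conjugating each fiber pair $(j,j+n)$ contributes no phase to the image operator (one checks $CZ(A\otimes B)CZ$ for $A,B\in\{I,X\}$ all carry phase $+1$), but forming the \emph{operator} product of the corresponding first- and second-type rows picks up $XZ=-iY$ on every site where $(\Parity_{X,i,j},\Parity_{Z,i,j})=(1,1)$ — once on fiber qubit $j$ and once on its sibling $j+n$ — accumulating a factor $(-i)^2=-1$ per such $Y$-site of $v_i$. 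The overall phase discrepancy is thus $(-1)^{q(v_i)}$, so fiber-transversal $CZ$ preserves the stabilizer group iff $q(v_i)=0$ for each generator, iff $q\equiv 0$ on $C$, iff (2).

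For $(1)\Leftrightarrow(3)$ I appeal directly to the encoder identity of \Fref{fig:422-CZ}: the $[[4,2,2]]$ encoder $E$ intertwines a physical transversal $S^{(\dag)}$ on its four data qubits with a logical $CZ$ between its two logical qubits. Concatenation along the duality $\tau$ places one $[[4,2,2]]$ block on each fiber pair of $\Double(C)$, so a physical transversal $S^{(\dag)}$ on the $4n$ outer qubits of $[[4,2,2]]\otimes_\tau\Double(C)$ pulls back through the $n$ encoders to exactly fiber-transversal $CZ$ on the $2n$ qubits of $\Double(C)$. A transversal Clifford on the concatenation preserves its stabilizer iff the induced logical gate preserves the inner stabilizer, giving $(3)\Leftrightarrow(1)$.

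The main technical obstacle is the sign-bookkeeping in the middle step: the \emph{symplectic} image of every first-type generator already lies in the row-span of $\Double(\Parity)$, so a naive analysis would wrongly conclude that (1) holds unconditionally. It is the $(-i)^2$ contribution coming from the two sibling qubits of each doubled $Y$-site that ties validity of fiber-transversal $CZ$ to the $Y$-parity structure encoded by $q$, and hence to condition (2).
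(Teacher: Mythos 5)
The statement you are proving is labelled a \emph{Claim} in the paper, and the authors explicitly say that Claims ``can be taken to be either unproven theorems, or conjectures'' — so there is no proof in the paper to compare against, and your argument fills a genuine gap.

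That said, your argument is correct and well organized. Two things are worth affirming explicitly, since they are the load-bearing steps. First, the quadratic-to-linear collapse: you correctly compute $q(v+w)=q(v)+q(w)+\Omega_n(v,w)$, so that $q=\langle v_X,v_Z\rangle$ is \emph{linear on $C$ precisely because $C$ is isotropic}; this is what makes condition (2) basis-independent and equivalent to $q\equiv 0$. Second, the sign bookkeeping: the symplectic image of every first-type generator $(\Parity_{X,i}\,|\,\Parity_{Z,i}\,|\,0\,|\,0)$ under fiber-transversal $CZ$ is already the modulo-2 sum of the $i$-th first- and second-type rows, so a purely $\Field_2$ analysis would incorrectly conclude (1) is always true. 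The actual obstruction is the phase: on a fiber pair with a $Y$-site of $v_i$ you get $CZ(X\otimes X)CZ = Y\otimes Y$, while the operator product $s_i^{(1)}s_i^{(2)}$ carries $(XZ)\otimes(XZ) = (-iY)\otimes(-iY) = -Y\otimes Y$ there, giving exactly the $(-1)^{q(v_i)}$ discrepancy you identify. Since $-I$ is not in any valid stabilizer group, fiber-transversal $CZ$ preserves $\mathcal{S}$ iff $q(v_i)=0$ for all generators, i.e. iff $q\equiv 0$ on $C$.

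For $(1)\Leftrightarrow(3)$ your appeal to the encoder identity of \Fref{fig:422-CZ} is the right move; one tiny point worth making explicit is that transversal $S^{(\dagger)}$ automatically preserves the outer $[[4,2,2]]$ block stabilizers (that identity is exactly the statement that transversal $S^{(\dagger)}$ is a logical gate, namely $CZ$, on the $[[4,2,2]]$ code), so the only constraint left is on the encoded inner stabilizers, which reduces cleanly to (1). If the encoder identity in the figure is only up to a logical Pauli correction, the equivalence still holds since conjugation by a logical Pauli fixes the inner stabilizer group. Overall, this is a complete and correct proof of the Claim.
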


\subsection{Lifting Cliffords}\label{sec:func} 

Recall that 
Cliffords in the phase-free $ZX$-calculus are generated by 
CX gates~\cite{Kissinger2022}.
The next theorem is a consequence of the functoriality of $\Double(\ )$.

\begin{theorem}\label{th:lift}
The injective group homomorphism 
$\Double:\Sp(2n,\Field_2) \to \Sp(4n,\Field_2)$
lifts to a homomorphism 
$\Double':\Sp(2n,\Field_2) \to \Cliff(2n)$
whose image is given by Clifford unitary gates in
the phase-free $ZX$-calculus with fixed-point-free involutory $ZX$-duality:
$$
\begin{tikzcd}
       & \Cliff(2n) \arrow[d, twoheadrightarrow] \\
\Sp(2n,\Field_2) \arrow[r, rightarrowtail, "\Double"]   
    \arrow[ur, rightarrowtail, "\Double'"]
    & \Sp(4n,\Field_2) 
\end{tikzcd}
$$
\end{theorem}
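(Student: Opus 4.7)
The plan is to construct $\Double'$ on a generating set of $\Sp(2n,\Field_2)$ by exhibiting explicit phase-free $ZX$ circuits that realise each doubled symplectic matrix, and then to extend this assignment to a group homomorphism by invoking faithfulness of the symplectic representation on the phase-free $ZX$ fragment of $\Cliff(2n)$.

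First, I would recall that $\Sp(2n,\Field_2)$ is generated by the local Clifford matrices $\bar{H}_i, \bar{S}_i$ together with the entangling symplectic maps corresponding to $CX_{i,j}$. For each such generator $g$, I would compute $\Double(g)=g\oplus (g^{-1})^{\star}$ on $V\oplus V^{\star}\cong\Field_2^{4n}$ per Lemma~\ref{lem:functorial}, then exhibit an explicit circuit on the $2n$ doubled qubits that realises this symplectic action in the phase-free $ZX$ calculus. The natural assignments I would write down are: $\Double(\bar{H}_i)$ as the SWAP of the fiber qubits $\{i,n+i\}$, $\Double(\bar{S}_i)$ as a single $CX$ within that fiber, and $\Double(CX_{i,j})$ as the product $CX_{i,j}\cdot CX_{n+j,n+i}$, where the second factor encodes the contragredient action on the dual factor. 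Each such circuit is manifestly phase-free, fixing $\Double'(g)$ without any $\omega$-ambiguity.

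To promote this generator-wise assignment to an honest group homomorphism, I would use that the phase-free $ZX$ Cliffords are generated by $CX$ gates per~\cite{Kissinger2022}, so this fragment projects injectively into $\Sp(4n,\Field_2)$ once the unitary normalization is fixed. Because $\Double$ is itself a group homomorphism and its image lies in this injective subgroup, the lift along the injection forces $\Double'$ to be a group homomorphism that makes the stated triangle commute. The inclusion $\mathrm{Image}(\Double')$ inside the claimed set of phase-free $ZX$ Cliffords with fixed-point-free involutory $ZX$-duality $\tau:i\leftrightarrow n+i$ is then immediate by construction, since each generator lift respects the fiber swap.

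For the reverse inclusion, I would appeal to Theorem~\ref{th:unwrap}: given a phase-free $ZX$ Clifford $U$ preserving the $\tau$-duality, its action on Pauli operators determines a CSS structure whose $\tau$-symmetry forces it, by that theorem, to take the form $\Double(C')$ for some base code $C'$. The symplectic action of $U$ therefore factors through $\Double$ and identifies $U=\Double'(g)$ for some $g\in\Sp(2n,\Field_2)$. The main obstacle is making the ``phase-free $ZX$ fragment injects into $\Sp(4n,\Field_2)$'' step rigorous enough that the lift $\Double'$ is canonical and respects composition; once that is in place, the image characterization follows cleanly from Theorem~\ref{th:unwrap} and the explicit generator realizations.
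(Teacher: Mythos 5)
Your proposal matches the paper's proof in its essential structure: both define $\Double'$ on the same generating set (local Cliffords plus $CX$) by the same explicit circuits---fiber SWAP for $\bH_i$, an intra-fiber $CX$ for $\bS_i$, and the two-fiber pair $CX_{i,j}\cdot CX_{n+j,n+i}$ for $CX_{i,j}$---and both observe that these are phase-free $ZX$ diagrams, eliminating the $\omega$-phase ambiguity. The difference is in how you close the homomorphism gap. The paper's proof simply recognizes $\Double'$ as the permutation representation of $\Sp(2n,\Field_2)$ acting on $\Field_2^{2n}$ lifted to computational-basis permutations of $2n$ qubits, which is a representation by fiat and hence automatically a homomorphism; the check that it projects to $\Double(M)=M\oplus(M^{-1})^\top$ is then a small verification. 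You instead invoke injectivity of the symplectic projection on the phase-free $ZX$ fragment. This is correct but slightly less direct: ``generated by $CX$'' does not by itself give injectivity---you additionally need that the $CX$-generated subgroup (which is $\GL(2n,\Field_2)$ acting by permutation unitaries) meets the Pauli kernel of the projection trivially. That's true because $X$-strings are affine shifts not fixing $\ket{0\cdots 0}$ while $CX$-products are linear, but it deserves a sentence.

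Where you go beyond the paper is in attempting to prove the image characterization (that the image is exactly the phase-free $ZX$ Cliffords equipped with a fixed-point-free involutory $ZX$-duality), which the paper's proof leaves implicit. Your appeal to Theorem~\ref{th:unwrap} for the reverse inclusion, however, conflates two different objects: that theorem classifies \emph{codes} (isotropic subspaces) carrying such a duality, not Clifford \emph{unitaries}, and it is not immediate how ``$U$ preserves the $\tau$-duality'' produces a CSS code to which the theorem applies. A more direct route would stay entirely inside the phase-free fragment: $U$ corresponds to some $N\in\GL(2n,\Field_2)$, the $ZX$-duality condition forces $N$ to intertwine with the fiber swap and colour reversal, and one then shows this constraint is exactly $N=M\oplus(M^{-1})^\top$ for some $M\in\Sp(2n,\Field_2)$. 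As it stands this part of your argument is the main gap.
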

\begin{proof}
We define $\Double'$ on the generators (redundantly) as
\begin{align*}
\igc{green_111} & \mapsto \igc{gate_CX01} \\
\igc{blue_111} & \mapsto \igc{gate_CX10} \\
\igc{gate_H} & \mapsto \igc{gate_SWAP} \\
\igc{gate_CX01} & \mapsto \igc{gate_CX0231} \\
\igc{gate_SWAP} & \mapsto \igc{gate_DSWAP} \\
\end{align*}
Note we are using the ``little-endian'' symplectic
convention for the string diagrams on the righ-hand-side.
This gives a (unitary) permutation representation of
$\Sp(2n,\Field_2)$ in the computational basis.
It is straightforward to check that this agrees with 
the application of \Eref{eq:double-f} to symplectic matrices $M$ on $\Symp$:
$$
\Double(M) = M \oplus (M^{-1})^{\top} .
$$
\end{proof}

For example,
given a code $C$ satisfying any
of the conditions of Theorem \ref{th:unwrap},
so that $C=\Double(C')$ 
a Hadamard on qubit $i$ in the base code $C'$
is lifted under $\Double'$ to 
swapping the qubits in $C$ in the fiber over $i$.
We will explore further examples in \S\ref{sec:examples} below.

This map $\Double'$ also appears in the proof of Theorem 3.8 in 
\cite{Backens2016}, there denoted as $[\![ \ ]\!]^{\natural}$.

\begin{claim}
The tanner graph of any symplectic
matrix $M\in\Sp(2n,\Field_2)$ gives a $ZX$-calculus diagram 
for $\Double'(M)$.
\end{claim}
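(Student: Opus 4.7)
The plan is to interpret the Tanner graph of $M$ directly as a phase-free ZX diagram $D_M$ and verify it realises $\Double'(M)$. Given $M\in\Sp(2n,\Field_2)$ with Tanner graph bipartite on $2n$ column-nodes (inputs) and $2n$ row-nodes (outputs), connected between $r_i$ and $c_j$ iff $M_{ij}=1$, I place a green spider at each column node (one external input leg plus one internal leg per incident edge) and a blue spider at each row node (one external output leg plus one internal leg per incident edge), realising the Tanner edges as internal wires. This gives a phase-free $2n$-to-$2n$ qubit ZX diagram, matching the arity of $\Double'(M)$.

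To identify $D_M$ with $\Double'(M)$ I would trace its action on computational basis states: the green spider at $c_j$ copies its Z-basis input $x_j$ to every incident wire, and the blue spider at $r_i$ outputs the parity $\sum_j M_{ij}x_j$ of its incident wires. Hence $D_M$ sends $|x\rangle\mapsto|Mx\rangle$ in the Z-basis. Because $D_M$ is a CX-network (no Hadamards) it fixes $|0\rangle^{\otimes 2n}$ up to a global sign, so this Z-basis permutation coincides with the X-part of the symplectic action of $D_M$, which is therefore $M$. A CX-network is also block-diagonal on $\Symp\oplus\Symp$, and the symplectic condition forces the Z-part to be $(M^{-1})^{\top}$: indeed the equation $A\oplus B\in\Sp(4n,\Field_2)$ has the unique solution $B=(A^{-1})^{\top}$. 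Combining, the symplectic action of $D_M$ is $M\oplus(M^{-1})^{\top}=\Double'(M)$ by Theorem~\ref{th:lift}, giving the claim.

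The main obstacle is that the Tanner graph of a product $M_1M_2$ is not the juxtaposition of the individual Tanner graphs, so a verification on generators of $\Sp(2n,\Field_2)$ followed by composition would only succeed after invoking completeness of the phase-free ZX-calculus (cf.\ \cite{Backens2016}) to equate the two resulting normal forms. The direct tensor-network contraction above sidesteps this by handling arbitrary $M$ in one step. A secondary subtlety is justifying the step ``Z-basis permutation equals X-part of symplectic action,'' which requires either the $|0\rangle^{\otimes 2n}$-fixing argument above or a direct Pauli-propagation using the rewrite rules of \S\ref{sec:zx}. As a sanity check I would verify the claim directly on the generators of $\Sp(2n,\Field_2)$ appearing in the proof of Theorem~\ref{th:lift} — coordinate swaps, transvections, single-site Hadamards — whose Tanner graphs are elementary and whose ZX interpretations visibly match the generator images tabulated there.
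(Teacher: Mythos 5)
The paper leaves this statement as one of its explicitly unproven \emph{Claims}, so there is no internal proof to compare against; your argument supplies one, and it is essentially correct. The core step — interpret the bipartite Tanner graph of $M$ as a phase-free ZX diagram with Z-spiders (green) on column nodes and X-spiders (blue) on row nodes, so the green spider at $c_j$ copies its Z-basis input $x_j$ to every incident internal wire and the blue spider at $r_i$ contracts its inputs to the parity $\sum_j M_{ij}x_j$ — is right, giving $D_M\ket{x}\propto\ket{Mx}$ with a positive scalar $\prod_i 2^{(1-w_i)/2}$ (with $w_i$ the weight of row $i$) independent of $x$. Since $\Double'$ is constructed in the proof of Theorem~\ref{th:lift} to be exactly the permutation representation $\ket{x}\mapsto\ket{Mx}$ of $\Sp(2n,\Field_2)$, this identity already establishes $D_M=\Double'(M)$ up to the usual global-scalar freedom of ZX equality. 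Your caution about verifying only on generators is also well placed: Tanner graphs do not compose under matrix multiplication, so a generator-and-composition argument would have to invoke completeness of the phase-free fragment \cite{Backens2016} to close, whereas the direct tensor contraction handles all of $\Sp(2n,\Field_2)$ at once.

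Two small remarks. The closing detour through the symplectic action is unnecessary and slightly loose as stated: equal symplectic matrices only pin a Clifford down modulo the Pauli group, and ``$D_M$ fixes $\ket{0}^{\otimes 2n}$'' alone does not exclude a residual $Z$-type Pauli; but the direct computation $D_M\ket{x}\propto\ket{Mx}$ for all $x$ already over-determines $D_M$, so nothing hinges on that step. The block-diagonality observation and the fact that $A\oplus B\in\Sp(4n,\Field_2)$ forces $B=(A^{-1})^{\top}$ are correct, and they usefully confirm that $D_M$ descends to $\Double(M)=M\oplus(M^{-1})^{\top}$, matching \Eref{eq:double-f}. Second, the colour assignment is a genuine choice, not forced by the phrase ``Tanner graph'': placing blue on columns and green on rows would yield $\Double'((M^{-1})^{\top})$ instead, which differs from $\Double'(M)$ by the fiber-swapping permutation, so it is worth stating the convention explicitly (it is the one implicit in the single-qubit examples of \Fref{fig:lc-spzx}). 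Finally, invertibility of $M$ guarantees that no row or column is zero, so every spider carries at least one internal wire and the copy-and-parity semantics never degenerates.
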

See \Fref{fig:lc-spzx} for the single qubit symplectic matrices $\Sp(2,\Field_2)$
and the corresponding $ZX$-calculus diagrams.

\todo{show how to apply $\Double$ to encoders}

\todo{show how $\Double$ respects logical operators by Lemma \ref{lem:functorial} }

\todo{equation for $\Double(M)$ gives converse:
phase-free $ZX$-calculus diagrams with a symmetry property
correspond to diagrams that descend to the base code.}

\begin{figure*}[t]
\centering
\begin{subfigure}[b]{0.48\textwidth}
\igr{LC_Sp} 
\caption{Symplectic matrices for the local Clifford group on one qubit}
\label{fig:lc-sp}
\end{subfigure}
~
\begin{subfigure}[b]{0.48\textwidth}
\igr{LC_ZX}
\caption{The corresponding Tanner graph of the Symplectic matrices}
\label{fig:lc-zx}
\end{subfigure}
~
\caption{The Tanner graph for symplectic matrices
in $\Sp(2,\Field_2)$ 
gives the $ZX$-calculus
diagram for the lifted Clifford gate under $\Double'$.}
\label{fig:lc-spzx}
\end{figure*}

\section{Genon codes}\label{sec:genon} 

In this section we develop the theory of genon codes.
Examples are discussed below in \S\ref{sec:examples}.

\subsection{Genon graphs and genon codes}\label{sec:graphs} 

We are given a compact oriented surface $S$
with a finite graph $\Gamma$ embedded therein.
This subdivides the surface into 
\emph{faces}, 
\emph{edges}, and 
\emph{vertices}.
Vertices are required to have valence three or four.
Faces must have at least two edges, so that
bigon faces are allowed, but monogon faces are not.
We call such a $\Gamma$ a \emph{genon graph}.

A \emph{genon code} on a genon graph $\Gamma$ is a quantum code $C$ 
where qubits are placed at vertices,
stabilizers are placed on faces
with the following allowed configurations at each vertex:
\begin{center}
\raisebox{0.0\height}{\igr{v_xyz}}\hsp
\igr{v_xyxy}\hsp
\igr{v_xzxz}\hsp
\igr{v_zyzy}
\end{center}
We will write $(C,\Gamma)$ for a genon code $C$ on $\Gamma$.

\begin{theorem}
Given a genon graph $\Gamma$ with $n$ vertices,
there are $6n$ genon codes on $\Gamma$ and they are all 
equivalent under local Clifford operations.
\end{theorem}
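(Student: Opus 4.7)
The plan breaks into three steps: existence, LC-equivalence, and counting.

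\emph{Existence.} For any genon graph $\Gamma$, I would construct at least one genon code by picking a 2-colouring of the faces of $\Gamma$ and resolving any frustration by inserting domain walls as in \S\ref{sec:domain}. Assigning $X$-type stabilizers to faces of one colour, $Z$-type to the other, and the $XYZ$ pattern around trivalent vertices yields a code whose vertex patterns all lie in $\{XYZ, XYXY, XZXZ, ZYZY\}$ and whose face stabilizers mutually commute by construction, giving a nonempty set of genon codes to work with.

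\emph{LC-equivalence.} Next I would show that all genon codes on $\Gamma$ lie in a single orbit under the local Clifford group $\Sp(2,\Field_2)^n \cong S_3^n$. The key observation is that $\Sp(2,\Field_2) \cong S_3$ acts on each vertex by permuting Pauli labels $\{X,Y,Z\}$, and under this action the set of allowed local configurations at a given vertex is a single $S_3$-orbit: any permutation of labels sends $XYZ$ to $XYZ$ and cycles the three 4-valent patterns $XYXY, XZXZ, ZYZY$ transitively. Hence, given any two genon codes $C, C'$ on $\Gamma$, at each vertex $v$ there exists $\sigma_v \in S_3$ relating the local data of $C$ at $v$ to that of $C'$, and $\bigotimes_v \sigma_v \in S_3^n$ is a local Clifford carrying $C$ to $C'$.

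\emph{Counting.} Finally, by the orbit-stabilizer theorem the number of genon codes is $|S_3|^n / |\Stab(C)| = 6^n/|\Stab(C)|$ for any fixed $C$, where $\Stab(C) \leq S_3^n$ is the subgroup of local Cliffords fixing $C$ setwise (permuting its face stabilizers among themselves). I expect this to be the main obstacle: obtaining exactly $6n$ requires $|\Stab(C)| = 6^{n-1}/n$, which is only consistent integrally for special $n$ and hence demands a careful combinatorial argument tied to the specific structure of $\Gamma$. Concretely, I would trace how a local Clifford choice at a single vertex rigidifies the remaining vertex choices via the shared face stabilizers (which must have consistent $X/Y/Z$ type along each face), leaving only a residual $S_3$ factor for the overall Pauli-label permutation together with an additional factor of $n$ arising from the combinatorial freedom (e.g. an anchoring choice of a distinguished vertex or edge) left by the propagation. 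Making this propagation argument quantitative is where the bulk of the work lies.
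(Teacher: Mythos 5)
Your first two steps (existence, and transitivity of the local Clifford action by matching vertex configurations with a product $\prod_v \sigma_v \in S_3^n$) are essentially the paper's argument, and they are fine. The problem is the counting step, and your instinct there is half right: you correctly observe that orbit--stabilizer gives $6^n/|\Stab(C)|$ and that the stated count ``$6n$'' would force $|\Stab(C)| = 6^{n-1}/n$, which is not even an integer in general. But the resolution is not the ``propagation/rigidification'' argument you propose — that argument does not exist. The configurations at distinct vertices are genuinely independent: for any choice of one of the $6$ allowed patterns at each vertex (six orderings of $X,Y,Z$ at a trivalent vertex; the three alternating patterns $XYXY$, $XZXZ$, $ZYZY$ each in two rotations at a $4$-valent vertex), the resulting face operators automatically commute, because two faces sharing an edge pick up one anticommutation at each of the two endpoints of that edge, and these signs cancel. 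So nothing forces choices at one vertex to constrain choices at another, there is no residual ``anchoring'' freedom of size $n$, and the set of genon codes on $\Gamma$ has exactly $6^n$ elements on which $S_3^n$ acts simply transitively.

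This is in fact what the paper's own proof asserts: it shows transitivity and then that the stabilizer of a genon code in the local Clifford group is \emph{trivial} (since $S_3$ acts freely on the six configurations at each vertex), which by orbit--stabilizer yields an orbit of size $|S_3^n| = 6^n$. The ``$6n$'' in the theorem statement is evidently a typo for $6^n$; note that the paper's proof as written (``stabilizer trivial, hence $6n$ codes'') is itself inconsistent unless read this way. So rather than hunting for a combinatorial mechanism producing $6n$, you should prove that each vertex configuration can be chosen freely (the commutation check above), conclude the count is $6^n$ with trivial stabilizer, and flag the discrepancy with the statement. One last small caveat: to conclude the stabilizer is trivial you need that the code as a subspace determines the vertex decoration (e.g.\ because each face stabilizer is recoverable from its support), which is worth a sentence.
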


\begin{proof}
The local Clifford group acts transitively
on the set of genon codes on $\Gamma$ because any vertex
configuration of valence $r=3,4$ is local Clifford equivalent
to any other vertex configuration of valence $r$.
Conversely, given a genon code,
the stabilizer subgroup of the local Clifford group 
is trivial and so we have the result that
there are $6n$ such distinct genon codes on a given graph $\Gamma$
with $n$ vertices.
\end{proof}

It's worthwhile staring at an illustration of this proof to
see how the local Clifford group acts on the 3-valent
and 4-valent vertices. 
You really do get 6 different configurations
at each vertex, and the local Clifford group moves between all of these:
\begin{center}
\igr{local_clifford_3}\Hsp
\igr{local_clifford_4}
\end{center}

\begin{lemma}
Let $C$ be a genon code on $\Gamma$ encoding $k$ logical qubits.
If $\Gamma$ is bicolourable
then $k=V-F+2$,
otherwise $k=V-F+1$.
\end{lemma}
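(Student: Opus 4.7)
The plan is to reduce the lemma to a combinatorial count of linear dependencies among the face stabilizers. For any stabilizer code on $V$ qubits, $k = V - r$ where $r$ is the rank of the stabilizer group; since the $F$ face operators generate the stabilizer group, $r = F - \dim\ker\phi$ where $\phi:\Field_2^F\to\Field_2^{2V}$ sends a subset of faces to the sum of the corresponding face Paulis on the vertex qubits. The lemma is therefore equivalent to the claim that $\dim\ker\phi = 2$ when $\Gamma$ is bicolourable, and $\dim\ker\phi = 1$ otherwise, after which $k = V - F + \dim\ker\phi$ gives the stated formulas.

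The first step is to unpack $\ker\phi$ vertex by vertex: a labeling $\lambda\in\Field_2^F$ lies in $\ker\phi$ iff at every vertex $v$ the local sum $\sum_{f\ni v}\lambda_f P_{f,v}$ vanishes in $\Field_2^2$. A direct check on the four allowed vertex configurations shows this local solution space is one-dimensional at a 3-valent $XYZ$ vertex (spanned by the all-ones vector on its three faces) and two-dimensional at each 4-valent $XYXY$, $XZXZ$ or $ZYZY$ vertex (spanned by the two characteristic vectors of the opposite same-Pauli-type pairs, using $X+Y+Z=0$ in $\Field_2^2$). Consequently $\lambda\in\ker\phi$ iff $\lambda$ is constant on the classes of the equivalence relation $\sim$ on faces generated by (a) pairing opposite faces at each 4-valent vertex and (b) identifying all three faces at each 3-valent vertex. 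Hence $\dim\ker\phi$ equals the number of connected components of the graph $\mathcal{G}$ on faces whose edges realize these identifications.

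Next I would establish the lower bounds by exhibiting explicit relations. The all-faces vector $\sum_f [f]$ always lies in $\ker\phi$, since the Pauli sum at each allowed vertex configuration vanishes. In the bicolourable case, all vertices are 4-valent and opposite faces at each vertex share colour, so the blue and green partial sums each lie in $\ker\phi$ (at every vertex, a single colour contributes exactly two copies of the same Pauli). This gives $\dim\ker\phi\geq 1$ in general and $\dim\ker\phi\geq 2$ when $\Gamma$ is bicolourable.

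The main obstacle is the matching upper bound: that $\mathcal{G}$ has at most two components when $\Gamma$ is bicolourable and at most one otherwise. For the bicolourable half, the bicolouring refines $\mathcal{G}$, so it suffices to show each colour class is $\mathcal{G}$-connected; I would argue this by induction along a spanning tree of the dual graph of $\Gamma$, propagating connectedness through each dual edge by performing a diagonal move at each of its two 4-valent endpoints to join the monochromatic faces encountered. For the non-bicolourable half, any obstruction to bicolourability --- either a 3-valent vertex, whose three pairwise-$\sim$-equivalent faces mix the two locally defined colour classes, or an odd closed walk in the dual graph of a purely 4-valent $\Gamma$, which upon alternating diagonal moves supplies a chain of $\sim$-identifications between faces of opposite would-be colour --- collapses the two colour classes into a single $\mathcal{G}$-component. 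Combining the lower and upper bounds yields the claimed value of $\dim\ker\phi$, and the lemma follows.
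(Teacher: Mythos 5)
Your reduction is the same one the paper uses: from $k = V - \Rank(\Parity) = V - F + \dim\ker\phi$ you only need to pin down $\dim\ker\phi$, and your vertex-by-vertex analysis of the local kernel (one-dimensional at a trivalent $XYZ$ vertex, two-dimensional at a tetravalent vertex with the opposite-face characteristic vectors as a basis) is a genuine improvement on the paper, which simply asserts which dependencies exist. Your reformulation of $\dim\ker\phi$ as the number of components of the face graph $\mathcal{G}$ generated by diagonal moves (tetravalent) and all-pairs identifications (trivalent), and your lower bound via the all-ones vector and, when bicolourable, the two colour-class vectors, are all correct. (Minor quibble: $X+Y+Z=0$ is only needed in the trivalent case, not the tetravalent one as your parenthetical suggests.)

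The gap is in the upper bound. Your induction ``along a spanning tree of the dual graph, performing a diagonal move at each of [the dual edge's] two 4-valent endpoints'' does not close: when you cross a tree edge from a processed face $f$ to a new face $f'$ and do the diagonal move at a vertex $v$ of the shared primal edge, you connect $f'$ to the face diagonally opposite it at $v$ --- but nothing in the induction guarantees that this opposite face has already been processed, so you cannot conclude that $f'$ joins the already-built monochromatic component. The diagonal moves you actually want are not at the two endpoints of a single primal edge but at the \emph{consecutive boundary vertices of a single face}: if $g$ has boundary vertices $v_0,\dots,v_{k-1}$ and $h_i$ is the face across the $i$-th boundary edge, then at $v_i$ the four faces are $g$, $h_{i-1}$, $q$, $h_i$ in cyclic order, so $h_{i-1}$ and $h_i$ are diagonal there and the neighbours of $g$ form a $\mathcal{G}$-cycle. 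With this lemma in hand the upper bound is immediate: any two faces at even dual distance are $\mathcal{G}$-connected by stepping two at a time along a dual path, so $\mathcal{G}$ has at most two components; a trivalent vertex or an odd dual cycle forces the two dual-distance parity classes to merge into one. Your non-bicolourable sketch has the same issue (it implicitly invokes the cycle-of-neighbours fact without stating it), so I'd recommend making that lemma explicit before the induction in both halves.
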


\begin{proof}
For any stabilizer code with $n$ qubits and check matrix $\Parity$
we have that $k=n-\Rank(\Parity) = V -\Rank(\Parity).$
For a genon code $C$ on $\Gamma$, $\Parity$ is given by stabilizers living
on the faces.
Let $\Gamma$ be bicolourable, with faces either black or white.
Then $\Gamma$ has only four valent vertices, and we get
a linear dependent combination of white stabilizers, and also for
black stabilizers. Conversely, any linear dependent combination of
stabilizers is a sum of either of the black or white faces.
Therefore, $k=V-F+2$, and moreover, this argument runs backwards
so that $k=V-F+2$ implies bicolourable faces.
A similar argument shows that a lack of bicolourable faces is
equivalent to $k=V-F+1$. In this case the one linear dependency
comes from the combination of all the face stabilizers.
\end{proof}

\begin{theorem}
Let $C$ be a genon code on $\Gamma$ encoding $k$ logical qubits,
with $m$ the number of 3-valent vertices, and $g$ the genus of $\Gamma$.
If $\Gamma$ is bicolourable then $k=2g$ and $m=0$,
otherwise $k=2g + \frac{m}{2} - 1$.
\end{theorem}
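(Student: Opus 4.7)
The plan is to combine the preceding lemma, which already expresses $k$ in terms of $V$ and $F$, with the Euler characteristic of the embedding surface and a degree-sum identity specific to genon graphs.

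First I would count edge-incidences two ways. Since every vertex of $\Gamma$ has valence $3$ or $4$, we have
\[
2E \;=\; \sum_{v} \deg(v) \;=\; 3m + 4(V-m) \;=\; 4V - m,
\]
so $E = 2V - \tfrac{m}{2}$ (and in particular $m$ must be even). Substituting into the Euler relation $V - E + F = 2 - 2g$ yields the key identity
\[
V - F \;=\; \tfrac{m}{2} + 2g - 2.
\]

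Next, I would split into the two cases of the preceding lemma. For the bicolourable case I need to argue that necessarily $m=0$: around a $3$-valent vertex, the three incident faces would be pairwise adjacent, forcing a proper $2$-colouring of an odd wheel, which is impossible. Hence $m=0$ and the previous lemma gives $k = V - F + 2 = 2g$. For the non-bicolourable case the previous lemma gives $k = V - F + 1 = \tfrac{m}{2} + 2g - 1$, which is the claim.

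There is no real obstacle; the only subtlety is justifying that bicolourability of the faces forces every vertex to be $4$-valent (the odd-cycle argument above), and that the degree-sum identity forces $m$ to be even so that the formula $2g + m/2 - 1$ is an integer. Both are routine and do not depend on any further structure of the code beyond what is listed in the vertex configurations defining a genon code.
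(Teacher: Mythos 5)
Your proof is correct and follows essentially the same route as the paper: combine the degree-sum identity $E = 2V - m/2$ with the Euler characteristic to obtain $V-F$ in terms of $g$ and $m$, then plug into the preceding lemma's formula for $k$. The one place you go slightly beyond the paper is in supplying an explicit reason why bicolourability forces $m=0$ (the three faces meeting at a $3$-valent vertex form an odd cycle in the face-adjacency graph); the paper states this without argument, both in the lemma's proof and again here, so your addition is a small but welcome tightening rather than a divergence.
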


\begin{proof}
Let $V_3$ be the number of 3-valent vertices, and $V_4$ be the number
of 4-valent vertices. 
Then we see the number of edges is 
$E = \frac{3}{2} V_3 + 2V_4$.
Writing the Euler characteristic:
\begin{align*}
\chi = 2-2g &= V - E + F \\
    &= V_3 + V_4 - \frac{3}{2}V_3 - 2V_4 + F \\
    &= F - \frac{1}{2}V_3 - V_4.
\end{align*}
If $\Gamma$ is bicolourable, then $V_3=0$ and
by the previous lemma we find $k=V_4-F+2$. Substituting
$F=2+V_4-k$ into the above equation for $\chi$ we get 
$2-2g = 2+V_4-k -V_4 $ and so $k=2g$.
If $\Gamma$ is not bicolourable the previous lemma
gives $F=V_3+V_4+1-k$ and so
$2-2g = V_3+V_4+1-k - \frac{1}{2}V_3 - V_4$ 
which gives $k=2g + \frac{m}{2} - 1$ as required.
\end{proof}

\subsection{String operators}\label{sec:string} 

See SY \cite{Sarkar2021} \S 4 and GS \cite{Gowda2020} \S III D.

We'd like to refer to logical operators of a genon code 
up to local Clifford equivalence.
This is a theory of \emph{string operators}
based only on the graph $\Gamma$. 
This will be a classical binary code $S$,
which is then linearly mapped onto the logical
space $C^\perp$ of a given genon code $C$ on $\Gamma$.

Given a genon graph $\Gamma$
we associate a vector space
$\Field_2^{2E}$ 
whose
basis is the edge-face pairs of $\Gamma$.
In other words, every edge has a vector space $\Field_2^2$
associated. 
We notate the four
vectors in this space with black lines
\begin{align*}
(0,0): \raisebox{-0.4\height}{\igr{edge_00}}\hsp
(1,0): \raisebox{-0.4\height}{\igr{edge_10}} \\
(0,1): \raisebox{-0.4\height}{\igr{edge_01}}\hsp
(1,1): \raisebox{-0.4\height}{\igr{edge_11}}
\end{align*}
and similarly for vectors in $\Field_2^{2E}$.
We now define the subspace $S\subset \Field_2^{2E}$
of \emph{string operators},
by considering the allowed string
operators around the 3-valent vertices and 4-valent vertices.
Around 3-valent vertices,
we have a vector space $\Field_2^6$, whose intersection with $S$ is
a 5-dimensional space spanned by 
\begin{center}
(i)
\raisebox{-0.4\height}{\igr{vertex_4}}\hsp
(ii)
\raisebox{-0.4\height}{\igr{vertex_5}}\hsp
(iii)
\raisebox{-0.4\height}{\igr{vertex_6}}
\end{center}
and rotations/reflections. 
These give all even parity vectors of $\Field_2^6$.
Around the 4-valent vertices we have a vector space 
$\Field_2^8$, whose intersection with $S$ is 
a 6-dimensional space spanned by
\begin{center}
(iv)
\raisebox{-0.4\height}{\igr{vertex_1}}\hsp
(v)
\raisebox{-0.4\height}{\igr{vertex_2}}\hsp
(vi)
\raisebox{-0.4\height}{\igr{vertex_3}}
\end{center}
and rotations/reflections.
Note these diagrams are not all linearly independent,
for example (i)+(ii)=(iii) and (iv)+(v)=(vi).

Given a genon code $C$ on $\Gamma$
we define a linear map of string operators to logical operators
$$\phi:S\to C^\perp$$
on basis elements of $S$ as follows.
At a 4-valent vertex:
$$
\phi:\igc{logical_v4} \mapsto a \Hsp
\phi:\igc{logical_v4a} \mapsto a 
$$
and rotations/reflections.
At a 3-valent vertex:
$$
\phi:\igc{logical_v3a} \mapsto a \Hsp
\phi:\igc{logical_v3bc} \mapsto c
$$
and rotations/reflections.
For example, linearity of $\phi$ implies the following:
$$
\phi:\igc{logical_v4I} \mapsto 0 \Hsp
\phi:\igc{logical_v3b} \mapsto b \Hsp
\phi:\igc{logical_v3I} \mapsto 0 \Hsp
$$
Notice that the kernel of $\phi$ is non-trivial, 
in other words 
there is some redundancy in these string operators.

Using $\phi$ we can pick out a stabilizer generator $v\in C^\perp$
with a string operator \emph{external} to the corresponding face, 
however the string operator \emph{internal} to a face
is sent to zero, for example:
$$
\phi:\igc{string_istab}\mapsto v\in C^\perp \Hsp
\phi:\igc{string_ostab}\mapsto 0\in C^\perp
$$
We summarize this in the following theorem.
\begin{theorem}
Given a genon code $C$ with parameters $[[n,k,d]]$, on a graph $\Gamma$
we have that 
$$
\dim(S) 
= \left\{\begin{array}{ll}
2k + 2F - 2 &\text{if $\Gamma$ is bicolourable}\\
2k + 2F -1  &\text{otherwise}
\end{array}\right.
= \left\{\begin{array}{ll}
2n + 2 &\text{if $\Gamma$ is bicolourable}\\
2n + 1  &\text{otherwise}
\end{array}\right.
$$
and
$\phi:S\to C^\perp$ is surjective with kernel spanned by
the internal face string operators.
\end{theorem}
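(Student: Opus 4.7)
The plan is to apply the rank-nullity theorem to $\phi: S \to C^\perp$. I would first establish that $\phi$ is well-defined (i.e.\ maps $S$ into $C^\perp$) and identify $\ker \phi$; then prove surjectivity of $\phi$; finally combine via rank-nullity with the previous lemma to recover the claimed dimension formula for $S$.

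For well-definedness, verify on the local vertex generators (the $5$-dimensional local subspace at each $3$-valent vertex and the $6$-dimensional local subspace at each $4$-valent vertex) that $\phi$ applied to any such string commutes with every incident face stabilizer. This is a finite combinatorial check per vertex type: the allowed vertex configurations of $S$ are exactly those for which the anti-commutations contributed to each adjacent face cancel in pairs. For the kernel, consider, for each face $f$, the internal face loop $\ell_f$ using the edge-face labels $(e,f)$ for all $e \in \partial f$. At every vertex $v \in \partial f$ this restricts to a pure-internal corner at $v$, which by the defining diagrams of $\phi$ maps to identity on qubit $v$, so $\phi(\ell_f) = 0$. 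The $\ell_f$ are linearly independent since each uses edge-face labels with a distinct face-coordinate $f$, giving $F$ independent kernel elements. For the reverse inclusion, a case analysis on the vertex diagrams shows that the only configurations in the local vertex subspace of $S$ which are killed by the local component of $\phi$ are sums of internal corners; piecing these corners together globally forces any $s \in \ker \phi$ to be a sum of the $\ell_f$.

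For surjectivity, each face stabilizer $S_f \in C^\perp$ is realized as $\phi$ of the external face loop around $f$ (the string using, for each $e \in \partial f$, the label $(e, f')$ with $f'$ the face on the opposite side of $e$ from $f$), so all stabilizers lie in $\Image(\phi)$. For logical operators, logical Paulis of a topological code are represented by non-contractible cycles on the surface together with paths joining pairs of genons through the $3$-valent vertices; for each such cycle one constructs an explicit string $s \in S$ realizing it, using the branch-point behavior of the $3$-valent vertex configurations to thread through genons correctly. Together with stabilizers, these strings span $C^\perp$, giving $\Image(\phi) = C^\perp$. Finally, rank-nullity yields $\dim S = \dim \ker \phi + \dim \Image(\phi) = F + (n + k)$; substituting $k = V - F + 2 = n - F + 2$ in the bicolourable case gives $\dim S = 2n + 2 = 2k + 2F - 2$, and $k = n - F + 1$ otherwise gives $\dim S = 2n + 1 = 2k + 2F - 1$. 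The main obstacle is the local case analysis at each vertex needed both to confirm well-definedness and to exactly characterize $\ker \phi$, together with constructing explicit string representatives in $S$ for a full basis of the $2k$ logical operators; both reduce to enumerating the finitely many allowed local patterns in the diagrams defining $S$ and $\phi$.
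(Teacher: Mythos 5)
Your proposal is correct and follows essentially the same strategy as the paper's proof: show $\phi$ is surjective onto $C^\perp$ by constructing preimages case by case, identify the kernel as the span of the $F$ internal face loops, and then apply rank--nullity together with the preceding lemma's relation $k = V - F + 2$ (bicolourable) or $k = V - F + 1$ (otherwise) to obtain the stated formulas for $\dim S$. You are somewhat more explicit than the paper on two points that its proof leaves implicit --- verifying that $\phi$ lands in $C^\perp$, and arguing that the local kernel at each vertex consists exactly of internal corners so that global kernel elements are forced to be sums of the $\ell_f$ --- and your surjectivity step uses a topological decomposition of $C^\perp$ into stabilizers plus cycle representatives where the paper argues by a purely local preimage construction, but these are refinements of the same argument rather than a different route.
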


\begin{proof}
Given a logical operator $v\in C^\perp$ we can construct
a string operator in $u\in S$ locally such that $\phi(u)=v$.
This is done by cases.
To find the kernel of $\phi$ we see that all the internal
face string operators are linearly independent, there are
$F$ many of these, where $F$ is the number of faces of $\Gamma$
and 
$$
F = \left\{\begin{array}{ll}
n-k+2 &\text{if $\Gamma$ is bicolourable}\\
n-k+1  &\text{otherwise}
\end{array}\right.
$$
\end{proof}
This theorem makes intuitive sense from the homological point of view:
stabilizer generators are given by operators that bound
a region, so they have an inside. 
Loosely speaking, the extra information found
in the string operators $S$ includes inside-out stabilizers,
which $\phi$ must send to zero.

Because the internal face string operators are sent to zero by $\phi$
we define the diagrammatic shorthand, or syntactic sugar:
$$
\igc{string_center} := 
\igc{string_left} =_{\phi} \igc{string_right}
$$
where the $\phi$ subscript refers to equality in the image of $\phi$.
In words, string operators can pass directly across faces 
and don't need to wind around the inside.
Examples of the use of this string diagram calculus are
given in \Fref{fig:xzzx}.

\subsection{Domain walls}\label{sec:domain} 

Given a genon code $C$ on a graph $\Gamma$,
we define a unique decoration on $\Gamma$,
called the \emph{domain walls} as follows:

(1) Every edge lies between two faces and we
place domain walls between the center of the two faces 
according to the rules:

\begin{center}
\igr{e_xx}\hsp
\igr{e_zz}\hsp
\igr{e_xz}
\end{center}

\begin{center}
\igr{e_xyyx}\hsp
\igr{e_xyyz}\hsp
\igr{e_zyyz}
\end{center}
where the star $\star$ denotes any Pauli consistent with the genon code rules.
For example a $YY$ configuration along one side of an edge is covered by these
rules because on the other side of the edge will be $XX$, $ZZ$ or $XZ$.

(2) Each face with a $Y$ at a trivalent vertex has a domain
wall from the center of the face to the $Y$ operator (the face-vertex flag):

\begin{center}
\igr{f_y}
\end{center}

\begin{theorem}
At the center of each face there is 
an even number of domain walls coming together.
\end{theorem}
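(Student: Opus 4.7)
My plan is to exhibit the wall count $w(f)$ at each face center as a telescoping $\Field_2$-sum around the cyclic boundary of $f$, which then vanishes mod 2 because each boundary vertex contributes to two adjacent edges.

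Fix a face $f$ with cyclically ordered vertices $v_1,\ldots,v_k$ and set $P_i$ to be the Pauli placed at $v_i$ by the stabilizer supported on $f$. Writing $w_{e_i}\in\Field_2$ for the wall indicator across the edge $e_i=(v_i,v_{i+1})$ (given by rule (1)) and $w_{v_i}\in\Field_2$ for the rule (2) contribution at $v_i$, we have
\[
w(f) \;=\; \sum_{i=1}^k w_{e_i} \;+\; \sum_{i=1}^k w_{v_i} \pmod 2.
\]
To each flag $(f,v_i)$ I would assign an indicator $a_i\in\Field_2$ determined by $P_i$: take $a_i=0$ if $P_i=X$ and $a_i=1$ if $P_i=Z$, and make a fixed choice (depending on the valence of $v_i$ and the type of $Y$-configuration around it) when $P_i=Y$.

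The first technical step is to verify, by direct inspection of the six pictures in rule (1), that $w_{e_i}\equiv a_i+a_{i+1}+\delta_{e_i}\pmod 2$, where the correction $\delta_{e_i}$ is nonzero only when a $Y$ appears at an endpoint of $e_i$. The three $X,Z$ cases $(e_{xx},e_{zz},e_{xz})$ give this with $\delta_{e_i}=0$; the three $Y$ cases $(e_{xyyx},e_{xyyz},e_{zyyz})$ must be handled using the vertex constraints (XYXY, XZXZ, ZYZY at 4-valent vertices, and XYZ at trivalent vertices), which force the neighbouring face's Pauli at any shared endpoint and thereby pin down $\delta_{e_i}$. The second step is to show that, for each vertex $v_i$ of $\partial f$, the two $\delta$-corrections coming from $e_{i-1}$ and $e_i$ add up mod 2 to exactly $w_{v_i}$: at a 4-valent vertex, both sides vanish, while at a trivalent $v_i$ with $P_i=Y$, the two adjacent edges each contribute a $\delta=1$ and rule (2) contributes $+1$, totalling $0\bmod 2$.

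Granting these two local identities, the third step is immediate: summing gives
\[
w(f) \;\equiv\; \sum_{i=1}^k (a_i+a_{i+1}) \;=\; 2\sum_{i=1}^k a_i \;\equiv\; 0 \pmod 2,
\]
since each $a_i$ appears in exactly the two edges incident to $v_i$.

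The main obstacle is the bookkeeping of $Y$ in the second step: the second row of rule (1) and rule (2) are two distinct mechanisms for introducing walls near $Y$-operators, and one must check case by case that they fit together so that the corrections $\delta_{e_i}$ precisely absorb the rule (2) contributions. This reduces to a finite enumeration over the four allowed vertex configurations together with the three $Y$-involving edge cases, and once performed the telescoping collapse in the display above is automatic.
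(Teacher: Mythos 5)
Your approach is genuinely different from the paper's. The paper argues indirectly: it first checks (by a finite case analysis over generators) that local Clifford gates preserve the domain-wall parity at every face, and then normalizes an arbitrary face stabilizer to pure $X$-type, which manifestly contributes zero walls at its own center. Your proof is a direct combinatorial telescope: you assign a $\Field_2$-label $a_i$ to each flag $(f,v_i)$, write the edge contribution as $a_i+a_{i+1}+\delta_{e_i}$, and argue that the $\delta$-corrections plus the rule~(2) contributions cancel vertex by vertex, leaving $\sum_i(a_i+a_{i+1})\equiv 0\pmod 2$. Both strategies ultimately reduce to a finite case check; yours is a more elementary, self-contained bookkeeping argument, while the paper's exploits the $6n$-fold local Clifford symmetry established earlier and thus transfers the case check to a smaller statement (parity-invariance under single-vertex Cliffords).

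That said, you have left the substantive part of the argument unverified, and there are real subtleties you should not gloss over. The assignment $a_i$ for a $Y$ at a $4$-valent vertex cannot be a single fixed value: for the edge identity $w_{e}\equiv a(f,v)+a(f,v')\pmod 2$ to agree with the view from the opposite face $f'$, you need the quantity $a(f,v)+a(f',v)$ to be the same for all edges at $v$, and this forces $a(Y,\mathrm{XYXY})=1$ and $a(Y,\mathrm{ZYZY})=0$ (matching the $q=1$ value coming from the clean $\mathrm{XZXZ}$ case). At a trivalent vertex with configuration $\mathrm{XYZ}$, no single choice of $a$ for $Y$ makes the quantity $a(f,v)+a(f',v)$ edge-independent — the three edges give $a(Y)$, $1+a(Y)$, and $1$ — which is precisely the failure rule~(2) must absorb; your claim that the two incident $\delta$-corrections sum to $w_{v_i}$ is exactly what needs checking here, and it is not automatic. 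Finally, edges where both faces carry a $Y$ (e.g. configurations of the form $XY\,|\,YX$) do not lie in the Y-free-side regime, so verifying that your $\delta$-bookkeeping covers them requires consulting rule~(1) for those pictures explicitly. Until this finite enumeration is actually carried out, the proof stands as a plausible plan rather than a complete argument.
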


\begin{proof}
Given a genon code $C$ we call the parity of a face
to be the number domain walls incident at the center mod 2.
Step 1:
we see that local Clifford operators preserve the domain wall parity
at each face. This is done by cases. 
Step 2: for each face, use local Clifford operators to 
convert this face stabilizer into a pure $X$ type stabilizer.
This has zero domain walls, which is even parity.
\end{proof}

From this theorem we see that if a domain wall has an endpoint
it will be at the $Y$ operator of a 3-valent vertex 
(and never at the center of a face).
We call these termination points \emph{genons}.

Given a genon code $C$ we see that there is one way to decorate
the surface $\Gamma$ with domain walls, however the converse is
not true.

\subsection{Double covers of genon codes}\label{sec:double} 

Given a genon code $C$ on a graph $\Gamma$, 
we define the \emph{double cover} of
$\Gamma$ relative to $C$ written $\Double(\Gamma,C)$, as follows:

(Dimension 0) 
Every vertex $v\in\Gamma$ is covered by two vertices in $\Double(\Gamma,C)$,
called the \emph{fiber} over $v$.
This fiber is ordered, and we write the two vertices over $v$ as $v^1$ and $v^2$.
See \Fref{fig:cover-genon-1}.

(Dimension 1) 
Every edge $e\in\Gamma$, with vertices $v_1,v_2$,
is covered by two edges in $\Double(\Gamma,C)$,
called the \emph{fiber} over $e$, written $e^1$ and $e^2$.
If $e$ does not cross a domain wall 
then $e^1$ has vertices $v_1^1,v_2^1$, and
$e^2$ has vertices $v_1^2,v_2^2$.
If $e$ does cross a domain wall
then $e^1$ has vertices $v_1^1,v_2^2$, and
$e^2$ has vertices $v_1^2,v_2^1$.
See \Fref{fig:cover-genon-2}.

(Genons) 
Every 3-valent vertex $v\in\Gamma$, with
incident face $f\in\Gamma$ whose stabilizer in $C$ supports a $Y$ operator
is covered by a single edge in $\Double(\Gamma,C)$
with vertices $v^1,v^2$.

(Dimension 2) 
Each face in $\Double(\Gamma,C)$ is constructed by lifting closed paths
$\gamma$ around the edges of a face $f\in\Gamma$.
The lifted edges
in these lifted paths then form the edges of a face.
When the path $\gamma$ encounters a genon at $v$, the edge between
$v^1,v^2$ is included in the lifted face, see \Fref{fig:cover-genon-3}
If the parity of the domain walls around $f$ is even there will
be two lifted faces $f^1,f^2$, 
coming from two possible lifts of $\gamma$,
otherwise there is only one lifted face $f^1$ which comes from
the unique lift of $\gamma$.

\begin{figure*}[t]
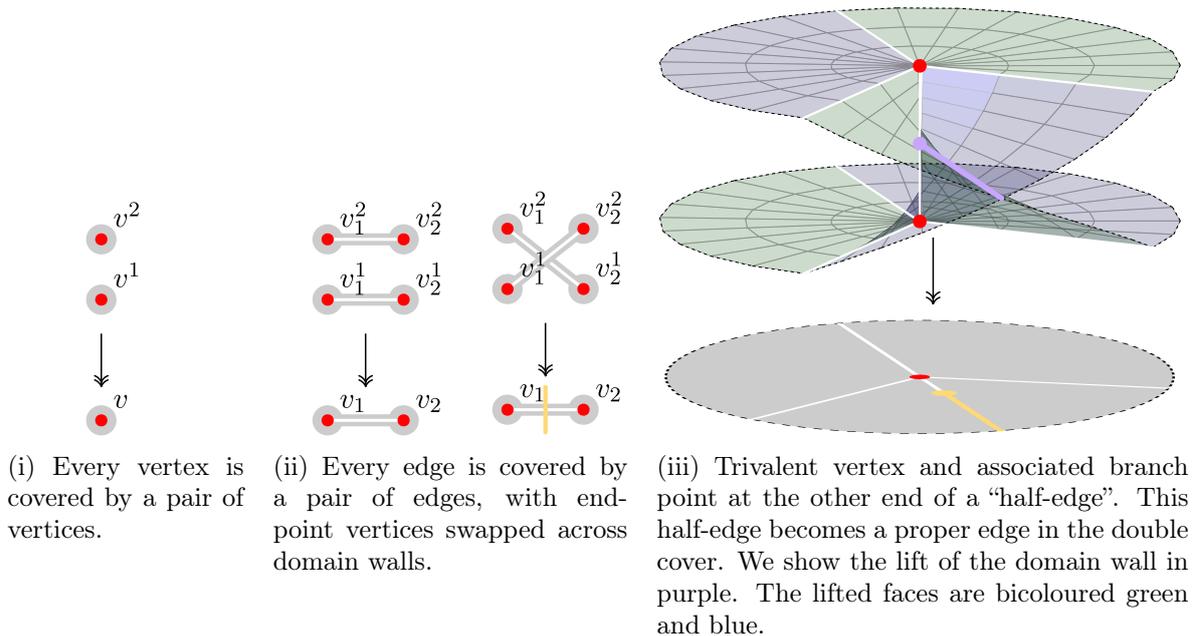

\centering
\begin{subfigure}[t]{0.2\textwidth}
\Hsp\igr{vertex_cover} 
\caption{Every vertex is covered by a pair of vertices.}
\label{fig:cover-genon-1}
\end{subfigure}
~
\begin{subfigure}[t]{0.3\textwidth}
\hsp\igr{edge_cover} \hsp
\igr{edge_cover_1} 
\caption{Every edge is covered by a pair of edges, 
with endpoint vertices swapped across domain walls.}
\label{fig:cover-genon-2}
\end{subfigure}
~
\begin{subfigure}[t]{0.45\textwidth}
\igr{trivalent_base}
\caption{Trivalent vertex and associated branch point 
at the other end of a ``half-edge''.
This half-edge becomes a proper edge in the double cover.
We show 
the lift of the domain wall in purple. 
The lifted faces are bicoloured green and blue.}
\label{fig:cover-genon-3}
\end{subfigure}
\caption{The double cover of a genon graph $\Gamma$
relative to a code $C$ is constructed
dimension wise, starting with vertices, then edges, then faces.}
\label{fig:cover-genon}
\end{figure*}

Given a genon code $C$ on $\Gamma$ we say that $C$ is
\emph{clean} when the stabilizers around 4-valent vertices support
only $X$-type and $Z$-type operators. 
In this sense, there are no unnecessary $Y$ operators.

\begin{claim}\label{th:clean}
Given a clean genon code $C$ on $\Gamma$,
then $\Double(C)$ is a CSS genon code on $\Double(\Gamma,C)$.
\end{claim}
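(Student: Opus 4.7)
The plan is to verify three things in turn: (a) the cover $\Double(\Gamma, C)$ is a legal genon graph (every vertex of valence $3$ or $4$), (b) the rows of the doubled check matrix $\Double(\Parity)$ from \eqref{eq:dh} realize precisely as face stabilizers on $\Double(\Gamma, C)$, and (c) these face stabilizers admit a bicolouring, so that $\Double(C)$ is CSS as a genon code. The CSS property of $\Double(C)$ as an abstract stabilizer code is already guaranteed by Theorem~\ref{th:double}; what is needed here is its geometric realization on the specific graph $\Double(\Gamma, C)$.

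For step (a), I would examine the two vertex types separately. Every $3$-valent vertex $v\in\Gamma$ carries the $XYZ$ configuration and so has exactly one incident face whose stabilizer supports a $Y$ at $v$; by the Genons clause of the cover construction this contributes a single genon edge between $v^1$ and $v^2$, which together with the three lifted edges makes each lift $4$-valent. For $4$-valent $v$, cleanliness forbids any incident $Y$, so no genon edge is added and each lift inherits exactly four lifted edges. Hence $\Double(\Gamma, C)$ is $4$-regular and, in particular, contains no genons of its own, which is a necessary first condition for being a CSS genon code.

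For step (b), I would use the theorem of Section~\ref{sec:domain} that every face $f\in\Gamma$ has even domain-wall parity at its center, so by the Dimension~$2$ clause $f$ lifts to exactly two faces $f^1,f^2$ in $\Double(\Gamma, C)$. The doubled check matrix assigns to each row $(\Parity_X^{(f)},\Parity_Z^{(f)})$ of $\Parity$ two rows: a pure-$X$ row $(\Parity_X^{(f)},\Parity_Z^{(f)},0,0)$ and a pure-$Z$ row $(0,0,\Parity_Z^{(f)},\Parity_X^{(f)})$. The key bookkeeping is to trace the support of the $X$-type row around the cover and match it to the vertex set of $f^1$ (and similarly for the $Z$-type row and $f^2$). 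The crucial identity is that a $Y$ at $v$ in the base becomes $X$ at both $v^1$ and $v^2$ in the $X$-type lift; this is exactly the condition that the lifted face boundary traverses the genon edge $(v^1,v^2)$, which is precisely how the Dimension~$2$ construction routes the lifted path through a genon. Plain $X$ (resp.\ $Z$) at $v$ contributes $X$ at $v^1$ only (resp.\ $v^2$ only), matching a path that stays on a single fiber; the switching induced by an edge crossing a domain wall reproduces the swap of fibers across the two faces meeting at that edge.

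For step (c), colour each $X$-type lifted face blue and each $Z$-type lifted face green. The CSS form of $\Double(\Parity)$ ensures that no single lifted face has mixed support, and I would verify by a short case analysis on the six edge-configurations of Section~\ref{sec:domain} that every pair of lifted faces sharing an edge in $\Double(\Gamma, C)$ carries opposite colours. This bicolouring, combined with the $4$-regularity of step~(a), is exactly the data of a CSS genon code, completing the claim. The main obstacle is step~(b): it is really a careful compatibility check between the algebraic double $\Double(\Parity)$ and the geometric double cover $\Double(\Gamma, C)$, and cleanliness enters decisively here to forbid genon edges at $4$-valent base vertices, which would otherwise leave the face-lift correspondence ill-defined.
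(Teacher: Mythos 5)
This statement is one of the paper's \emph{Claims}, and the paper explicitly says (end of \S\ref{sec:intro}) that Claims ``can be taken to be either unproven theorems, or conjectures'' — there is no proof of this statement in the paper to compare against, so your proposal is working on open ground.

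On its own terms, your outline identifies the right subclaims in the right order. Step~(a) is correct and easy: each $3$-valent vertex contributes a genon edge (since exactly one incident face carries a $Y$ there) which together with its three lifted edges makes both sheets $4$-valent, while cleanliness guarantees no $Y$ at $4$-valent vertices, so no phantom genon edge is demanded where the construction cannot supply one. Your remark that ``cleanliness enters decisively here to forbid genon edges at $4$-valent base vertices'' is slightly misphrased — the Genons clause of the cover construction never produces edges at $4$-valent vertices — but your underlying point is right: without cleanliness the $X$-type lift of a face carrying $Y$ at a $4$-valent vertex would need to visit both $v^1$ and $v^2$ with no genon edge to route through, and the face-lift correspondence would break. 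The real content is in steps~(b) and~(c), and these are asserted rather than proved: matching the row $(\Parity_X^{(f)},\Parity_Z^{(f)},0,0)$ to the vertex set of a lifted face boundary requires a vertex-by-vertex check that (i)~$X$ at $v$ forces the $X$-type lifted boundary onto $v^1$, $Z$ onto $v^2$, and $Y$ onto both via the genon edge, and (ii)~the sheet-switch in the lifted path occurs precisely where a domain wall crosses — equivalently, where the Pauli along the boundary flips between $X$ and $Z$. You state this but do not carry it out, and likewise the bicolouring check in~(c) is left as a ``short case analysis.'' That case analysis does appear to go through (for instance, across a domain-wall-free edge with $f$ supporting $XX$ and $g$ supporting $ZZ$, the two faces adjacent to $e^1$ are the $X$-lift of $f$ and the $Z$-lift of $g$, opposite colours; across a domain-wall edge the roles rotate; and across a genon edge the two adjacent faces are the $X$- and $Z$-lifts of the same base face), but none of this is written down. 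You are candid that ``the main obstacle is step~(b),'' which is accurate — your proposal is a credible plan and correctly locates where the difficulty lives, but as written it defers exactly the compatibility check that constitutes the proof.
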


\begin{claim}
Given a clean genon code $C$ on $\Gamma$,
then $\Double(\Gamma,C)$ is bicolourable and supports two CSS
genon codes, one of which is $\Double(C)$ and the other its dual.
\end{claim}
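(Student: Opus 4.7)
The plan is to build on Claim \ref{th:clean}, which already establishes $\Double(C)$ as a CSS genon code on $\Double(\Gamma, C)$. Given this, both assertions reduce to a direct analysis of the vertex configurations of the cover together with the allowed genon-code vertex patterns.

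First I would verify that $\Double(\Gamma, C)$ has only $4$-valent vertices. Each $4$-valent vertex of $\Gamma$ lifts to two vertices, each inheriting valence $4$ from the lifted edges. Each $3$-valent vertex $v$ of $\Gamma$ carries the unique allowed trivalent pattern $XYZ$; the face supporting the $Y$ then triggers the genon-edge construction of \S\ref{sec:double}, attaching an additional edge between the two lifts $v^1, v^2$ and raising their valences to $4$.

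Next I would use the CSS property of $\Double(C)$ to extract bicolourability. Since every face stabilizer of a CSS code is pure $X$-type or pure $Z$-type, at each $4$-valent vertex of $\Double(\Gamma, C)$ the four surrounding face stabilizers form a $Y$-free genon-code vertex configuration. Among the allowed $4$-valent patterns $XZXZ$, $XYXY$, $ZYZY$, only $XZXZ$ is $Y$-free, and it is alternating. Colouring $X$-type faces one colour and $Z$-type faces the other therefore yields a valid bicolouring of $\Double(\Gamma, C)$.

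For the second assertion, any CSS genon code on a connected bicolourable $4$-valent graph is determined by which colour class carries $X$-type stabilizers: once one face is fixed, the $XZXZ$ alternation at every vertex propagates uniquely by connectivity. This gives exactly two CSS codes supported by $\Double(\Gamma, C)$; one is $\Double(C)$ and the other is its global $X \leftrightarrow Z$ swap, i.e.\ the CSS dual obtained by applying a transversal Hadamard. The only real subtlety is confirming that every trivalent vertex of $\Gamma$ produces a genon edge, which rests on $XYZ$ being the only allowed trivalent configuration and therefore always placing a $Y$ on some incident face.
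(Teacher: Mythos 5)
The statement you are proving is one of the paper's explicitly labelled \emph{Claims}, which the introduction describes as ``either unproven theorems, or conjectures.'' The paper therefore supplies no proof, so there is nothing to compare your route against --- the question is whether your argument stands on its own.

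Your overall strategy is sound and follows what seems to be the intended picture: use Claim~\ref{th:clean} to get $\Double(C)$ as a CSS genon code on $\Double(\Gamma,C)$, observe that a CSS genon code forces the $Y$-free vertex pattern $XZXZ$ at every $4$-valent vertex of the cover, and read off the bicolouring from the resulting $X/Z$ alternation. This step is correct: cleanness of $C$ makes the lifts of $4$-valent vertices see $XZXZ$ directly, and your observation that every $3$-valent vertex of $\Gamma$ contributes a genon edge to the cover is right --- a minor wording quibble is that ``$XYZ$'' is not literally the unique trivalent pattern (any of the six orderings of $X,Y,Z$ is allowed), but since all of them contain a $Y$, the conclusion that every trivalent vertex spawns a genon edge survives.

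The one genuine gap is in the final uniqueness step. Your argument that ``once one face is fixed, the $XZXZ$ alternation at every vertex propagates uniquely by \emph{connectivity}'' silently assumes $\Double(\Gamma,C)$ is connected, and this can fail: when $C$ itself is already CSS, there are no domain walls or genons and $\Double(\Gamma,C)$ is the disjoint union of two copies of $\Gamma$, so the face $2$-colouring can be chosen independently on each component and the count is four, not two. You should either restrict to the non-CSS case (where the presence of a genon or a nontrivial domain-wall class forces the cover to be connected, and your propagation argument works verbatim), or, in the disconnected CSS case, argue separately that the two colourings $\Double(C)=C\oplus H(C)$ and its dual $H(C)\oplus C$ are the two that actually arise from the doubling construction, with the other two face-colourings on the disjoint union ruled out by whatever fixed qubit labelling the paper implicitly carries. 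As written the propagation argument does not close this case.
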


\begin{claim}
Given a genon code $C$ on $\Gamma$,
with parameters $[[n,k,d]]$ then 
$[[4,2,2]]\otimes_{\tau} \Double(C)$ is a self-dual $[[4n,2k,\ge d]]$ code.
\end{claim}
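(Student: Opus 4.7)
The plan is to combine Theorem~\ref{th:double} with the implication (1)$\Rightarrow$(3) of Theorem~\ref{th:unwrap}. By Theorem~\ref{th:double}, $\Double(C)$ is a CSS code with parameters $[[2n,2k,\ge d]]$. From the explicit form \Eref{eq:dh}, the permutation $\tau:2n\to 2n$ swapping qubit $i$ with qubit $i+n$ takes the row-span of $(\Parity_X\ \Parity_Z)$ to the row-span of $(\Parity_Z\ \Parity_X)$, and so exchanges $C_X$ with $C_Z$; visibly it is fixed-point-free and involutory. Applying (1)$\Rightarrow$(3) of Theorem~\ref{th:unwrap} to $\Double(C)$ with this particular $\tau$ then yields self-duality of $[[4,2,2]]\otimes_\tau\Double(C)$.

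Next I would read off the parameters. The $n$ orbits of $\tau$ partition the $2n$ qubits of $\Double(C)$ into $n$ pairs, along each of which we concatenate with one copy of $[[4,2,2]]$; each copy replaces $2$ qubits with $4$ physical qubits and carries $2$ logical qubits, so the concatenation has $4n$ physical qubits and preserves the outer count of $2k$ logicals.

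For the distance I would argue as follows. Any nontrivial logical $L$ of the concatenated code decodes inside each $[[4,2,2]]$ block to a Pauli $\tilde L$ on the $2n$ outer qubits, and $\tilde L$ must be a nontrivial logical of $\Double(C)$: otherwise $L$ would differ from the canonical lift of $\tilde L$ by an inner-stabilizer element, and so would itself be a stabilizer of the concatenated code, contradicting nontriviality. By Theorem~\ref{th:double}, $\tilde L$ then has weight $w\ge d$, supported on $b$ blocks with $w\le 2b$. In each of those blocks the induced inner logical is nontrivial, so has physical weight at least the inner distance $2$, giving $\mathrm{wt}(L)\ge 2b\ge w\ge d$.

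The main obstacle will be making the inner-decoding step of the distance argument fully rigorous --- in particular, verifying that every logical $L$ of minimal physical weight descends consistently to an outer logical $\tilde L$ whose weight lower-bounds the contribution of $L$ in each block. This is standard for concatenated codes but deserves care here because $[[4,2,2]]$ encodes two logical qubits per block rather than one, so the ``canonical lift'' of $\tilde L$ involves a choice of inner logical representatives compatible with the $\tau$-pairing from Theorem~\ref{th:unwrap}.
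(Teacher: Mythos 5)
The paper explicitly leaves this as a \emph{Claim}, which the introduction says ``can be taken to be either unproven theorems, or conjectures''; there is no proof in the paper to compare against. Your argument fills this gap and is, as far as I can tell, correct.

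The route via Theorem~\ref{th:double} (parameters of $\Double(C)$) combined with (1)$\Rightarrow$(3) of Theorem~\ref{th:unwrap} (self-duality of the concatenation along the canonical $\tau:i\leftrightarrow i+n$ read off from \Eref{eq:dh}) is the natural one, and the qubit/logical counts $4n,2k$ follow immediately. Your distance chain also works once one observes the following cleanly: any logical $L$ of the concatenated code commutes with the inner stabilizers, so in each block its restriction is either an inner stabilizer element or a nontrivial inner logical; letting $b$ be the number of blocks of the latter type, each contributes physical weight $\ge 2$, so $\mathrm{wt}(L)\ge 2b$; the decoded $\tilde L$ is supported on at most $2b$ outer qubits and is a nontrivial logical of $\Double(C)$, so $d\le d(\Double(C))\le w(\tilde L)\le 2b\le \mathrm{wt}(L)$.

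One small correction of emphasis: your closing worry about the ``canonical lift'' depending on a $\tau$-compatible choice of inner logical representatives is real, but it bears only on the self-duality step (inside the paper's proof of Theorem~\ref{th:unwrap}), not on the distance bound. For the distance argument any assignment of the two outer qubits of a block to the two inner logicals gives a well-defined decoding $L\mapsto\tilde L$, and the only fact needed about $[[4,2,2]]$ is that every nontrivial inner logical has weight $\ge 2$. The genuinely delicate point is in self-duality: with $\bar X_1 = XXII$, $\bar Z_1 = ZIZI$, $\bar X_2 = XIXI$, $\bar Z_2 = ZZII$ one checks that $\bar X_1$ and $\bar Z_2$ (and likewise $\bar X_2,\bar Z_1$ and $\bar X_1\bar X_2,\bar Z_1\bar Z_2$) share physical support, so the lifted $X$-stabilizer and the lifted image under $\tau$ as a $Z$-stabilizer occupy the same rows; this is what yields $C'_X=C'_Z$. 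Finally, it is worth noting that neither the claim nor your argument actually uses the hypothesis that $C$ is a genon code --- the statement holds for any $[[n,k,d]]$ stabilizer code.
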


\begin{claim}
Given a genon code $C$ on $\Gamma$,
then $\Gamma$ is bicolourable iff 
$[[4,2,2]]\otimes_{\tau} \Double(C)$ is a colour code.
\end{claim}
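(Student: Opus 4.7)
The plan is to analyze the faces of the code lattice underlying $[[4,2,2]]\otimes_{\tau}\Double(C)$ and show that the colour-code property (a trivalent 3-face-colourable lattice with both $X$ and $Z$ stabilizers supported on each face) holds if and only if $\Gamma$ is bicolourable. The lattice decomposes naturally into \emph{square-faces}, one per vertex of $\Gamma$ coming from the $XXXX$ and $ZZZZ$ stabilizers of the $[[4,2,2]]$ encoder, together with \emph{inflated-faces} coming from the stabilizers of $\Double(C)$. The square-faces manifestly carry both $X$ and $Z$ stabilizers, so the colour-code property reduces to conditions on the inflated-faces and on global 3-face-colourability.

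For the forward direction, suppose $\Gamma$ is bicolourable. Then $C$ is CSS and by Claim~\ref{th:double-css} we have $\Double(C)=C\oplus H(C)$, with $\Double(\Gamma,C)$ being two disjoint copies of $\Gamma$. Concatenation along $\tau$ then reproduces the generalized $\{4,8,8\}$-style construction of \cite{Criger2016}: each face $f$ of $\Gamma$ furnishes an $X$-stabilizer from the $C$-copy and a matching $Z$-stabilizer from the $H(C)$-copy, and after $[[4,2,2]]$-encoding (using the $ZX$-duality identity of \Fref{fig:422-ZX}) both stabilizers act on the same inflated set of physical qubits. The three face-colours are the two face-colours of $\Gamma$, lifted to the inflated faces, plus the colour of the square-faces; trivalency and correct adjacency follow from the structure of the concatenation.

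For the reverse direction, suppose $[[4,2,2]]\otimes_{\tau}\Double(C)$ is a colour code. Matching $X$- and $Z$-stabilizer supports on each inflated face, unwound through the $[[4,2,2]]$ encoder (via the logical assignments $XXII,\,ZIZI$ vs.\ $XIXI,\,ZZII$), forces the $X$-type and $Z$-type stabilizers of $\Double(C)$ to come in paired blocks whose supports are exchanged by $\tau$. This is equivalent to $\Double(C)=C\oplus H(C)$, which by Claim~\ref{th:double-css} forces $C$ to be CSS, which in turn forces $\Gamma$ to be bicolourable. The main obstacle lies in making this matched-pair structure precise: genons (giving a single identified face in $\Double(\Gamma,C)$) and domain walls (swapping fibers across an edge) obstruct the clean $X/Z$ separation of the inflated faces, and one must argue that no global combination of square-face and inflated-face stabilizers can repair the resulting imbalance. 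A clean way to dispose of this may be to note that the presence of an $X$/$Z$ pair on every inflated face forbids any genon-induced ``collapsed'' face of $\Double(\Gamma,C)$, so the double cover is trivial and $\Gamma$ must already have been bicolourable.
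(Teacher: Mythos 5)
The paper offers no proof: this statement is labelled a \emph{Claim}, which the introduction explicitly reserves for ``either unproven theorems, or conjectures.'' There is therefore nothing in the paper to compare against, and the proposal must be judged on its own merits.

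The forward direction begins with an unjustified step. You write ``suppose $\Gamma$ is bicolourable. Then $C$ is CSS'' --- but this does not follow, and is in fact false. A bicolourable genon graph on $n$ vertices supports $6n$ genon codes, all local-Clifford-equivalent, and only one local-Clifford orbit is the CSS toric code. For a concrete witness, take the CSS toric code on a bicolourable graph and apply a single physical Hadamard at one vertex: the vertex configuration rotates $XZXZ\mapsto ZXZX$, still an allowed configuration on the same bicolourable graph, yet a dimension count on the pure-$X$ and pure-$Z$ stabilizer subgroups shows the resulting code is no longer CSS. Consequently the appeal to Claim~\ref{th:double-css} does not apply directly. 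To repair this you would need either to restrict the statement to clean (or CSS) genon codes, or to show that the assignment $C\mapsto [[4,2,2]]\otimes_\tau\Double(C)$ is insensitive, up to whatever equivalence ``is a colour code'' uses, to the local-Clifford gauge choice of genon code on $\Gamma$. The latter is not automatic: by Theorem~\ref{th:lift}, a base $H$ lifts under $\Double'$ to a fiber $\mathrm{SWAP}$ (a harmless permutation), but a base $S$ lifts to a fiber $CX$, which changes $\Double(C)$ by an entangling gate on a $\tau$-pair, and it is not obvious that the $[[4,2,2]]$-concatenation turns that into a local operation on the physical qubits.

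For the reverse direction you rightly flag that your argument is only a sketch and leaves its crux open --- that genon-collapsed faces and non-trivial domain walls obstruct the clean $X/Z$ split, and that no recombination of square-face and inflated-face stabilizers can repair it. That self-assessment is accurate. As written, neither direction constitutes a proof, which is unsurprising given that the paper itself treats the statement as open.
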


\section{Example genon codes and protocols}\label{sec:examples}  

\subsection{Genus zero}

For this example, 
the genon graph $\Gamma$ is a tetrahedron inscribed on a sphere:
there are four qubits and four face stabilizers,
see \Fref{fig:cover-412}.
A nice choice for $C$ is given by 
the redundant stabilizer group generators:
$$
\langle XYZI, IXYZ, ZIXY, YZIX \rangle.
$$
Evidently, this code has a $\Integer/4$ cyclic symmetry 
$1\mapsto 2\mapsto 3 \mapsto 4\mapsto 1$,
whose action on the logical operators
$$
L_X = ZXII, \ L_Z = IZXI
$$
is a logical Hadamard gate:
\begin{align*}
L_X & \mapsto IZXI = L_Z \\
L_Z & \mapsto IIZX = L_X\cdot(ZIXY)\cdot(IXYZ).
\end{align*}
Moreover, it turns out we can perform any
of the $4!=24$ permutations on the physical qubits
followed by local Clifford gates, and these
yield all the single qubit logical Clifford gates for this code.
We tabulate the complete protocol:
$$
\begin{array}{c|c|c}
\text{permutation} & \text{local Clifford gate} & \text{logical gate} \\
\hline
(1, 2, 3, 4) & IIII & I \\
(1, 2, 4, 3) & \bS(\bH\cdot \bS\cdot \bH)(\bS\cdot \bH)(\bH\cdot \bS) & (\bH\cdot \bS\cdot \bH) \\
(1, 3, 2, 4) & (\bH\cdot \bS\cdot \bH)(\bS\cdot \bH)(\bH\cdot \bS)\bS & \bS \\
(1, 3, 4, 2) & (\bH\cdot \bS)\bS(\bH\cdot \bS\cdot \bH)(\bS\cdot \bH) & (\bH\cdot \bS) \\
(1, 4, 2, 3) & (\bS\cdot \bH)(\bH\cdot \bS)\bS(\bH\cdot \bS\cdot \bH) & (\bS\cdot \bH) \\
(1, 4, 3, 2) & \bH\bH\bH\bH & \bH \\
(2, 1, 3, 4) & (\bS\cdot \bH)(\bH\cdot \bS)\bS(\bH\cdot \bS\cdot \bH) & (\bH\cdot \bS\cdot \bH) \\
(2, 1, 4, 3) & \bH\bH\bH\bH & I \\
(2, 3, 1, 4) & \bS(\bH\cdot \bS\cdot \bH)(\bS\cdot \bH)(\bH\cdot \bS) & (\bS\cdot \bH) \\
(2, 3, 4, 1) & IIII & \bH \\
(2, 4, 1, 3) & (\bH\cdot \bS)\bS(\bH\cdot \bS\cdot \bH)(\bS\cdot \bH) & \bS \\
(2, 4, 3, 1) & (\bH\cdot \bS\cdot \bH)(\bS\cdot \bH)(\bH\cdot \bS)\bS & (\bH\cdot \bS) \\
(3, 1, 2, 4) & (\bH\cdot \bS)\bS(\bH\cdot \bS\cdot \bH)(\bS\cdot \bH) & (\bH\cdot \bS) \\
(3, 1, 4, 2) & (\bH\cdot \bS\cdot \bH)(\bS\cdot \bH)(\bH\cdot \bS)\bS & \bS \\
(3, 2, 1, 4) & \bH\bH\bH\bH & \bH \\
(3, 2, 4, 1) & (\bS\cdot \bH)(\bH\cdot \bS)\bS(\bH\cdot \bS\cdot \bH) & (\bS\cdot \bH) \\
(3, 4, 1, 2) & IIII & I \\
(3, 4, 2, 1) & \bS(\bH\cdot \bS\cdot \bH)(\bS\cdot \bH)(\bH\cdot \bS) & (\bH\cdot \bS\cdot \bH) \\
(4, 1, 2, 3) & IIII & \bH \\
(4, 1, 3, 2) & \bS(\bH\cdot \bS\cdot \bH)(\bS\cdot \bH)(\bH\cdot \bS) & (\bS\cdot \bH) \\
(4, 2, 1, 3) & (\bH\cdot \bS\cdot \bH)(\bS\cdot \bH)(\bH\cdot \bS)\bS & (\bH\cdot \bS) \\
(4, 2, 3, 1) & (\bH\cdot \bS)\bS(\bH\cdot \bS\cdot \bH)(\bS\cdot \bH) & \bS \\
(4, 3, 1, 2) & (\bS\cdot \bH)(\bH\cdot \bS)\bS(\bH\cdot \bS\cdot \bH) & (\bH\cdot \bS\cdot \bH) \\
(4, 3, 2, 1) & \bH\bH\bH\bH & I \\
\end{array}
$$ 
Using this protocol we can lift all of these
gates to logical Clifford gates on the $[[8,2,2]]$ code
by Theorem \ref{th:lift}.
We see the $(1,4,3,2)$ permutation for logical $\bH$ and the
$(1,3,2,4)$ for logical $\bS$,
as well as $(2,1,4,3)$ and $(3,4,1,2)$ for logical $I$,
agree with the anyon calculations in \S\ref{sec:genus-zero}.
The two logical gates $\bH,\bS$ generate the whole single qubit Clifford group
and will be used in the experiments \S \ref{sec:experiments} below.
We also note this $[[4,1,2]]$ code is local Clifford equivalent to the 
$[[4,1,2]]$ triangular colour code presented in \cite{Kesselring2018}, \S 5.2.

The surface codes, such as the $[[5,1,2]]$ code 
are genon codes on a sphere, \Fref{fig:genus-zero} (i). 
The missing external stabilizer forms 
the back of the sphere, and contains the domain walls.
In general, any surface code with a single boundary component
forms a genon code in this way.

As a next example we take Landahl's jaunty code \cite{Landahl2020}.
This is a $[[14,3,3]]$ code on a rhombic dodecahedron inscribed on a sphere,
\Fref{fig:genus-zero}(iv).

Below we tabulate some of these examples and their symplectic doubles.
\begin{center}
\begin{tabular}{ccl|cc}
base code   & genons & see Fig.  & symplectic double & genus \\
\hline
$[[4,1,2]]$  &  $m=4$  &  \ref{fig:cover-412} & $[[8,2,2]]$  & $g=1$ \\
$[[5,1,2]]$  &  $m=4$  &  \ref{fig:genus-zero} (i)\&(ii) & $[[10,2,3]]$  & $g=1$ \\
$[[6,2,2]]$  &  $m=6$  &  \ref{fig:genus-zero} (iii) & $[[12,4,2]]$  & $g=2$ \\
$[[14,3,3]]$ &  $m=8$  &  \ref{fig:genus-zero} (iv)\&(v) & $[[28,6,3]]$  & $g=3$ \\
\end{tabular}
\end{center}

\begin{figure*}[t]
\centering
\begin{subfigure}[t]{0.2\textwidth}
\includegraphics[scale=0.6]{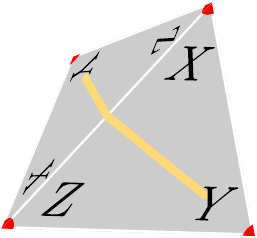}
\caption{We take a genon graph to be a tetrahedron inscribed on a sphere.}
\end{subfigure}
~
\begin{subfigure}[t]{0.3\textwidth}
\igr{tri_412}
\caption{The same graph with one face splayed out to infinity. Qubits
are numbered in red 1-4.}
\end{subfigure}
~
\begin{subfigure}[t]{0.4\textwidth}
\hsp\igr{cover_822}
\caption{The double cover, with qubits numbered as to which qubit
is covered in the base $[[4,1,2]]$ code.
The covering branch points (4 of them) and domain walls are shown in purple.
}
\end{subfigure}
\caption{A genus zero $[[4,1,2]]$ genon code is double covered by
an $[[8,2,2]]$ toric code. 
}
\label{fig:cover-412}
\end{figure*}

\begin{figure*}[]
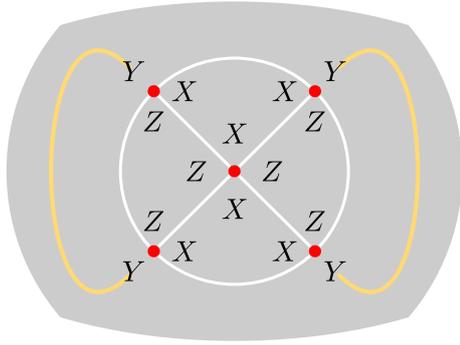
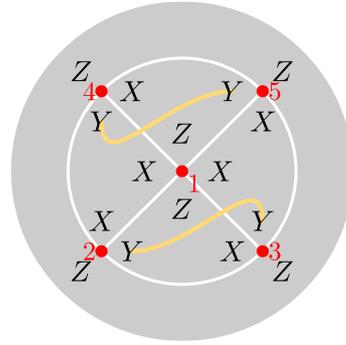
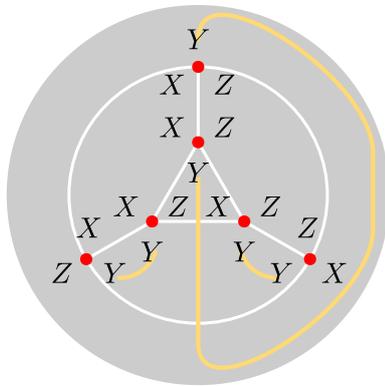
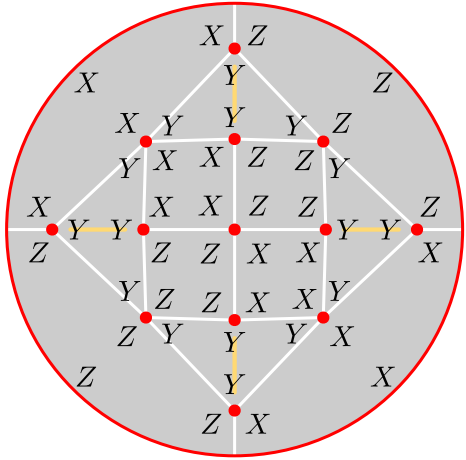
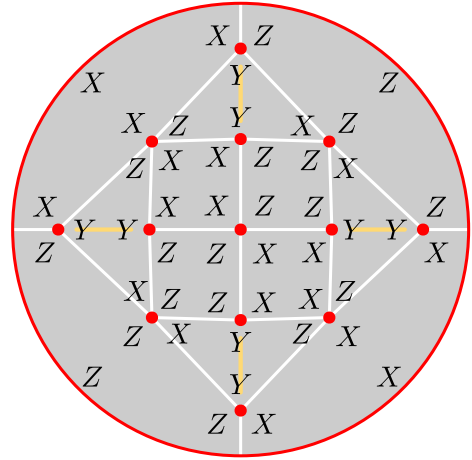

\centering
\begin{subfigure}[t]{0.4\textwidth}
\igr{surface_512} 
\caption{This is the well-known $[[5,1,2]]$ surface code.
}
\end{subfigure}
~
\begin{subfigure}[t]{0.4\textwidth}
\Hsp\igr{base_512} 
\caption{Another $[[5,1,2]]$ code on the same graph. 
This is a non-CSS code, local Clifford equivalent to (i).
}
\end{subfigure}
\begin{subfigure}[t]{0.8\textwidth}
\Hsp\Hsp\Hsp\igr{tri_622} 
\caption{
This is a $[[6,2,2]]$ code.
Qubits are placed on the vertices of a triangular
prism. There are two triangular faces and three square faces.
}
\end{subfigure}
\begin{subfigure}[t]{0.4\textwidth}
\igr{jaunty} 
\caption{Landahl's jaunty $[[14,3,3]]$ code
We show this in splayed view with one qubit stretched
out to infinity.
}
\end{subfigure}
~
\begin{subfigure}[t]{0.4\textwidth}
\igr{jaunty_LC} 
\caption{A local Clifford equivalent $[[14,3,3]]$
is in a clean configuration which means the
valence 4 vertices see only $X$ and $Z$ operators,
and ensures the doubled code is a genon code, Thm.\ref{th:clean}.
}
\end{subfigure}
\caption{Various genon codes that have genus zero.}
\label{fig:genus-zero}
\end{figure*}

\subsection{Genus one }\label{sec:xzzx} 

\begin{figure*}[]
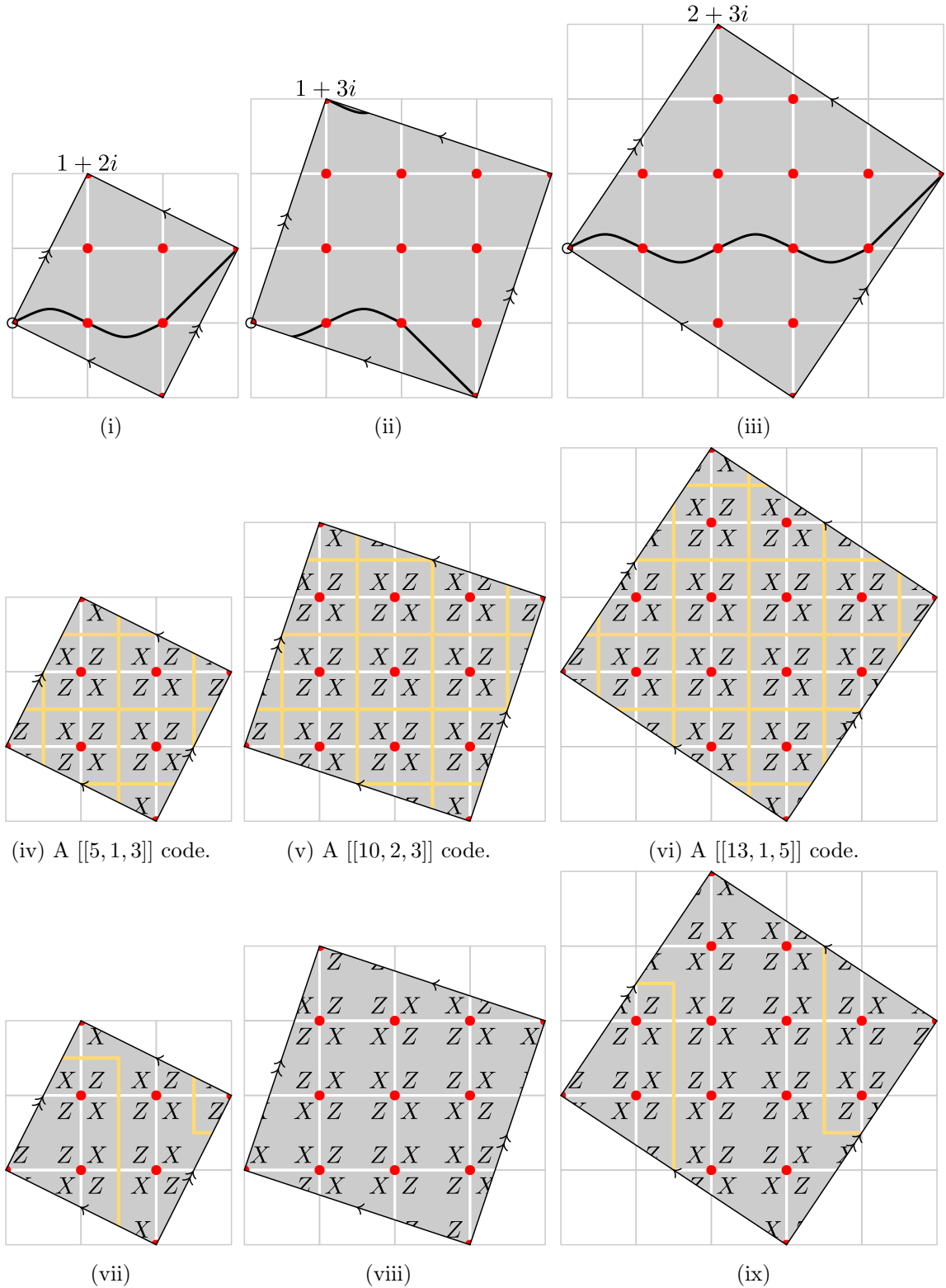

\centering
\begin{subfigure}[t]{0.22\textwidth}
\igr{gauss_12} 
\caption{ }
\end{subfigure}
~
\begin{subfigure}[t]{0.3\textwidth}
\igr{gauss_13}
\caption{ }
\end{subfigure}
~
\begin{subfigure}[t]{0.4\textwidth}
\igr{gauss_23}
\caption{ }
\end{subfigure}
\begin{subfigure}[t]{0.22\textwidth}
\igr{xzzx_12}
\caption{A $[[5,1,3]]$ code.}
\end{subfigure}
~
\begin{subfigure}[t]{0.3\textwidth}
\igr{xzzx_13}
\caption{A $[[10,2,3]]$ code.}
\end{subfigure}
~
\begin{subfigure}[t]{0.4\textwidth}
\igr{xzzx_23}
\caption{A $[[13,1,5]]$ code.}
\end{subfigure}
\begin{subfigure}[t]{0.22\textwidth}
\igr{xzzx_12_1}
\caption{ }
\end{subfigure}
~
\begin{subfigure}[t]{0.3\textwidth}
\igr{xzzx_13_1}
\caption{ }
\end{subfigure}
~
\begin{subfigure}[t]{0.4\textwidth}
\igr{xzzx_23_1}
\caption{ }
\end{subfigure}
\caption{
(i)-(iii): genon graphs built from quotients 
$\Integer[i]/\langle a+bi\rangle$ and logical string operators.
We marked the origin as well as the Gaussian integer $a+bi$.
(iv)-(vi): the $XZZX$ code and corresponding domain walls.
(vii)-(ix): local Clifford equivalence can remove (some) domain walls.
}\label{fig:xzzx}
\end{figure*}

We parameterize these by two integers $(a,b)$.
We view these periodic lattices 
as quotients of the Gaussian integers: $\Integer[i]/\langle a + bi\rangle.$
See \Fref{fig:xzzx}.
The resulting code
has parameters $[[n,k,d]]$ with 
$$
n = a^2 + b^2, \ \ 
k = \left\{\begin{array}{ll}1 & \text{if $n$ odd} \\ 2 & \text{if $n$ even}\end{array}\right.
,\ \ 
d = \left\{\begin{array}{ll}a+b & \text{if $n$ odd} \\ \max(a,b) & \text{if $n$ even}\end{array}\right.
.
$$
See \cite{Sarkar2021}, Theorem 3.9.,
where they define
a family of codes called the cyclic toric codes when $\gcd(a,b)=1$.

We make a table of some of these:
$$
{\small 
\begin{array}{r|rrrrrrrr}
  &          a=0  &         a=1   &          a=2   &      a=3      &      a=4      &        a=5    &        a=6     &       a=7    \\
\hline 
b=2 &[[4, 2, 2]]  & [[5, 1, 3]]  &[[8, 2, 2]]  &             &             &              &            &      \\
b=3 &[[9, 1, 3]]  & [[10, 2, 3]] &[[13, 1, 5]] &[[18, 2, 3]] &             &              &            &    \\
b=4 &[[16, 2, 4]] & [[17, 1, 5]] &[[20, 2, 4]] &[[25, 1, 7]] &[[32, 2, 4]] &              &            &    \\
b=5 &[[25, 1, 5]] & [[26, 2, 5]] &[[29, 1, 7]] &[[34, 2, 5]] &[[41, 1, 9]] &[[50, 2, 5]]  &            &     \\
b=6 &[[36, 2, 6]] & [[37, 1, 7]] &[[40, 2, 6]] &[[45, 1, 9]] &[[52, 2, 6]] &[[61, 1, 11]] &[[72, 2, 6]]&       \\
b=7 &[[49, 1, 7]] & [[50, 2, 7]] &[[53, 1, 9]] &[[58, 2, 7]] &[[65, 1, 11]]& [[74, 2, 7]] &[[85,1,13]]  &  [[98,2,7]]   \\
\end{array}
}
$$

\begin{figure*}[]
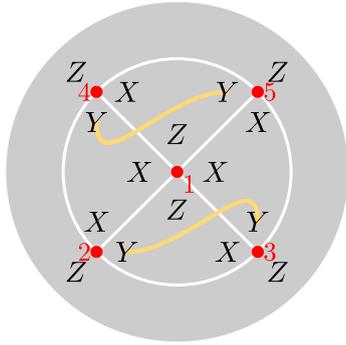
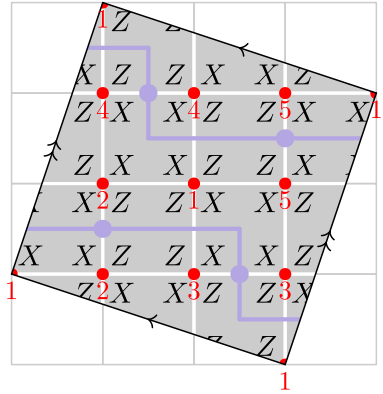
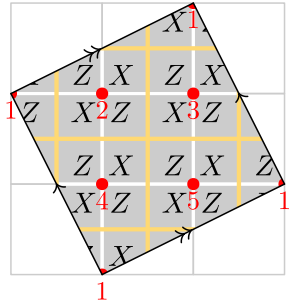
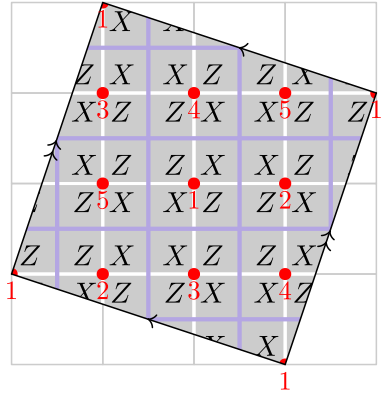

\centering
\begin{subfigure}[t]{0.4\textwidth}
\igr{base_512}
\caption{The genus zero base code with parameters $[[5,1,2]]$ }
\end{subfigure}
\begin{subfigure}[t]{0.4\textwidth}
\igr{cover_512}
\caption{The $[[10,2,3]]$ double cover.}
\end{subfigure}
\begin{subfigure}[t]{0.4\textwidth}
\igr{base_513}
\caption{The genus one base code is the $[[5,1,3]]$ code.}
\end{subfigure}
\begin{subfigure}[t]{0.4\textwidth}
\igr{cover_513}
\caption{The $[[10,2,3]]$ double cover.}
\end{subfigure}
\caption{
On the $[[10,2,3]]$ code 
there is a total of six $ZX$-dualities satisfying the requirements of 
Theorem \ref{th:unwrap}.
Five of these correspond to genus zero base code, and the other is the
genus one base code.
Here we show an example of a genus zero base code (i) as covered by (ii),
as well as a genus one code (iii) as covered by (iv).
The qubits are enumerated in each base, and these numbers are lifted into the
respective cover.
}
\label{fig:ten-two-three}
\end{figure*}

So far all these genus one codes have no genons.
In \Fref{fig:sixgenon} we show a $[[12,4,3]]$ genus one code with six genons.

\begin{figure*}[]
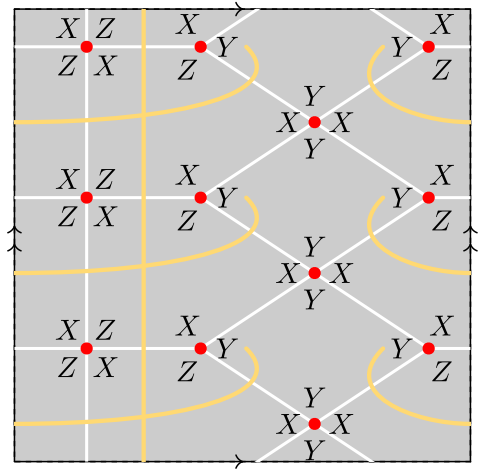

\centering
\igr{_genon_12_4_3_walls}
\caption{This genus one code has six genons and parameters $[[12,4,3]]$.}
\label{fig:sixgenon}
\end{figure*}

\section{Experimental results}\label{sec:experiments} 
The permutations required for implementing genon protocols
and fault-tolerant gates resulting from
the lifting theorem, can be efficiently realized on a hardware with high-connectivity.
In architectures with fixed qubit structures and thus restricted connectivity, qubit permutations are realized through circuit-level swapping, moving information between two separate qubits by performing a series of local swaps of adjacent qubits between them, thus increasing the overall circuit depth.
In systems with arbitrary connectivity, qubit permutations can be realized through a simple "relabelling" of the physical qubits. 

Quantinuum's trapped-ion H1-1 device is based the quantum CCD (QCCD) architecture \cite{Wineland98, Pino2021} realizes all-to-all connectivity by using ion-transport operations. 
Ions are able to physically move around the linear trap, physically swapping locations as needed. 
As such, the H1-1 realizes fault-tolerant genon protocols with little overhead, incurring only noise from transport as no circuit-level swapping is required.

The H1-1 device uses 20 $^{171}$Yb$+$ ions for physical qubits and 20 $^{138}$Ba$+$ ions for sympathetic cooling, totally to 40 physical ions in the trap.  Gates are executed via stimulated Raman transitions implemented with off-resonant lasers directed in five distinct regions. 
During the course of these experiments, the average physical fidelities for the single-qubit and two-qubit gates, as measured by randomized benchmarking~\cite{Magesan2011} experiments averaged over all five gate zones, were $3.2(5)\times 10^{-5}$ and $9.2(5) \times 10^{-4}$ respectively. 
State preparation and measurement errors were also measured to be $2.7(1) \times 10^{-3}$.

Idling and transport error are more difficult to more difficult to characterize in the QCCD architecture as different circuits require different transport sequences. 
Indeed, depending on the circuit, certain permutations may not require additional transport. 
This gives the opportunity but not a prior guarantee for the compiler to reduce transport costs to next to zero, potentially allowing for very low error rate over heads. But more work needs to be done to study transport overheads for these sort of logical gates in practice. 
We leave it to future work to characterize how such transport impacts the logical gate fidelity of the protocol realized. For further description of the H1-1 hardware benchmarks and specifications, see ~\cite{Pino2021, qtuumspec}.

In this work, we realized three different experimental implementations of the theory work outlined above on the H1-1 QCCD device.
First, we realized genon braiding in the $[[4,1,2]]$, done by performing local Cliffords and qubit permutations. Such permutations are physically realized through ion-transport.
We demonstrate the ability to produce the full Clifford group through the demonstration of \textit{logical} randomized benchmarking, thus showcasing the power of the genon braiding technique. 
Next, by lifting the logical $S$ gate from the $[[4,1,2]]$ code, we benchmark a logical $CX$ gate of the [[8,2,2]] code, the double cover of the $[[4,1,2]]$. This logical gate, involving qubit permutations, again is efficiently realized through the qubit relabelling enabled by ion transport primitives. 
Finally, we realize another implementation two qubit logical gate from lifting the transversal $SH$ gate on the $[[5,1,3]]$ code. We benchmark this gate in a similar manner to the $CX$ on the $[[8,2,2]]$ but this time require proper decoding (rather than post-selection).

\subsection{The [[4,1,2]] protocol}
\begin{figure*}[]
\centering
\begin{subfigure}[t]{0.45\textwidth}
\igr{412_encode}
\caption{Unitary encoding circuit. 
Input qubits are numbered $1-4$.
(De-)stabilizer inputs are qubits $1-3$, 
and the logical input qubit is qubit $4$.
}
\end{subfigure}
\begin{subfigure}[t]{0.35\textwidth}
\igr{412_prep}
\caption{Preparing a logical $\ket{0}$ state.}
\end{subfigure}
\begin{subfigure}[t]{0.25\textwidth}
\igr{412_H}
\caption{Logical $H$ gate from swapping genons on qubits 2 and 4.}
\end{subfigure}
~
\begin{subfigure}[t]{0.25\textwidth}
\igr{412_S}
\caption{Logical $S$ gate from swapping genons on qubits 2 and 3.}
\end{subfigure}
\caption{The [[4,1,2]] protocol implementing single qubit Clifford gates by braiding (swapping) genons.}
\label{fig:412-protocol}
\end{figure*}

\begin{figure*}[]
\centering
\includegraphics[scale=0.5]{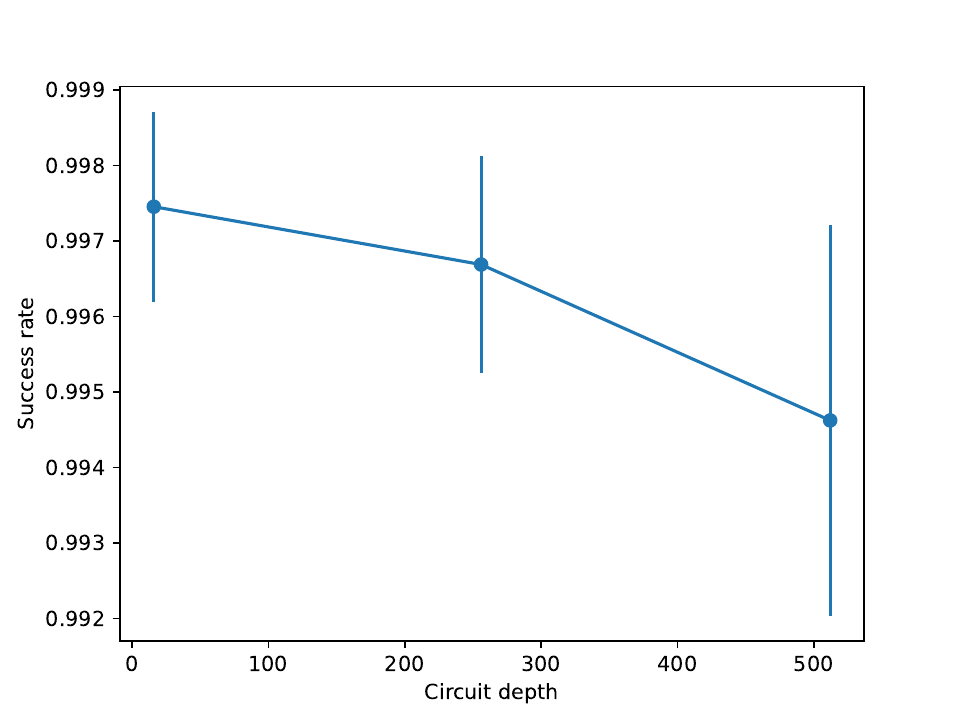}
\caption{The survival probability from the randomized benchmarking protocol on the $[[4,1,2]]$ code, as realized on the H1-1 machine. Circuit depths 16 and 256 were ran with 1600 shots each and circuit depth 512 was ran with 800 shots. We note this graph seems to show a quadratic decay as opposed to the linear decay more commonly seen by randomized benchmarking.}
\label{fig:412-results}
\end{figure*}
As a demonstration of genon braiding,
we ran \textit{logical randomized benchmarking}~\cite{Magesan2011, combes2017logical, mayer2024benchmarking},
on the $[[4,1,2]]$ genon code 
using the circuits shown in Fig.~\ref{fig:412-protocol}. 
See Appendix \ref{sec:qasm-412} for example QASM.
The protocol proceeds in 3 steps:

(Step 1)
We begin by preparing the logical $\ket{0}$ state using
the circuit in \Fref{fig:412-protocol}(ii).

(Step 2)
We apply a random sequence of $N$ logical Clifford gates.
There are 192 Clifford gates, or 24 up to phase, to choose from.
This group is generated by the $H$ and $S$ gates, 
which at the physical level is 
realized through concatenation of 
the circuits in Fig.~\ref{fig:412-protocol}(iii) and (iv).
We also use the redundant Clifford generators $X$ and $Z$,
coming from the logical Pauli operators of the $[[4,1,2]]$ code.
Each resulting concatenation is compiled for the H1-1 device 
into at most four single qubit Clifford gates;
qubit permutations are implemented via 
relabelling.

(Step 3)
We apply the inverse of the encoding circuit 
in \Fref{fig:412-protocol}(i). This gives
syndrome qubits 1,2 and 3 and (unencoded) logical qubit 4. 
We apply the inverse of the Clifford operation in (Step 2)
on the qubit 4, and then measure all
qubits in the $\bra{0,1}$ basis.

We treat this as an \emph{error detecting} code,
so the result of the measurement is discarded if
any of the syndrome bits are non-zero. 
Otherwise we record \emph{survival} if the logical bit is zero.
We ran three different random sequences of lengths $N = 16,\ 256$ and 512.
Using Step 1-3 above we sample 16 circuits 
for each of $N=16$ and $N=256$, and for $N=512$ we sampled 8 circuits.
Each circuit is then executed for 100 shots.
The discard counts were 28/1600, 90/1600, and 56/800 respectively.
The survival probability 
is plotted in Fig.~\ref{fig:412-results}.

In general, it can be hard to extract the logical fidelity from the randomized benchmarking protocol without interleaving QEC cycles ~\cite{mayer2024benchmarking}. 
From the randomized benchmarking protocol, we would expect to see a linear decay in the survival probability resulting from accumulation of errors \cite{Magesan2011}. 
However, we observe that the curve seen in Fig.~\ref{fig:412-results}, matches a quadratic fit instead of a linear one. 
Further investigation is needed to conclude whether such a logical randomized benchmarking experiment, without interleaving QEC cycles, is sufficient to extract a reliable logical error rate and is outside the scope of this work. 
Here we use the randomize benchmarking protocol as a proof of practice that genon braiding on the $[[4,1,2]]$ can be used to realize the full Clifford group and easily implemented on the H1-1 device.

\subsection{The [[8,2,2]] protocol}

In this experiment we benchmark a logical $CX$ gate
on the $[[8,2,2]]$ toric code.
This code is the double cover of the $[[4,1,2]]$ base code,
and the logical $CX$ is the lift of the 
logical $S$ gate on the base code
given in \Fref{fig:412-protocol} (iv),
using Theorem \ref{th:lift}.
The fibers over the base qubits $1,2,3,4$ are 
$\{1,5\},\{2,6\},\{3,7\},\{4,8\}$ respectively,
see \Fref{fig:822-protocol}.

To benchmark this $CX$ gate we run $4+8=12$ circuits comprised of
the following:
\begin{itemize}
\item State preparation and measurement (SPAM) circuits, for each of logical
$\ket{00}, \ket{11}, \ket{++},$ and $\ket{--}.$
\item The action of the logical $CX$ on the two qubit computational logical states 
$\ket{00}, \ket{01}, \ket{10},$ and $\ket{11}$ as well as 
the two qubit phase logical states 
$\ket{++}, \ket{+-}, \ket{-+},$ and $\ket{--}$.
\end{itemize}
At the end of the circuit we measure the
qubits in the 
$\ket{0,1}^{\tensor 8}$ basis for the computational logical states,
or the 
$\ket{+,-}^{\tensor 8}$ basis for the phase logical states.
As this is an error detecting code, 
if the measurement fails to satisfy the $Z$-checks, or $X$-checks
of the $[[8,2,2]]$ code respectively, the result is discarded. 
Otherwise, an error is recorded when the measurement fails to lie within
the $X$ or $Z$-type stabilizer group of the $[[8,2,2]]$ code respectively.

Each of these circuits is run for 5000 shots. 
We also simulate these circuits for 50000 shots, with the
error count then divided by 10.
These results are tabulated in Table~\ref{Tab:822_states}
with the logical fidelities calculated in Table~\ref{Tab:822_Fids}.
The (simulated) overall accept probability was $96\pm 1\%$. 
From the experiments, it appears that the $X$ basis is a more robust to noise than the $Z$, having no logical errors in SPAM or the $CX$ experiments. We attribute the difference between the two basis to the difference circuit depth of the encoding circuits seen in Fig.~\ref{fig:822-protocol}. We note that these encoding circuits were not tested for fault-tolerant properties, meaning it was not confirmed that higher weight errors do not propagate through. Further analysis is needed to construct shallower, fault-tolerant encoding circuits and an general encoding protocol, but we leave this to future work.

\begin{center}
\begin{table}
\begin{tabular}{|c|c|c|c|}
\hline
logical &\nsp logical \nsp& \nsp experimental\nsp    &\nsp simulated\nsp \\
operation &  state      &  errors         & errors  \\
\hline
$I$ &  $\ket{00}$           &   3    &  4.8  \\
\hline
$I$ &  $\ket{11}$           &   4    &  6.9  \\
\hline
\hline
$CX_{1,0}$ &  $\ket{00}$     &   0    &  4.3  \\
\hline
$CX_{1,0}$ &  $\ket{01}$     &   2    &  4.7  \\
\hline
$CX_{1,0}$ &  $\ket{10}$     &   2    &  3.6  \\
\hline
$CX_{1,0}$ &  $\ket{11}$     &   3    &  4.6  \\
\hline
\end{tabular}
\begin{tabular}{|c|c|c|c|}
\hline
logical &\nsp logical \nsp& \nsp experimental\nsp    &\nsp simulated\nsp \\
operation &  state      &  errors         & errors  \\
\hline
$I$ &  $\ket{++}$           &   0    &  0.5  \\
\hline
$I$ &  $\ket{--}$           &   0    &  0.6  \\
\hline
\hline
$CX_{1,0}$ &  $\ket{++}$     &   0    &  0.7  \\
\hline
$CX_{1,0}$ &  $\ket{+-}$     &   0    &  0.5  \\
\hline
$CX_{1,0}$ &  $\ket{-+}$     &   0    &  0.3  \\
\hline
$CX_{1,0}$ &  $\ket{--}$     &   0    &  1.0  \\
\hline
\end{tabular}
\caption{The number of logical errors found for the $[[8,2,2]]$ code simulations and experiments. Here the logical operation $I$ is meant to imply the state preparation and measurement errors seen while preparing the individual two-qubit logical states. A series of experiments were also performed implementing the logical $CX$ gate between the two logical qubits contained in the $[[8,2,2]]$ code block.}
\label{Tab:822_states}
\end{table}
\end{center}

\begin{table}
\begin{center}
\begin{tabular}{|c | c | c | c |}
    \hhline{~|---|} \multicolumn{1}{c |}{}& X basis & Z basis & Average \\
    \hhline{|----|}
 $SPAM_{exp}$ & $1.0000_{-2}^{+0}$ & $0.9993_{-3}^{+3}$ & $0.9996_{-2}^{+2}$   \\
 \hhline{|----|}
 $SPAM_{sim}$ & $0.9999_{-3}^{+3}$ & $0.9988_{-1}^{+1}$ & $0.9993_{-1}^{+1}$   \\
 \hhline{|----|}
 $CX_{exp}$ & $1.0000_{-2}^{+0}$ & $0.9996_{-2}^{+2}$ & $0.9998_{-1}^{+1}$   \\
 \hhline{|----|}
 $CX_{sim}$ & $0.99987_{-4}^{+4}$ & $0.9991_{-1}^{+1}$ & $0.9995_{-3}^{+3}$   \\
\hhline{|----|}
\end{tabular}
\caption{The logical fidelities for state preparation and measurement (SPAM) as well as the $CX$ implementation for the $[[8,2,2]]$ code. We note that $X$ basis appears to perform better than $Z$ basis, which we attribute to the depth of the encoding circuits used. }
\label{Tab:822_Fids}
\end{center}
\end{table}

\begin{figure*}[]
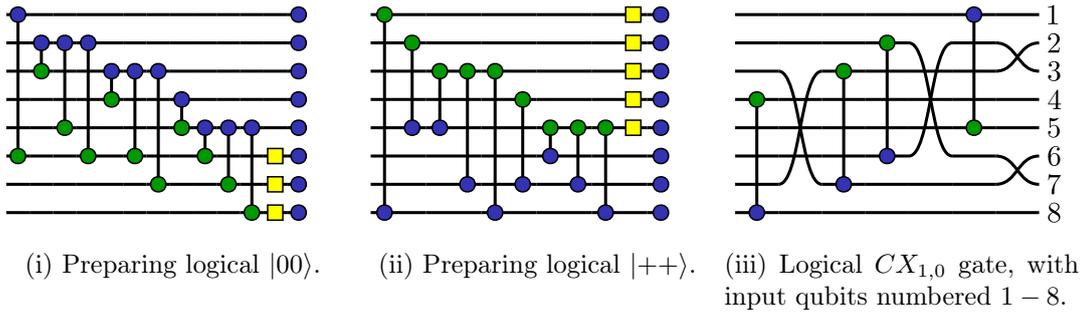

\centering
\begin{subfigure}[t]{0.30\textwidth}
\igr{822_prep_00}
\caption{Preparing logical $\ket{00}$.}
\end{subfigure}
\begin{subfigure}[t]{0.30\textwidth}
\igr{822_prep_pp}
\caption{Preparing logical $\ket{++}$.}
\end{subfigure}
\begin{subfigure}[t]{0.30\textwidth}
\igr{822_gate}
\caption{Logical $CX_{1,0}$ gate, with input qubits numbered $1-8$.
}
\end{subfigure}
\caption{The [[8,2,2]] Dehn twist protocol for benchmarking a logical $CX$ gate. We note these encoding circuit have not been tested for fault-tolerant properties, and may lead to higher-weight physical errors to propagate to the logical state.}
\label{fig:822-protocol}
\end{figure*}

\subsection{The [[10,2,3]] protocol}

Here we benchmark the two qubit logical gate:
$$
g=CX_{0,1}\cdot SWAP 
$$
This is found as the lift of the order three (up to phase)
transversal Clifford gate $SH$ on the $[[5,1,3]]$ base code, using 
Theorem \ref{th:lift}.
See \Fref{fig:1023-protocol}.
We follow a similar benchmarking procedure as for the $[[8,2,2]]$
protocol above with 12 different circuits, 
except that instead of discarding measurements that
present a non-zero syndrome we now infer a logical correction operator
from the syndrome data 
using a \emph{decoder algorithm}.
This decoder accepts one of $2^4=16$ possible syndrome bits and outputs
the most likely of the $2^2=4$ logical correction operators.
We pre-calculate all of these using simulated data on $10^5$ shots,
and generate a lookup table.
Note that we build a separate lookup table for each of the 16 circuits,
in this way the decoder is aware of the specific noise model of that circuit.
This improves the performance of the benchmark substantially.
The shots where the decoder fails to give the correct logical
operation are then recorded as errors.

We tabulate experimental results in Table ~\ref{Tab:1023_states} and fidelities of these operations in Table ~\ref{Tab:1023_Fids}.
Each circuit is run for 2000 shots.
We also simulate each circuit for 
$2\times 10^5$ shots and then normalize to $2000$ shots.
Similar to the $[[8,2,2]]$ results, we see a difference between the $X$ basis and $Z$ basis, with the $X$ performing a bit better. We again attribute this to the circuit depth of the encoding circuits as seen in Fig.~\ref{fig:1023-protocol}. 

\begin{table}
\begin{center}
\begin{tabular}{|c|c|c|c|}
\hline
logical &\nsp logical \nsp& \nsp experimental\nsp    &\nsp simulated\nsp \\
operation &  state      &  errors         & errors  \\
\hline
$I$ &  $\ket{00}$           &   4    &  6.68  \\
\hline
$I$ &  $\ket{11}$           &   2    &  7.14  \\
\hline
\hline
$g$ &  $\ket{00}$     &   6    &  14.18  \\
\hline
$g$ &  $\ket{01}$     &   7    &  13.84  \\
\hline
$g$ &  $\ket{10}$     &   4    &  14.29  \\
\hline
$g$ &  $\ket{11}$     &   2    &  14.97  \\
\hline
\end{tabular}
\begin{tabular}{|c|c|c|c|}
\hline
logical &\nsp logical \nsp& \nsp experimental\nsp    &\nsp simulated\nsp \\
operation &  state      &  errors         & errors  \\
\hline
$I$ &  $\ket{++}$           &   1    &  3.05  \\
\hline
$I$ &  $\ket{--}$           &   3    &  3.16  \\
\hline
\hline
$g$ &  $\ket{++}$     &   8    &  12.30  \\
\hline
$g$ &  $\ket{+-}$     &   2    &  12.65  \\
\hline
$g$ &  $\ket{-+}$     &   3    &  12.58  \\
\hline
$g$ &  $\ket{--}$     &   4    &  13.22  \\
\hline
\end{tabular}
\caption{The number of logical errors found for the $[[10,2,3]]$ code simulations and experiments. Here the logical operation $I$ is meant to imply the state preparation and measurement errors seen while preparing the individual two-qubit logical states. A series of experiments were also performed implementing the logical $g$ gate between the two logical qubits contained in the $[[10,2,3]]$ code block.}
\label{Tab:1023_states}
\end{center}
\end{table}

\begin{table}
\begin{center}
\begin{tabular}{|c | c | c | c |}
    \hhline{~|---|} \multicolumn{1}{c |}{}& X basis & Z basis & Average \\
    \hhline{|----|}
 $SPAM_{exp}$ & $0.9990_{-7}^{+7}$ & $0.9985_{-8}^{+8}$ & $0.9987_{-7}^{+7}$   \\
 \hhline{|----|}
 $SPAM_{sim}$ & $0.99844_{-8}^{+8}$ & $0.9965_{-1}^{+1}$ & $0.9974_{-1}^{+1}$   \\
 \hhline{|----|}
 $g_{exp}$ & $0.997_{-1}^{+1}$ & $0.997_{-1}^{+1}$ & $0.997_{-1}^{+1}$   \\
 \hhline{|----|}
 $g_{sim}$ & $0.9936_{-1}^{+1}$ & $0.9928_{-1}^{+1}$ & $0.9932_{-1}^{+1}$   \\
\hhline{|----|}
\end{tabular}
\caption{The logical fidelities for state preparation and measurement (SPAM) as well as the $g$ implementation. We note that $X$ basis appears to perform better than $Z$ basis, which we attribute to the depth of the encoding circuits used. }
\label{Tab:1023_Fids}
\end{center}
\end{table}

\begin{figure*}[]
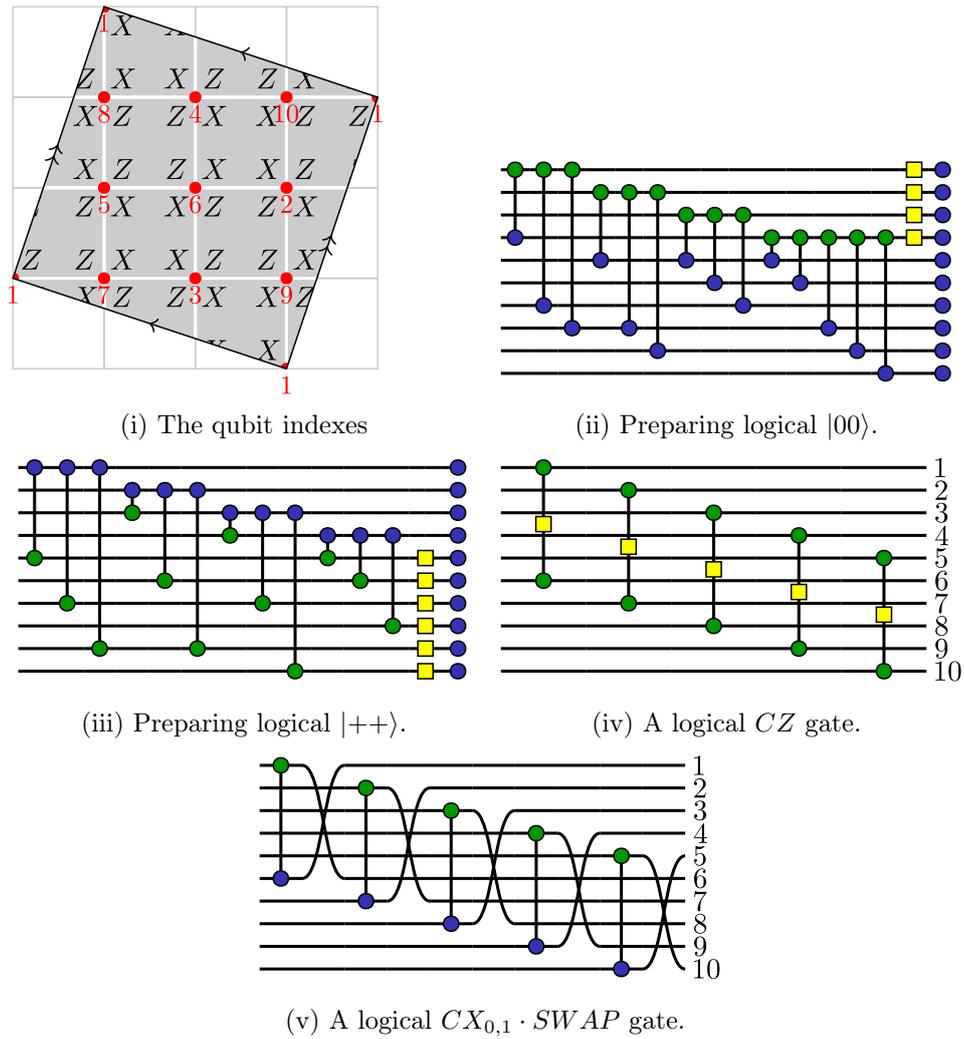

\centering
\begin{subfigure}[t]{0.40\textwidth}
\igr{idxs_10_2_3}
\caption{The qubit indexes}
\end{subfigure}
\begin{subfigure}[t]{0.40\textwidth}
\igr{10_2_3_prep_00}
\caption{Preparing logical $\ket{00}$.}
\end{subfigure}
\begin{subfigure}[t]{0.40\textwidth}
\igr{10_2_3_prep_pp}
\caption{Preparing logical $\ket{++}$.}
\end{subfigure}
\begin{subfigure}[t]{0.40\textwidth} \igr{10_2_3_cz} \caption{A logical $CZ$ gate.  } \end{subfigure} 
\begin{subfigure}[t]{0.40\textwidth} \igr{10_2_3_cx} \caption{A logical $CX_{0,1}\cdot SWAP$ gate.  } \end{subfigure}
\caption{The [[10,2,3]] protocol is derived from the $[[5,1,3]]$ base code.}
\label{fig:1023-protocol}
\end{figure*}

\section{Conclusion}\label{sec:conc} 

As the field of quantum error correction advances, significant
progress has been made in the design of 
quantum low-density parity check codes
with favorable properties. However,
many of these codes are limited by their scarcity of
fault-tolerant logical operations. In this work we seek
to \emph{co-design} quantum codes that have both reasonable
scaling properties as well as giving fault-tolerant logicals
beyond the Pauli group.

Our study explores the use of covering spaces, a concept
that underlies mathematical fields including
Galois theory, number theory, and algebraic topology.
Specifically, we focus on double covers within the realms
of symplectic geometry, quantum codes, Riemann surfaces,
and topological quantum codes. This multidisciplinary
approach underscores a broader theoretical idea:
two-dimensional topologically ordered systems should
exhibit a correspondence between domain walls and covering
spaces, particularly in the context of abelian domain walls.

A significant contribution of our work is the explicit
protocol we develop for braiding genons and performing
Dehn twists. This protocol leverages qubit permutations
and constant depth Clifford circuits, which are efficiently
realizable on quantum computers with high connectivity,
such as Quantinuum’s H1-1. The practical implementation
of these gates results in robust logical fidelity, showcasing
the experimental viability of our approach.

Non-Clifford gates are essential for achieving universal fault-toleran
quantum computation. While Clifford gates alone form
a useful set for many quantum operations, they are insufficient
for universal quantum computing. Our construction lays
the groundwork for integrating non-Clifford gates into
the topological code framework, a critical step for universal
fault-tolerant computation. Further research is required
to fully develop and implement non-Clifford gates within
these codes. Nonetheless, the methods and constructions
found in this work appear promising and compatible with
existing approaches, suggesting a viable pathway toward
their realization.

Looking ahead, our findings suggest promising directions for
further exploration. The correspondence between domain
walls and covering spaces observed in two-dimensional
topological systems could extend to three-dimensional
systems. Such systems might exhibit defects whose statistics
enable the generation of non-Clifford gates, pushing
the boundaries of fault-tolerant quantum computation.
Drawing inspiration from number theory, algebraic geometry,
and related fields, we envision the development of more
sophisticated quantum codes and fault-tolerant operations
that could revolutionize quantum computing.

In conclusion, our work lays the groundwork for a unified
theory that bridges diverse mathematical disciplines
and advances the design of quantum error-correcting codes.
By integrating twists, defects, and higher-dimensional
topological structures, we open new pathways toward achieving
versatile and scalable quantum computation with enhanced
fault tolerance.

{\bf Acknowledgements.}
The authors thank Thomas Scruby, Michael Vasmer, Karl Mayer, 
 Shival Dasu, Dan Gresh, and Pablo Andres-Martinez
 for useful conversations and feedback on this work.

\bibliography{refs}{}
\bibliographystyle{abbrv}
\appendix
\section{Fault-Tolerance of Fiber Transversal Gates}\label{sec:proof}
Given a base code with $m\times 2n$ check matrix $\Parity = \bigl( \Parity_X\ \Parity_Z \bigr)$, the doubled code $\Double(\Parity)$ has a $2m\times 4n$ parity check matrix of the form \ref{eq:dh}, repeated here for convenient reference:
\begin{align}
\Double(\Parity) := 
\begin{pmatrix}
\Parity_X & \Parity_Z & 0 & 0 \\
0 & 0 & \Parity_Z & \Parity_X 
\end{pmatrix}.
\end{align}
\begin{theorem}\label{th:ft}
Given a fault-tolerant gate on quantum code $C$ with parameters $[[n,k,d]]$, the lifted gate on the doubled code $\Double(C)$ is also fault-tolerant to at least distance $d$.
\end{theorem}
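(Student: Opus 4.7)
The plan is to upgrade the distance argument of Theorem \ref{th:double} from raw qubit weight to \emph{fiber weight}, because it is fiber weight, rather than qubit weight, that is preserved under propagation through a fiber-transversal gate.

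For $v \in \Field_2^{2n}\oplus\Field_2^{2n}$ (the symplectic space of $\Double(C)$), let the fiber weight $w_f(v)$ be the number of base indices $i \in \{1,\ldots,n\}$ for which $v$ is non-trivial on at least one of the two qubits of the fiber over $i$; clearly $w_f(v) \le w(v) \le 2\,w_f(v)$. First I would establish a fiber-weight bound on logical operators: every $v \in \Double(C)^\perp \setminus \Double(C)$ has $w_f(v) \ge d$. Writing $v = v_1 \oplus v_2$ with $v_1, v_2$ the restrictions to the two copies of the base code, the proof of Theorem \ref{th:double} already shows that $v_1$ and $\Omega_n v_2$ both lie in $C^\perp$ and that $v \notin \Double(C)$ forces at least one of them outside $C$. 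That half has weight at least $d$ as a vector in $\Field_2^n\oplus\Field_2^n$, and each qubit in its support corresponds to a distinct base-code fiber, so $w_f(v) \ge d$.

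Next I would track faults through the gate. A fiber-transversal Clifford factors as $U = \bigotimes_{i=1}^n U_i$ with $U_i$ a two-qubit Clifford on the fiber over $i$, and the same local product structure persists for every partial execution of $U$. Conjugating an injected Pauli $E$ by any such partial product maps the support of $E$ into the union of the fibers it already touched, so propagation never increases $w_f$. Hence a history of $t$ elementary faults, each of fiber weight at most one, yields a cumulative post-gate error $\tilde E$ with $w_f(\tilde E) \le t$.

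Combining the two steps, if $t < d$ then $w_f(\tilde E) < d$, so $\tilde E$ is not a non-trivial logical; it is either a stabilizer of $\Double(C)$ or has non-trivial syndrome, which is the content of distance-$d$ fault tolerance. The one subtlety I expect to be a real obstacle is the bookkeeping in the definition of an elementary fault: a single depolarizing event during a two-qubit fiber gate can leave a correlated two-qubit Pauli supported on both qubits of a fiber, and this must still be accounted as fiber weight one. This is the natural convention for fault counting in fiber-transversal circuits and is essentially forced by the geometry of the gate, but it deserves to be stated explicitly before the propagation argument is deployed.
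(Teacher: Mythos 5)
Your fiber-weight approach is genuinely different from the paper's Appendix A proof, which sets up a correspondence between fault syndromes on the base and doubled codes and then converts a base-code decoder into one for the doubled code. Packaging the argument into a single invariant that a fiber-transversal circuit cannot increase is cleaner, and it makes the propagation step and the final distance argument nearly automatic. But your derivation of the key lemma has a real error.

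You decompose a logical $v \in \Double(C)^\perp$ as $v = v_1 \oplus v_2$, with $v_1, v_2$ the restrictions to the two copies of the base code (qubits $1,\ldots,n$ and $n+1,\ldots,2n$), and cite the proof of Theorem~\ref{th:double} for the claim that $v_1$ and $\Omega_n v_2$ lie in $C^\perp$. That is not what the proof of Theorem~\ref{th:double} shows, and the claim is false. The decomposition used there is $v = (v_X, v_Z)$ into the $X$-part and the $Z$-part of the Pauli, each an element of $\Field_2^{2n}$ that is then reread as a vector in the base symplectic space $\Field_2^n \oplus \Field_2^n$; it is $v_X$ and $\Omega_n v_Z$ that lie in $C^\perp$. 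The copy restriction is a different slicing, and because every stabilizer of $\Double(C)$ has support spanning both copies, the commutation constraint does not factor through it. Concretely, in the doubled $[[4,1,2]]$ code the operator $I Y I I\, X I Z I$ is a logical operator of the $[[8,2,2]]$ code (it commutes with all six stabilizers), yet its restriction $I Y I I$ to the first copy anticommutes with the base stabilizer $I X Y Z$, so $v_1 \notin C^\perp$.

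Your lemma itself — every $v \in \Double(C)^\perp \setminus \Double(C)$ satisfies $w_f(v) \ge d$ — is nevertheless true, and the rest of your argument goes through once it is established properly. The fix is simply to use the decomposition Theorem~\ref{th:double} actually uses: if $v_X \notin C$, then $v_X$ is a nontrivial base logical of symplectic weight at least $d$, and each base qubit $i$ in the support of $v_X$ forces $v$ to be non-identity on qubit $i$ or on qubit $n+i$, so fiber $i$ lies in the support of $v$; hence $w_f(v) \ge w(v_X) \ge d$, and symmetrically if $\Omega_n v_Z \notin C$. With that repair, your proof is correct and stands as a clean alternative to the paper's syndrome-correspondence argument.
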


\begin{proof}
This proof will show a correspondence between the syndromes of faults on the base and lifted codes. If the gate on the base code tolerates the original fault up to weight $t=\lfloor \frac{d-1}{2} \rfloor$, the lifted gate on the doubled code tolerates the transformed fault. \\
If a gate on the base code is supported on qubit $i$ on the base code, in the doubled code, the lifted gate has a two-qubit gate supported on qubits $i$ and $i+n$.\\
Given a fault $f\in \Field_2^{n}\oplus\Field_2^n$ written in block column form: 
\begin{align}
f = \begin{pmatrix} f_X \\ f_Z \end{pmatrix},
\end{align}
we define the \emph{syndrome} of the fault, $S_f \in \Field_2^m$: 
\begin{align}\label{syn:base}
S_f := \Parity \Omega_{n} f = \bigl( \Parity_X\ \Parity_Z \bigr) \Omega_{n} \begin{pmatrix} f_X \\ f_Z \end{pmatrix} = \Parity_X f_Z + \Parity_Z f_X.
\end{align}
For the remainder of the proof, we will speak about some single, constant fault $f$. \\
In the doubled code, the syndrome of a fault $f^\prime\in \Field_2^{2n}\oplus\Field_2^{2n}$ is $\mathbf{S}_{f^\prime} \in \Field_2^{2m}$.
\begin{align}
\mathbf{S}_{f^\prime} =  \begin{pmatrix}
\Parity_X & \Parity_Z & 0 & 0 \\
0 & 0 & \Parity_Z & \Parity_X 
\end{pmatrix} \Omega_{2n} \begin{pmatrix} f^\prime_X \\ f^\prime_Z \end{pmatrix} = (\Parity_X \Parity_Z)f^\prime_Z \oplus (\Parity_Z \Parity_X)f^\prime_X.
\end{align}

Since the doubled code is a CSS code, we may break the syndrome into $X$ and $Z$ components. The doubled parity check matrix also has equal size $X$ and $Z$ components, so the syndrome may be represented: 
\begin{align}
    \mathbf{S}_{f^\prime}= \mathbf{S}_{f^\prime}^X \oplus \mathbf{S}_{f^\prime}^Z && \mathbf{S}_{f^\prime}^X , \mathbf{S}_{f^\prime}^Z \in \Field_2^m.
\end{align}

These parts of the syndromes are calculated:
\begin{align}\label{syn:double}
    \mathbf{S}_{f^\prime}^X = (\Parity_X \Parity_Z)f^\prime_Z && \mathbf{S}_{f^\prime}^Z = (\Parity_Z \Parity_X)f^\prime_X .
\end{align}
We observe that, using ~\ref{syn:double} and ~\ref{syn:base}, if $f^\prime_Z = \begin{pmatrix} f_Z \\ f_X \end{pmatrix}$, 
then $\mathbf{S}_{f^\prime}^X = S_f$. Note that $w\begin{pmatrix} f_Z \\ f_X \end{pmatrix} = w\begin{pmatrix} f_X \\ f_Z \end{pmatrix}$. 
Therefore, if there is a decoder on the base such that it corrects all $\{f|w(f)\leq t\}$, there is also a decoder on the lifted code which corrects all $f^\prime_Z$ of the form $\begin{pmatrix} f_Z \\ f_X \end{pmatrix}$.\\

In particular, a $t$-fault-tolerant base code corrects $Y$-type Pauli faults of weight less than $t$. $Y$-type errors of this form satisfy $f_Z = f_X$ and $w(f)=w(f_Z)=w(f_X)\leq t$. For clarity, we will represent these symmetric faults as $\begin{pmatrix} f_Y \\ f_Y \end{pmatrix}$. This implies that the doubled code can correct faults of the form $\begin{pmatrix} f_Y \\ f_Y \end{pmatrix}$ where $w(f_Y)\leq t$, or in Pauli notation $Z_i Z_{i+n}$. These are exactly the weight two $Z$-type faults resulting from lifted gates on the doubled code. Also, a $t$-fault-tolerant base code can correct $X$ and $Z$ type faults with block forms $\begin{pmatrix} f_X \\ 0 \end{pmatrix}$ and $\begin{pmatrix} 0 \\ f_Z \end{pmatrix}$ respectively. Thus, the doubled code can correct all faults of the form $\begin{pmatrix} f_Y + f_X \\f_Y + f_Z \end{pmatrix}$ satisfying $w(f_X) + w(f_Y) + w(f_Z) \leq t$, which spans the space of up to $t$ faults on lifted gates. 

Since the doubled code is a CSS code, the proof for $X$-type faults is the same, but with the roles of $\Parity_X$ and $\Parity_Z$ reversed. \\
\end{proof}

\section{Example QASM}\label{sec:qasm-412}

This is QASM source for an example of the $[[4,1,2]]$ randomized benchmarking protocol
of length $N=2$.
We show the qubit permutations as {\tt P(...)} operations
as well as the resulting {\tt labels} in comments.

\begin{verbatim}
OPENQASM 2.0;
include "hqslib1.inc";

qreg q[4];
creg m[4];

reset q;

// prepare logical |0> state
h q[1]; h q[0]; cy q[0], q[1]; h q[2];
cy q[1], q[2]; cx q[0], q[3]; h q[0]; x q[0]; z q[0];
barrier q;

// First logical Clifford
x q[2]; x q[0];
// P(0, 2, 1, 3)
// labels = [0, 2, 1, 3]
s q[3]; s q[1]; h q[1]; h q[2]; s q[2]; s q[0]; h q[0]; s q[0];
// P(0, 3, 2, 1)
// labels = [0, 3, 1, 2]
x q[1]; x q[0]; h q[2]; h q[1]; h q[3]; h q[0]; x q[1]; z q[3]; x q[1]; x q[0];
// P(0, 2, 1, 3)
// labels = [0, 1, 3, 2]
s q[2]; s q[3]; h q[3]; h q[1]; s q[1]; s q[0]; h q[0]; s q[0];
// P(0, 3, 2, 1)
// labels = [0, 2, 3, 1]
x q[3]; x q[0]; h q[1]; h q[3]; h q[2]; h q[0]; x q[3]; z q[2]; x q[2]; z q[0];
barrier q;

// Second logical Clifford
x q[3]; x q[0];
// P(0, 2, 1, 3)
// labels = [0, 3, 2, 1]
s q[1]; s q[2]; h q[2]; h q[3]; s q[3]; s q[0]; h q[0]; s q[0];
// P(0, 3, 2, 1)
// labels = [0, 1, 2, 3]
x q[2]; x q[0]; h q[3]; h q[2]; h q[1]; 
h q[0]; x q[2]; z q[1]; x q[2]; z q[1]; x q[1]; z q[0];
barrier q;

// decode
cy q[0], q[1]; cz q[0], q[2]; h q[0];
cy q[1], q[2]; cz q[1], q[3]; h q[1];
cz q[2], q[0]; cy q[2], q[3]; h q[2];
sdg q[3]; h q[3]; s q[3];

// inverse of logical Clifford
x q[3]; z q[3]; h q[3]; sdg q[3];
x q[3]; h q[3]; sdg q[3]; h q[3];
sdg q[3];

measure q -> m;
// final qubit order: [0, 1, 2, 3]
\end{verbatim}

For the $[[8,2,2]]$ protocol, 
we show example QASM preparing the $\ket{0,1}$ state and
applying the fault-tolerant Clifford gate $CX_{0,1}\cdot SWAP$.

\begin{verbatim}
OPENQASM 2.0;
include "hqslib1.inc";

qreg q[8];
creg m[8];

reset q;

h q[7]; h q[6]; h q[5];
cx q[7], q[4]; cx q[6], q[4]; cx q[5], q[4];
cx q[4], q[3]; cx q[6], q[2]; cx q[5], q[2];
cx q[3], q[2]; cx q[5], q[1]; cx q[4], q[1];
cx q[2], q[1]; cx q[5], q[0];
x q[4]; x q[1];
barrier q;

// P(0, 2, 1, 3, 4, 6, 5, 7)
// labels = [0, 2, 1, 3, 4, 6, 5, 7]
cx q[4], q[0];
// P(0, 5, 2, 3, 4, 1, 6, 7)
// labels = [0, 6, 1, 3, 4, 2, 5, 7]
cx q[6], q[2];
cx q[1], q[5];
// P(0, 1, 6, 3, 4, 5, 2, 7)
// labels = [0, 6, 5, 3, 4, 2, 1, 7]
cx q[3], q[7];
barrier q;

x q[4]; x q[6]; x q[7]; x q[0];
measure q -> m;

// final qubit order: [0, 6, 5, 3, 4, 2, 1, 7]
\end{verbatim}

Here is a circuit for the $[[10,2,3]]$ protocol, acting on logical $\ket{+-}$:

\begin{verbatim}
OPENQASM 2.0;
include "hqslib1.inc";

qreg q[10];
creg m[10];

reset q;

h q[9]; h q[8]; h q[7]; h q[6]; h q[5]; h q[4];
cx q[7], q[3]; cx q[5], q[3]; cx q[4], q[3];
cx q[9], q[2]; cx q[6], q[2]; cx q[3], q[2];
cx q[8], q[1]; cx q[5], q[1]; cx q[2], q[1];
cx q[8], q[0]; cx q[6], q[0]; cx q[4], q[0];
barrier q;

z q[5]; z q[4]; z q[1];
barrier q;

// P(0, 1, 2, 3, 9, 5, 6, 7, 8, 4)
// labels = [0, 1, 2, 3, 9, 5, 6, 7, 8, 4]
cx q[9], q[4];
// P(0, 1, 2, 8, 4, 5, 6, 7, 3, 9)
// labels = [0, 1, 2, 8, 9, 5, 6, 7, 3, 4]
cx q[8], q[3];
// P(0, 1, 7, 3, 4, 5, 6, 2, 8, 9)
// labels = [0, 1, 7, 8, 9, 5, 6, 2, 3, 4]
cx q[7], q[2];
// P(0, 6, 2, 3, 4, 5, 1, 7, 8, 9)
// labels = [0, 6, 7, 8, 9, 5, 1, 2, 3, 4]
cx q[6], q[1];
// P(5, 1, 2, 3, 4, 0, 6, 7, 8, 9)
// labels = [5, 6, 7, 8, 9, 0, 1, 2, 3, 4]
cx q[5], q[0];
barrier q;

z q[9]; z q[8]; z q[7]; z q[6]; z q[5];
barrier q;

h q[4]; h q[3]; h q[2]; h q[1]; h q[0];
h q[9]; h q[8]; h q[7]; h q[6]; h q[5];
measure q -> m;

// final qubit order: [5, 6, 7, 8, 9, 0, 1, 2, 3, 4]
\end{verbatim}

\end{document}